\documentclass[11pt]{article}
\usepackage[T1]{fontenc}
\usepackage{lmodern}
\usepackage[margin=1in]{geometry}
\usepackage{amsmath,amssymb,amsthm,mathtools}
\usepackage{microtype}
\usepackage[numbers,sort&compress]{natbib}
\usepackage{enumitem}
\usepackage{booktabs,tabularx,array}
\usepackage[colorlinks=true,linkcolor=blue,citecolor=blue,urlcolor=blue]{hyperref}
\hypersetup{
  pdftitle={Optimal Multi-Item Auctions with I.I.D. Values},
  pdfauthor={Boyu Liu, Zhengyang Liu, and Zihe Wang}
}
\numberwithin{equation}{section}
\newtheorem{theorem}{Theorem}[section]
\newtheorem{lemma}[theorem]{Lemma}
\newtheorem{proposition}[theorem]{Proposition}
\newtheorem{corollary}[theorem]{Corollary}
\theoremstyle{definition}
\newtheorem{definition}[theorem]{Definition}
\theoremstyle{remark}
\newtheorem{remark}[theorem]{Remark}

\DeclareMathOperator{\OPT}{OPT}
\DeclareMathOperator{\Rev}{Rev}
\DeclareMathOperator{\supp}{supp}
\newcommand{\E}{\mathbb{E}}

\newcommand{\one}{\mathbf{1}}
\newcommand{\ind}[1]{\mathbf{1}_{\{#1\}}}
\newcommand{\pos}[1]{\left[#1\right]_+}
\setlist[enumerate]{itemsep=3pt,topsep=5pt}
\setlist[itemize]{itemsep=3pt,topsep=5pt}
\allowdisplaybreaks[1]
\title{Optimal Multi-Item Auctions with I.I.D.\ Values}

\author{
Boyu Liu\thanks{Renmin University of China. Email: \texttt{liuboyu03@ruc.edu.cn}}
\and
Zhengyang Liu\thanks{Beijing Institute of Technology. Email: \texttt{zhengyang@bit.edu.cn}}
\and
Zihe Wang\thanks{Renmin University of China. Email: \texttt{wang.zihe@ruc.edu.cn}}
}
\date{}

\begin{document}
\maketitle
\begin{abstract}
We study revenue-maximizing multi-item auctions under dominant-strategy incentive
compatibility (DSIC) and ex-post individual rationality. Buyers have additive
valuations, and all item values are independent draws from a common finite
distribution. For every two-point distribution and arbitrary numbers of buyers
and items, we give an explicit deterministic mechanism that is optimal among all
randomized mechanisms satisfying these requirements, together with a
polynomial-time algorithm for computing the optimal expected revenue.

For general finite-support distributions, computing the exact optimal revenue is
\(\#\mathrm P\)-hard, both for a single buyer with an arbitrary number of items
and for two items with an arbitrary number of buyers. For a single buyer, we also
establish hardness of computing an optimal mechanism.

Building on prior work, we also obtain exact and approximate algorithms under
different parameter restrictions. Exact polynomial-time optimization is possible
whenever any two of the buyer count, item count, and support size are fixed.
When only the item count is fixed, a polynomial-time approximation scheme
constructs a mechanism with expected revenue at least a \(1-\varepsilon\)
fraction of the optimum for every fixed \(0<\varepsilon<1\).
All these algorithms satisfy DSIC and ex-post individual rationality.
\end{abstract}

\section{Introduction}
\label{sec:introduction}

Auction theory studies how trading rules shape incentives and economic
outcomes in markets with private information. A central objective is to
understand which rules maximize a seller's expected revenue subject to
buyers' incentive and participation constraints. Myerson's optimal
single-item auction~\cite{myerson1981} provides a foundational answer.
Extending this understanding to several goods has led to a broad literature
on the structure, computation, and approximation of optimal mechanisms.

Substantial difficulties already arise with a single additive buyer.
Although her value for a bundle is the sum of its item values, selling the
items separately need not maximize revenue. Bundle discounts and lotteries
give the seller more ways to distinguish buyers with different values, but
changing one offer also changes which other offers the buyer will choose.
Randomization can improve revenue even for two i.i.d.\
items~\cite{pavlov2011,gk2018two}, and some independent two-item
priors require a continuum of lottery options~\cite{ddt2013transport}.
Exact analyses identify conditions for optimal
bundling~\cite{mv2006,mhj2015,hh2021}, monotone or bounded-size optimal
menus~\cite{tw2017menus}, and explicit solutions for structured two-item
families~\cite{pavlov2011,gk2018two,tsn2019uniform}.
Optimal-transport and duality methods provide revenue upper bounds that
can certify these mechanisms' optimality~\cite{ddt2013transport,ddt2017,gk2018}.

These structural results coexist with computational barriers. For one
additive buyer with independent, item-specific two-point marginals,
computing and implementing an optimal truthful, individually rational
mechanism is hard even when randomization is allowed~\cite{ddt2014}.
With the same independence and support-size restrictions, optimizing
only over deterministic IC, IR mechanisms is also
\(\#\mathrm P\)-hard~\cite{cmpy2018}. Both reductions use item-specific marginals and do not establish
hardness under a common prior. With one buyer, Bayesian and dominant-strategy
truthfulness coincide; the relevant distinctions are the prior and the
allowed mechanism class. Approximation schemes offer a complementary
goal: a single additive buyer with i.i.d.\ item values admits a QPTAS
for the optimal randomized-mechanism revenue~\cite[Theorem~4.4]{kmsw2019}.

With multiple buyers, the seller must additionally reconcile each buyer's
multidimensional incentives with the shared supply of items. Major
algorithmic progress has been made under Bayesian incentive compatibility
(BIC). For independent private values, BIC and interim IR can be expressed
using interim allocation probabilities and expected payments, averaging
over the other buyers' types. Border's theorem and constructive
reduced-form extensions~\cite{border1991,cdw2012} characterize when these
allocations have a supply-feasible ex-post implementation. Reduced-form
methods compute optimal Bayesian mechanisms and reduce revenue
maximization to welfare optimization~\cite{cdw2012,cdw2012welfare}.
For independent additive buyers, the exact algorithm in~\cite{cdw2012}
is polynomial in the total number of explicitly listed buyer types,
rather than the number of joint type profiles.

Our focus is \emph{dominant-strategy incentive compatibility} (DSIC).
Truthful reporting must be optimal for every fixed opponents' report,
not just on average when opponents report truthfully. A buyer therefore
need not know opponents' type distributions or assume that they report
truthfully. This robustness concerns incentives~\cite{bm2005}; the
seller still uses the prior to choose a revenue-maximizing mechanism.
We also require ex-post individual rationality (IR), so truthful utility
is nonnegative at every value profile.

The stronger incentive requirement prevents a direct use of the Bayesian
interim program. Border's usual inequalities ensure a supply-feasible
implementation of interim allocations, but do not ensure an implementation
with DSIC payments. Averaging over opponents loses the comparisons between
a buyer's reports that must hold for each fixed opponents' report.
Moreover, the Bayesian optimum need not be attainable under DSIC:
there is a strict revenue gap even with two items and i.i.d.\ two-point
values~\cite{yao2017}.

We ask how much a \emph{common prior} simplifies exact and approximate
DSIC optimization. We consider \(n\) additive buyers and \(m\) items,
with all \(nm\) values drawn independently from one finite distribution
\(F\) of support size \(K\). For two-point values, exact characterizations
cover two items and arbitrary buyer counts under DSIC and
BIC~\cite{yao2017}, one buyer and arbitrary item counts~\cite{cmpy2018},
and arbitrary market sizes under BIC and Bayesian IR~\cite{mt2026}.
The last result has the same i.i.d.\ value model but different incentive
and participation requirements. We give the DSIC, ex-post IR optimum
without fixing either market dimension, then study general finite priors
and approximation.

\subsection{Our results and proof ideas}

Table~\ref{tab:results} groups the guarantees by whether \(K\) is fixed
or part of the input. It distinguishes exact mechanism construction from
exact revenue computation and attributes each guarantee where it appears.
Fixed \(K\) still permits arbitrary rational support values and probabilities.

\paragraph{An explicit optimum for two-point values.}
For every two-point distribution and arbitrary \(n,m\), we give a
deterministic auction that is optimal among all randomized DSIC, ex-post
IR mechanisms. Both the mechanism and its expected revenue are computable
exactly in polynomial time (Theorem~\ref{thm:binary}).

The main choice is who should receive an \emph{all-low item}, an item
that every buyer reports valuing at the low value. Giving it to a low
report also makes that report attractive to higher types; keeping those
types truthful can require leaving them more utility. Our revenue bound
accounts for this information-rent cost. Within the bound, the cost is
smaller for a buyer who is the sole high-value reporter on more other
items. This explains the \emph{exclusive high count} used to rank buyers.
One common ranking allocates items with a high report, and the all-low
items are offered together to the first buyer when a threshold is met.
The remaining task is to make this allocation rule truthful: for each fixed
opponents' report, we describe the resulting trades as bundle offers whose
prices make truth-telling optimal. These payments attain the revenue bound.
The bound uses weighted incentive constraints and their flow interpretation
from duality~\cite{cdw2021,jpw2026,zuo2017}, with a symmetric calculation
related to~\cite{mt2026}.

\paragraph{Single-buyer hardness despite simple optimal auctions.}
For one additive buyer with a common finite marginal, computing exact
optimal revenue and constructing an optimal IC, IR mechanism are
\(\#\mathrm P\)-hard, with randomization allowed
(Theorem~\ref{thm:single-buyer-hardness}). Every reduction instance has
two explicitly given grand-bundle prices; the better price is optimal
even among all randomized mechanisms.

The difficulty is to keep one of these prices optimal after adding the
values that encode the counting problem. Merely making their revenues
hard to compare would not rule out a better menu. We start with a
two-point prior where the prices tie and quantify the revenue lost by
departing from its optimal family. Incentive constraints bound any extra
payment at a newly added type by a multiple of that same loss. Giving the
new types sufficiently small probability makes the extra gain too small
to compensate (Theorem~\ref{thm:bundle-stability}). The new values then
select one price. Every optimum either gives the entire bundle to the
buyer who values every item at the minimum, or gives her nothing.
That decision answers a counting comparison, which proves mechanism
hardness as well as revenue hardness.

\paragraph{Revenue hardness with two items.}
Exact optimal revenue over randomized DSIC, ex-post IR mechanisms is
\(\#\mathrm P\)-hard for \(m=2\) when \(n,K\) are input parameters,
even under one common marginal (Theorem~\ref{thm:ti-hardness}). Each
buyer has only \(K^2\) types. In the same input regime, optimal BIC,
interim-IR revenue is polynomial-time computable~\cite{cdw2012}.

Here the obstacle is making the buyers' truthful menus fit together
without allocating either item twice. Fixing the opponents' reports turns
one buyer's choices into a menu. We record the largest probability with
which that menu ever allocates each item, called its two \emph{capacities}.
We construct a common prior for which every menu attaining a conditional
revenue bound is determined by these capacities. Adding a rare, very high
value makes a large capacity costly: a high type can imitate the report
with that allocation probability, so truthfulness forces the seller to
leave her more utility. This lets revenue queries reveal the largest
jointly feasible total capacity, using an idea related
to~\cite[Section~6.2]{kmsw2019}. Its value contains the expected shortfall
of an integer sum below a threshold. Comparing three neighboring
thresholds isolates the probability of hitting the middle one exactly,
and hence recovers a subset count. We must characterize \emph{all} menus
attaining the bound; otherwise an additional menu could change this value.
The two-item theorem concerns revenue, not mechanism construction.

\paragraph{Exact and approximate algorithms under parameter restrictions.}
Building on finite-type linear programming and symmetry~\cite{dw2012},
we obtain an exact polynomial-time algorithm whenever any two of
\(n,m,K\) are fixed (Theorem~\ref{thm:fixed-parameters}). The algorithm
returns both an optimal randomized DSIC, ex-post IR mechanism and its
expected revenue. When the support and one market dimension are fixed,
symmetry lets us replace a long list of report profiles by counts of
identical buyer types or item report patterns, while still checking all
possible deviations.

When only \(m\) is fixed, we obtain a PTAS for arbitrary \(n,K\)
(Theorem~\ref{thm:fixed-items-ptas}). At every fixed accuracy
\(0<\varepsilon<1\), it constructs a DSIC, ex-post IR mechanism with
revenue at least a \((1-\varepsilon)\) fraction of the optimum.
There are two cases. With many buyers, competition makes separate
single-item optimal auctions nearly optimal, by the bound
of~\cite{bw2019}; the algorithm uses only the original buyers. Otherwise,
the buyer count is bounded in terms of the fixed item count and accuracy,
and the full DSIC linear program has polynomial size in \(K\).
These exact and approximate guarantees are consequences of prior work,
with explicit formulations and proofs in Section~\ref{sec:algorithms}
and the appendices.

\begin{table}[!b]
\centering
\small
\setlength{\tabcolsep}{4pt}
\renewcommand{\arraystretch}{1.13}
\begin{tabularx}{\textwidth}{@{}>{\raggedright\arraybackslash}p{0.235\textwidth}>{\raggedright\arraybackslash}X>{\raggedright\arraybackslash}X>{\raggedright\arraybackslash}p{0.215\textwidth}@{}}
\toprule
Parameter range & Exact mechanism & Exact revenue & Approximation\\
\midrule
\multicolumn{4}{@{}l}{\textit{A. Fixed support size \(K\)}}\\
\addlinespace[4pt]
\(K=2\); arbitrary \(n,m\)
& \textbf{Explicit deterministic optimum; polynomial time}
  (this work, Thm.~\ref{thm:binary})
& \textbf{Polynomial time}
  (this work, \S\ref{subsec:binary-computation})
& Exact\\
\addlinespace[5pt]
Fixed \(K\ge3\), fixed \(n\); arbitrary \(m\)
& Polynomial time (from item symmetry~\cite{dw2012}, Thm.~\ref{thm:fixed-parameters})
& Polynomial time (from item symmetry~\cite{dw2012}, Thm.~\ref{thm:fixed-parameters})
& Exact\\
\addlinespace[5pt]
Fixed \(K\ge3\), fixed \(m\); arbitrary \(n\)
& Polynomial time (from buyer symmetry~\cite{dw2012}, Thm.~\ref{thm:fixed-parameters})
& Polynomial time (from buyer symmetry~\cite{dw2012}, Thm.~\ref{thm:fixed-parameters})
& Exact\\
\midrule
\multicolumn{4}{@{}l}{\textit{B. Support size \(K\) is part of the input}}\\
\addlinespace[4pt]
\(m=1\); arbitrary \(n\)
& Polynomial time~\cite{myerson1981}
& Polynomial time~\cite{myerson1981}
& Exact\\
\addlinespace[5pt]
Fixed \(n,m\)
& Polynomial time (finite-type LP, Thm.~\ref{thm:fixed-parameters})
& Polynomial time (finite-type LP, Thm.~\ref{thm:fixed-parameters})
& Exact\\
\addlinespace[5pt]
\(n=1\); arbitrary \(m\)
& \textbf{\(\#\mathrm P\)-hard}
  (this work, Thm.~\ref{thm:single-buyer-hardness})
& \textbf{\(\#\mathrm P\)-hard}
  (this work, Thm.~\ref{thm:single-buyer-hardness})
& QPTAS~\cite[Thm.~4.4]{kmsw2019}\\
\addlinespace[5pt]
\(m=2\); arbitrary \(n\)
& Not classified here
& \textbf{\(\#\mathrm P\)-hard}
  (this work, Thm.~\ref{thm:ti-hardness})
& PTAS (from~\cite{bw2019} and LP, Thm.~\ref{thm:fixed-items-ptas})\\
\bottomrule
\end{tabularx}
\caption{Computational guarantees for a common finite rational prior, with
\(n\) buyers, \(m\) items, and \(K\) support points. The benchmark is all
randomized DSIC, ex-post IR mechanisms; computational conventions are in
Section~\ref{sec:model}. Fixed \(K\) does not fix support values or probabilities.
``Not classified here'' is not an impossibility claim. The PTAS also applies
to every other fixed number of items, with no restriction on \(n,K\);
it is not a polynomial-time guarantee when \(m\) is unrestricted. PTAS and QPTAS achieve at least \((1-\varepsilon)\) of
optimal revenue at every fixed accuracy. Hardness uses deterministic
polynomial-time Turing reductions.}
\label{tab:results}
\end{table}

The exact algorithms place both hardness results in context: for one buyer,
fixing either \(m\) or \(K\) restores tractability; for two items, fixing
either \(n\) or \(K\) does so. The PTAS covers every fixed \(m\), not just
\(m=2\). Fixed \(K\ge3\) with unrestricted \(n,m\) remains unclassified
here, and exact hardness does not exclude an FPTAS.

\subsection{Related work}

\paragraph{Exact single-buyer mechanisms.}
The two-item menu results discussed above depend on their distributional
assumptions:~\cite{gk2018two} gives explicit allocations and prices under
analytic conditions on independent densities, while~\cite{tsn2019uniform}
classifies independent uniforms on arbitrary nonnegative rectangles into
eight menu structures, each with at most four possibly randomized options.
Small-menu results such as~\cite{tw2017menus} do not imply that all
independent priors admit a finite optimal menu~\cite{ddt2013transport}.
For more items, the Straight-Jacket Auction is optimal for at most six
i.i.d.\ uniform \([0,1]\) values~\cite{gk2018}, and pure bundling is
optimal for any number of i.i.d.\ exponential values~\cite{g2015bounds}.
Other bundling results impose distributional conditions~\cite{mv2006,mhj2015,hh2021}
or sufficiently large common shifts~\cite{ddt2017,mt2026}.
In~\cite{mhj2015}, optimality among all IC, IR mechanisms requires an
extra condition beyond optimality among deterministic price schedules.
For two independent distributions with positive, nondecreasing,
continuously differentiable densities, making at least one support lower
bound sufficiently high can also make a deterministic mechanism globally
optimal~\cite{thiru2026}.
These are structured-prior results, not general exact algorithms for our
finite i.i.d.\ model.

\paragraph{DSIC and Bayesian implementation.}
Gershkov et al.~\cite{ggkms2013} establish an equivalence preserving
interim utilities and expected surplus for independent one-dimensional
private types; that result does not extend as stated to multidimensional
values. Revenue separations arise for additive buyers with i.i.d.\
two-point values~\cite{yao2017} and for i.i.d.\ buyers with two negatively
correlated item values~\cite{tw2016}. The latter allows within-buyer
correlation and characterizes a Bayesian optimum with ex-post IR.
The arbitrary-market two-point result in~\cite{mt2026} instead uses
BIC and Bayesian IR. Dual approaches also study DSIC directly, including
virtual-value characterizations~\cite{zuo2017} and computational revenue
certificates~\cite{jpw2026}. Bayesian upper bounds can still bound DSIC
revenue; what does not transfer automatically is an interim incentive
guarantee.

\paragraph{Symmetry and exact representations.}
Daskalakis and Weinberg~\cite[Remark~1 and Appendix~K]{dw2012} explicitly
include DSIC in their finite-support symmetry algorithms, retaining every
conditional incentive constraint. These methods yield the exact algorithms in
Theorem~\ref{thm:fixed-parameters} and also support the approximation
schemes in~\cite{kmsw2019}. Symmetric priors do not force symmetric
deterministic optima: for one additive buyer, symmetry holds for two
i.i.d.\ items but can fail for three~\cite{bnr2018}. This restriction
to deterministic mechanisms differs from averaging randomized mechanisms
without changing expected revenue.

\paragraph{Other computational barriers.}
All lower bounds in this paragraph have one buyer, so BIC and DSIC give
the same direct truthfulness requirement. For an additive buyer with
independent, not necessarily identical marginals, computing an optimal
\emph{grand-bundle price} is counting-hard~\cite{ddt2012}. This restricts
the seller to one bundle offer, unlike optimization over all randomized
IC, IR mechanisms~\cite{ddt2014} or over all deterministic IC, IR
mechanisms~\cite{cmpy2018}. The last two reductions use item-specific
two-point marginals, not a common prior. For a \emph{unit-demand} buyer,
deterministic item pricing is NP-hard even for i.i.d.\ values when the
support size is an input parameter~\cite{cdpsy2018}. Efficiently
implementing an optimal randomized IC, IR mechanism is also hard with
independent, nonidentical three-point marginals~\cite{cdopsy2022}.
Our single-buyer result combines additivity, a common marginal, and the
full randomized-mechanism benchmark. Its encoding is related
to~\cite[Section~6]{cdpsy2018}.

\paragraph{Approximation schemes and near-optimal menus.}
For one buyer,~\cite[Theorem~4.4]{kmsw2019} gives a QPTAS for additive
i.i.d.\ values and a PTAS for unit-demand i.i.d.\ values; these are
different valuation models. For one additive buyer with independent,
not necessarily identical marginals,~\cite{rubinstein2016} gives a PTAS
for the best \emph{deterministic partition mechanism}, despite strong
NP-hardness of exact optimization in that class. The benchmark prices
disjoint bundles independently, not all randomized IC mechanisms.
The near-optimal menu-size bound in~\cite{bgn2017} depends only on the
item count and accuracy for independent additive values; it guarantees a
small description, not by itself an efficient search algorithm.
Our fixed-item PTAS allows arbitrary buyer and support counts and follows
from~\cite{bw2019} and finite-type DSIC optimization.

\paragraph{Organization.}
Section~\ref{sec:model} specifies the model and computational tasks.
Section~\ref{sec:binary} presents the two-point optimum, and
Sections~\ref{sec:single-buyer} and~\ref{sec:two-items} present the
single-buyer and two-item hardness results. Section~\ref{sec:algorithms}
gives the exact fixed-parameter algorithms and the fixed-item PTAS.
The main text explains the mechanisms and key obstacles; all proofs are
in the appendices. The two reductions share the paired-marker lemma in
Appendix~\ref{app:hardness-common}.

\section{Model}
\label{sec:model}

\paragraph{Buyers and values.}
A seller has one copy of each of \(m\ge1\) indivisible items and faces
\(n\ge1\) buyers. For a positive integer \(r\), write
\([r]=\{1,\ldots,r\}\). Buyer \(i\)'s value vector is
\(v_i=(v_{i1},\ldots,v_{im})\), where \(v_{ij}\) is her value for item
\(j\). Values are additive: her value for a set \(A\subseteq[m]\) is
\(\sum_{j\in A}v_{ij}\).

All \(nm\) scalar values are independent draws from a common finite
distribution \(F\). Its support is \(\supp(F)=\{a_1,\ldots,a_K\}\), with
\(0<a_1<\cdots<a_K\). Let \(p_k>0\) be the probability of value \(a_k\),
so \(\sum_{k=1}^Kp_k=1\). The number \(K\) is the support size. A buyer's
possible value vectors and reports both belong to
\(T=\supp(F)^m\). A report profile is \(v=(v_1,\ldots,v_n)\in T^n\).

\paragraph{Allocation and payment rules.}
A direct mechanism \(M=(x,s)\) consists of an allocation rule \(x\) and
a payment rule \(s\). At a report profile \(v\), the number \(x_{ij}(v)\)
is the probability that buyer \(i\) receives item \(j\), and \(s_i(v)\)
is her expected payment to the seller. These expectations are over the
mechanism's internal randomization, with every report held fixed.
Supply feasibility requires
\begin{equation}\label{eq:feasibility}
 x_{ij}(v)\ge0,\qquad \sum_{i=1}^n x_{ij}(v)\le1
 \quad(i\in[n],\ j\in[m],\ v\in T^n).
\end{equation}
Because buyers have no demand limits, these constraints suffice for
allocation feasibility. A mechanism is deterministic when both its
allocations and payments are deterministic.

\paragraph{Incentives and participation.}
Buyer \(i\)'s allocation rule is the vector-valued function
\(x_i(v)=(x_{i1}(v),\ldots,x_{im}(v))\). Its \(j\)-th coordinate is
her probability of receiving item \(j\). Fix the opponents' reports
\(t=v_{-i}\). Under quasilinear preferences, a buyer of type \(v_i\)
who reports \(w_i\) obtains expected utility
\(v_i\cdot x_i(w_i,t)-s_i(w_i,t)\). Her truthful utility is
\[
 u_i(v_i,t)=v_i\cdot x_i(v_i,t)-s_i(v_i,t).
\]
Dominant-strategy incentive compatibility (DSIC) requires truthful
reporting to maximize this utility for every fixed \(t\). Ex-post
individual rationality (IR) requires truthful utility to be nonnegative
at every type profile. Formally, for every buyer \(i\), opponents' report
\(t\in T^{n-1}\), and types \(v_i,w_i\in T\),
\begin{align}
 u_i(v_i,t)&\ge v_i\cdot x_i(w_i,t)-s_i(w_i,t),
 \label{eq:dsic}\\
 u_i(v_i,t)&\ge0.
 \label{eq:ir}
\end{align}
For randomized mechanisms, both requirements use expected utility over
internal randomness, not an expectation over opponents' types. Thus, for
fixed \(t\), truthful reporting selects a utility-maximizing option from
the buyer's menu of allocations and payments. With one buyer we omit the
buyer subscript and write \(x,s,u\).

\paragraph{Revenue objective.}
For a joint type distribution \(D\), the expected revenue of \(M\) is
\[
 \Rev_D(M)=\E_{v\sim D}\!\left[\sum_{i=1}^n s_i(v)\right].
\]
Write \(F^{\otimes nm}\) for the joint distribution of the \(nm\)
independent values. Our objective is
\[
 \OPT(F;n,m)=\sup_M\Rev_{F^{\otimes nm}}(M),
\]
where \(M\) ranges over all randomized mechanisms satisfying
\eqref{eq:feasibility}--\eqref{eq:ir}. Payments in this comparison class
may be arbitrary finite real numbers; the benchmark is not restricted
to mechanisms with efficient descriptions.

\paragraph{Computational problems.}
For the algorithmic and hardness results, \(F\) is specified by its
rational support values and positive rational probabilities. Let \(B\)
be the total binary encoding length of this list. Running time is measured
in \(n+m+B\), with any fixed parameters stated in the relevant
result.\footnote{Equivalently, \(n\) and \(m\) are encoded in unary:
their encoding lengths are proportional to their values, rather than to
their logarithms. Support values and probabilities are encoded in binary.}

We distinguish computing \(\OPT(F;n,m)\) from constructing a mechanism
that attains it. Mechanism construction outputs a finite description of
allocation and payment rules. For a polynomial-time algorithm, both
constructing that description and evaluating its rational allocation
probabilities and expected payments at any supported report must take
polynomial time. This evaluation does not include averaging over the prior
to compute expected revenue.

We write \(\pos{z}=\max\{z,0\}\), use \(\one_A\) for the indicator vector
of a set \(A\), and let \(\one\) denote the all-ones vector. For a condition
\(C\), \(\ind{C}\) equals one when \(C\) holds and zero otherwise.

\section{Optimal Auctions for Two-Point Values}\label{sec:binary}

In this section, we consider arbitrary numbers of additive buyers and
items with i.i.d.\ values from a two-point distribution. We construct a
deterministic DSIC, ex-post IR auction that is optimal among all randomized
mechanisms satisfying these requirements. Proofs appear in
Appendix~\ref{app:binary}.

\subsection{The optimal mechanism}\label{subsec:binary-optimal}

Throughout this section, the common marginal is
\[
 F=p\delta_a+q\delta_b,
 \qquad 0<a<b,\quad 0<p<1,\quad q=1-p.
\]
Here \(\delta_c\) denotes the point mass at \(c\).
Write $\Delta=b-a$ and $\rho=q/p$. A buyer's type is determined by
her set of high-valued items. For $S_i\subseteq[m]$, put
\[
 v_{ij}(S_i)=a+\Delta\ind{j\in S_i},
 \qquad f(S_i)=p^{m-|S_i|}q^{|S_i|}.
\]
Every type has positive probability, and $\sum_{S\subseteq[m]}f(S)=1$.

For a profile $\boldsymbol S=(S_1,\ldots,S_n)$, define
\[
 L=[m]\setminus\bigcup_iS_i,\qquad c=|L|,
 \qquad
 E_i=S_i\setminus\bigcup_{\ell\ne i}S_\ell,
 \qquad e_i=|E_i|,
 \qquad e_* =\max_i e_i.
\]
Thus $L$ consists of the items on which every buyer reports the low value.
We call these \emph{all-low items}. The set $E_i$ consists of the items
on which buyer $i$ is the sole high-value reporter; $e_i$ is her
\emph{exclusive high count}. The sets $E_i$ are pairwise disjoint.

Selling an all-low item contributes value $a$, but also makes a low report
more attractive to higher types. The revenue bound below charges the resulting
information rent to that allocation. Its adjustment depends on the number of
all-low items and the prospective recipient's exclusive high count. For integers $c\ge1$ and $r\ge0$, let
\begin{equation}\label{eq:gamma}
 \Gamma_{c,r}(\rho)
 =\sum_{h=1}^{c}\frac{(c-1)!r!}{(c-h)!(r+h)!}\rho^h,
 \qquad
 \phi_{c,r}=a-\Delta\Gamma_{c,r}(\rho).
\end{equation}
The dimensionless coefficient $\Gamma_{c,r}$ will arise from a lower
bound on expected utility. The corresponding deduction from an all-low
item's value is $\Delta\Gamma_{c,r}$, giving the virtual value
$\phi_{c,r}$. It depends on the number of all-low items and the
recipient's exclusive high count.

For fixed $c$, $\phi_{c,r}$ increases with $r$: having more exclusive
high items lowers the information-rent charge on an all-low item. This suggests
allocating all-low items to a buyer with the largest exclusive high count.
The proof must also show that this allocation admits DSIC payments and
attains the revenue bound.

\paragraph{Example: exclusive versus total high counts.}
Consider three buyers and four items with
$F=\tfrac23\delta_1+\tfrac13\delta_4$, and the high-report sets
\[
 S_1=S_3=\{1,2\},\qquad S_2=\{3\},\qquad L=\{4\}.
\]
Buyers 1 and 3 each have two high values but no exclusive high item.
Buyer 2 has only one high value and is its sole high reporter.
Hence $c=1$, $e_1=e_3=0$, and $e_2=1$. Since $\rho=1/2$ and
$\Gamma_{1,r}(\rho)=\rho/(r+1)$,
\[
 \phi_{1,0}=1-3\cdot\tfrac12=-\tfrac12,\qquad
 \phi_{1,1}=1-3\cdot\tfrac14=\tfrac14.
\]
The auction constructed below therefore gives the all-low item to buyer 2,
not to a buyer with more high values in total. The score $1/4$ is an
adjusted contribution to the revenue bound, not the price charged for
item 4. This distinction is what the exclusive-count statistic captures.

\paragraph{Sale thresholds and buyer order.}
For \(0\le d\le m\), the integer \(K_d\) is the smallest exclusive high
count that justifies selling the remaining all-low items in that dimension.
Set \(K_0=0\) and, for \(d\ge1\), define
\begin{equation}\label{eq:optimal-thresholds}
 K_d=\min\left(
 \{k\in\{0,\ldots,d-1\}:a-\Delta\Gamma_{d-k,k}(\rho)>0\}
 \cup\{d\}\right).
\end{equation}
Before reports are submitted, fix a priority order and label the buyers
so that smaller indices have priority. At each profile, order buyers by
decreasing $e_i$, breaking ties by this priority order, and let $w$ be
the first buyer.

\paragraph{Allocation.}
Give each item outside $L$ to the first high-value reporter in this
order. If $e_w\ge K_{c+e_w}$, give all of $L$ to $w$; otherwise retain
$L$. Denote buyer $i$'s allocated set by $A_i$. Each item has a unique
recipient whenever it is allocated, and
\begin{equation}\label{eq:allocation-containment}
 S_w\subseteq A_w,\qquad A_i\subseteq S_i\quad(i\ne w).
\end{equation}
Using the same buyer order for every item ensures that the first buyer
receives every item she reports high.

\paragraph{Payments.}
Define
\begin{equation}\label{eq:binary-payments}
 \begin{aligned}
 B_i&=\max\bigl(\{0\}\cup
       \{e_\ell+\ind{\ell<i}:\ell\ne i\}\bigr),\\
 \Theta_i&=\max\{K_{c+e_i},B_i\},\\
 s_i&=\sum_{j\in A_i}v_{ij}(S_i)-\Delta\pos{e_i-\Theta_i}.
 \end{aligned}
\end{equation}
The integer $B_i$ is the count needed to come first: buyer $i$ must
strictly exceed each earlier buyer's count and weakly exceed each later
buyer's count. Hence
\begin{equation}\label{eq:first-buyer}
 e_i\ge B_i\quad\Longleftrightarrow\quad i=w.
\end{equation}
The truthful utility is $\Delta\pos{e_i-\Theta_i}$. In particular,
a buyer other than $w$ has zero utility and pays $b|A_i|$.

The allocation uses the largest adjusted value, but optimality also
requires truthful payments attaining that adjustment. The conditional
bundle interpretation below explains why the utility deduction in
\eqref{eq:binary-payments} has precisely the required form.

\begin{theorem}\label{thm:binary}
For every $n,m\ge1$, the auction above is deterministic, DSIC, and
ex-post IR, and its expected revenue is
\begin{equation}\label{eq:binary-revenue}
 \OPT(F;n,m)=mb(1-p^n)
 +\E_{\boldsymbol S}\left[
 \begin{cases}
 c\pos{\phi_{c,e_*}},&c\ge1,\\
 0,&c=0.
 \end{cases}\right].
\end{equation}
It has nonnegative payments and is optimal among all randomized DSIC,
ex-post IR mechanisms. For rational input, the auction and its optimal
expected revenue can be computed in polynomial time.
\end{theorem}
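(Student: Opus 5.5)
The proof splits into two halves: the mechanism attains the value in \eqref{eq:binary-revenue}, and no randomized DSIC, ex-post IR mechanism can exceed it. Polynomial-time computability is then a separate counting argument.

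\textbf{Feasibility, DSIC and IR of the auction.} We fix a buyer \(i\) and opponents' reports \(t\), and describe buyer \(i\)'s options as a menu indexed by her report \(S_i\). Against fixed \(t\), the quantities \(B_i\) and the set of items that opponents report high are fixed. Reporting \(S_i\) determines three things: the exclusive count \(e_i\), the all-low count \(c\), and hence \(c+e_i\), which equals the number of items no opponent reports high and is therefore independent of \(S_i\). So \(\Theta_i\) is constant in the report. This is the key observation.

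If \(e_i<B_i\), buyer \(i\) receives exactly her exclusive items \(E_i\), plus any non-exclusive high items where she comes first among the high reporters. Each such item is priced at \(b\), so she pays \(b\) per item received.

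If \(e_i\ge B_i\), she becomes \(w\). She then receives all her high items, and also \(L\) when \(e_i\ge K_{c+e_i}\). Her payment is \(b|S_i|+a|L|\) minus \(\Delta\pos{e_i-\Theta_i}\).

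So the menu is a family of bundles with prices of the form ``\(b\) per contested or exclusive item, \(a\) per all-low item, discounted by \(\Delta\) times the number of exclusive items beyond the threshold \(\Theta_i\)''. I will verify DSIC by checking that a type \(S\) reporting \(S'\) obtains at most \(\Delta\pos{|S\cap E'|-\Theta_i}\)-type utility. Misreporting low on a true high item forfeits only zero-surplus items or lowers the exclusive count. Misreporting high on a true low item yields a bundle item of value \(a\) at price \(b\), or at best converts a low item into a discount of \(\Delta\), which exactly cancels the overpayment \(\Delta\). IR and nonnegative payments follow because \(e_i\le|A_i|\) and values are at least \(a\).

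\textbf{Upper bound.} I will use Lagrangian duality with a flow on the type lattice for each fixed \(t\), following \cite{cdw2021,zuo2017}. The flow sends weight from \(S\cup\{j\}\) down to \(S\), and weights \(\lambda\) are chosen so that the virtual value of a high item is \(b\) for every buyer. For each item in \(L\), the virtual value is \(a-\Delta\Gamma\), where \(\Gamma\) comes from a path-counting weight: flow enters through items that opponents do not report high, which gives ratios of the form \((c-1)!r!/((c-h)!(r+h)!)\rho^h\).

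The main obstacle is choosing the flow so that, simultaneously, (i) the resulting virtual values depend only on \((c,e_i)\), and (ii) the expected virtual welfare bound is attained pointwise by the allocation. I would first try the symmetric uniform-split flow over orderings of the \(c+e_i\) free items, as in \cite{mt2026}, and verify the combinatorial identity defining \(\Gamma_{c,r}\) by induction on \(c\).

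With this flow, the bound becomes a pointwise maximization. Items with a high report are worth \(b\) to a high reporter. For all-low items, allocating all of \(L\) to a single buyer is optimal by linearity, and monotonicity of \(\phi_{c,r}\) in \(r\) selects the buyer \(w\) with \(e_*\). This yields \eqref{eq:binary-revenue}. To match the mechanism's threshold rule \(e_w\ge K_{c+e_w}\) with the condition \(\phi_{c,e_w}>0\), I need that \(a-\Delta\Gamma_{d-k,k}\) is monotone in \(k\) for fixed \(d\); this is the content of the definition \eqref{eq:optimal-thresholds}. Complementary slackness then checks that the mechanism's binding incentive constraints lie on the support of the flow.

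\textbf{Computation.} The thresholds \(K_d\) are computed directly from their definition for all \(d\le m\). For the revenue, I will group profiles by the numbers of high reporters per item. The joint law of \((c,e_*)\) is computed by dynamic programming: the exclusive counts are multinomial over items with exactly one high reporter, and \(\max_i e_i\) is obtained through counts of buyers with \(e_i\le r\) via a polynomial DP over buyers. All arithmetic involves rationals of polynomial bit length.
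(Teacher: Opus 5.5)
\paragraph{The gap: the threshold depends on your own report.}
Your flow-based upper bound matches the paper: a downward flow for each fixed opponents' profile that deletes the coordinates no opponent reports high uniformly, giving all-low virtual values $a-\Delta\Gamma_{c,e_i}(\rho)$. Your counting plan also matches. The gap is in the truthfulness half, exactly at your ``key observation''. $B_i$ is \emph{not} fixed once the opponents' reports are fixed. It is built from the opponents' exclusive counts $e_\ell=|S_\ell\setminus\bigcup_{k\ne\ell}S_k|$, and that union contains buyer $i$'s own report.

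Write $H=\bigcup_{\ell\ne i}S_\ell$, $D=[m]\setminus H$, $Q=S_i\cap H$, and let $C_\ell$ be the items on which $\ell$ is the only high-reporting opponent. Then $e_\ell=|C_\ell\setminus Q|$, so the threshold is a function $\Theta(Q)$. Only two things are report-invariant: $K_{c+e_i}=K_{|D|}$, and the independence of the threshold from $S_i\cap D$.

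For example, take $n=m=2$, $S_1=\{1\}$, and buyer $2$'s true set $\{2\}$. Truthful reporting gives $\Theta_2=2$, and buyer $2$ is not first. Reporting $\{1,2\}$ gives $\Theta_2=1$ and makes her first. So a buyer can falsely report high on an item in a rival's $C_\ell$. This weakens competition, can change whether she is the first buyer, and lowers her bundle price. Your check against a single $\Theta_i$ never examines this deviation.

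\paragraph{The missing ingredient.}
You need the threshold-change bound $\Theta(Q)\le\Theta(Q')+|Q'\setminus Q|$. It holds because the sets $C_\ell$ are disjoint, so each overreported $H$-item lowers at most one $e_\ell$, and only by one. Combine it with the fact that the common buyer order forces the first buyer to take, at price $b$, every item she reports high.

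Now take a true type $Q_0\mathbin{\dot\cup}T_0$ with $r_0=|T_0|$. Any report with $H$-part $Q'$ that yields the bundle $D\cup Q'$ at price $ad+b|Q'|+\Delta\Theta(Q')$ gives her utility
$\Delta(r_0-\Theta(Q')-|Q'\setminus Q_0|)\le\Delta(r_0-\Theta(Q_0))\le\Delta\pos{r_0-\Theta(Q_0)}$.
Per-item transactions give at most $0$. This covers simultaneous over- and underreports.

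\paragraph{Consequence for the flow.}
The same $Q$-dependence matters for attainment. At nodes with $S\cap D\ne\varnothing$, your flow must delete only $D$-coordinates. An $H$-edge there can have utility difference $\Delta$ while the lower endpoint receives none of that item, so the constraint is slack. $H$-edges should be used only at nodes $S\subseteq H$, where both endpoints have utility zero.

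\paragraph{Two smaller points.}
Complementary slackness requires every positive-flow constraint to be tight; you state the converse, that binding constraints lie in the flow's support. Also, the monotonicity of $a-\Delta\Gamma_{d-k,k}(\rho)$ in $k$ does not follow from the definition of $K_d$. It needs the coefficientwise inequality $\Gamma_{c-1,r+1}(\rho)<\Gamma_{c,r}(\rho)$.
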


The first term is the revenue from selling each item at its high value.
The second gives the entire gain available from allocating all-low items,
net of the information-rent adjustment. It is an expected virtual-surplus
contribution, not an extra payment collected at each report profile.
For two items, the formula agrees with Yao's DSIC revenue
formula~\cite{yao2017}. For one buyer, the mechanism reduces to the
discounted item pricing of Chen et al.~\cite{cmpy2018}.
Section~\ref{subsec:binary-computation} explains the exact computation;
Appendix~\ref{app:binary} gives the detailed coefficient formula and
supplementary parameter and boundary analyses.

The proof uses the flow interpretation of mechanism-design
duality~\cite{cdw2021,zuo2017,jpw2026}; a related symmetric binomial flow
appears in the Bayesian analysis of Mertzanidis and
Terzoglou~\cite{mt2026}. Here we choose a flow separately for every fixed
profile of opponents' reports, as required by DSIC.

\subsection{A revenue bound from downward incentive constraints}
\label{subsec:binary-flow}

We derive a revenue upper bound that holds for every randomized DSIC,
ex-post IR mechanism, not just for the auction we have described. Fix buyer $i$ and her opponents' reports $t=(S_\ell)_{\ell\ne i}$. Set
\[
 H=\bigcup_{\ell\ne i}S_\ell,\qquad
 D=[m]\setminus H,\qquad d=|D|.
\]
Items in $H$ have a high-valued opponent; items in $D$ have none. Write
the buyer's report uniquely as
\[
 S=Q\mathbin{\dot\cup}T,
 \qquad Q=S\cap H,\quad T=S\cap D,\quad r=|T|.
\]
Then $E_i=T$, $L=D\setminus T$, and
\[
 e_i=r,\qquad c=d-r,\qquad c+e_i=d.
\]
Thus $d$ remains fixed as the buyer changes her report, even though $c$
and $e_i$ may change.

Consider any feasible randomized DSIC, ex-post IR mechanism. For
$j\notin S$, the incentive constraint for type $S\cup\{j\}$ reporting
$S$, with opponents' reports $t$ fixed, gives
\[
 u_i(S\cup\{j\},t)-u_i(S,t)\ge\Delta x_{ij}(S,t).
\]
Revenue is allocated value minus utility, so we seek a weighted sum of
these constraints that bounds expected utility from below. In such a
sum, the coefficient of $u_i(S,t)$ is the weight leaving $S$ minus the
weight entering $S$. Making this difference equal to $f(S)$ at every
nonempty $S$ will recover expected utility, apart from the lowest type's
utility, which IR bounds from below by zero.
Assign a nonnegative weight $\lambda_i(S,S\setminus\{j\};t)$ to each
downward edge of the Boolean lattice. Regard $f(S)$ as a source mass at
node $S$, and let the empty set absorb the flow. At every nonempty node
require
\begin{equation}\label{eq:flow-conservation}
 \sum_{j\in S}\lambda_i(S,S\setminus\{j\};t)
 -\sum_{j\notin S}\lambda_i(S\cup\{j\},S;t)=f(S).
\end{equation}
Define the associated virtual values by
\begin{equation}\label{eq:conditional-virtual-values}
 \Phi_{ij}(S,t)=
 \begin{cases}
 b,&j\in S,\\
 a-\Delta\dfrac{\lambda_i(S\cup\{j\},S;t)}{f(S)},&j\notin S.
 \end{cases}
\end{equation}
At a full profile $\boldsymbol S$, the notation
$\Phi_{ij}(\boldsymbol S)$ means $\Phi_{ij}(S_i,S_{-i})$.

\begin{lemma}\label{lem:binary-dual}
For any nonnegative flows satisfying \eqref{eq:flow-conservation}, every
feasible DSIC, ex-post IR mechanism $M$ satisfies
\begin{equation}\label{eq:flow-bound}
 \Rev_{F^{\otimes nm}}(M)
 \le\E_{\boldsymbol S}\sum_{j=1}^m
 \max\{0,\Phi_{1j}(\boldsymbol S),\ldots,\Phi_{nj}(\boldsymbol S)\}.
\end{equation}
Equality holds if, for every fixed buyer and opponents' profile, the
lowest type has zero utility and every incentive constraint carrying
positive flow is tight, and if the allocation realizes the maximum
virtual value for each item.
\end{lemma}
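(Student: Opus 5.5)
We fix a buyer $i$ and an opponents' profile $t$, prove a conditional bound on buyer $i$'s expected payment (averaging only over her own type $S\sim f$), and then take expectations over $t$ and sum over buyers. Since $\Rev = \E\sum_i s_i$ and $s_i(S,t)=v_i(S)\cdot x_i(S,t)-u_i(S,t)$, the conditional expected payment is
\[
 \sum_{S} f(S)\,v_i(S)\cdot x_i(S,t)\;-\;\sum_S f(S)\,u_i(S,t).
\]
The task is to bound the second sum from below by a linear functional of the allocation.

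\textbf{Step 1: the weighted constraint sum.} Multiply each downward constraint
\[
 u_i(S\cup\{j\},t)-u_i(S,t)\ge\Delta x_{ij}(S,t)
\]
by $\lambda_i(S\cup\{j\},S;t)\ge0$ and add over all edges. On the left, the coefficient of $u_i(S,t)$ for $S\neq\emptyset$ is the outflow minus the inflow at $S$. By \eqref{eq:flow-conservation} this equals $f(S)$.

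The coefficient of $u_i(\emptyset,t)$ is minus the total inflow into $\emptyset$. Summing conservation over all nonempty nodes, this inflow equals $\sum_{S\ne\emptyset}f(S)=1-f(\emptyset)$. Hence the left side is
\[
 \sum_{S}f(S)\,u_i(S,t)-u_i(\emptyset,t),
\]
and the inequality reads
\[
 \sum_S f(S)\,u_i(S,t)\ \ge\ u_i(\emptyset,t)+\Delta\sum_S\sum_{j\notin S}\lambda_i(S\cup\{j\},S;t)\,x_{ij}(S,t).
\]
Ex-post IR gives $u_i(\emptyset,t)\ge0$.

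\textbf{Step 2: reading off virtual values.} Substitute this into the payment expression. For $j\in S$ the allocation coefficient is $f(S)b$. For $j\notin S$ it is $f(S)a-\Delta\lambda_i(S\cup\{j\},S;t)$, which equals $f(S)\Phi_{ij}(S,t)$ by \eqref{eq:conditional-virtual-values}. So the conditional expected payment is at most $\sum_S f(S)\sum_j\Phi_{ij}(S,t)\,x_{ij}(S,t)$.

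Now take expectation over $t$. Because types are independent and the profile distribution factorizes as $f(S_i)\times$ (law of $S_{-i}$), this gives
\[
 \E_{\boldsymbol S}\sum_j \Phi_{ij}(\boldsymbol S)\,x_{ij}(\boldsymbol S).
\]

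\textbf{Step 3: pointwise maximization.} Sum over $i$. For each profile and item, supply feasibility \eqref{eq:feasibility} says the $x_{ij}$ are nonnegative with $\sum_i x_{ij}\le1$. Hence $\sum_i\Phi_{ij}x_{ij}\le\max\{0,\Phi_{1j},\dots,\Phi_{nj}\}$, which yields \eqref{eq:flow-bound}. Randomization causes no issue, since everything is linear in expected allocations and payments.

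\textbf{Step 4: equality conditions.} Each inequality used becomes an equality under the stated hypotheses:
\begin{itemize}
 \item IR at the lowest type is tight when $u_i(\emptyset,t)=0$.
 \item The weighted sum in Step 1 is exact when every constraint with $\lambda>0$ holds with equality.
 \item The pointwise bound in Step 3 is tight when each item goes, with total probability one, to a maximizer of $\Phi_{ij}$ whenever the maximum is positive. It is also tight when the item is withheld or given only to buyers with $\Phi_{ij}=0$ when the maximum is nonpositive.
\end{itemize}

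There is no real obstacle. The only care needed is the bookkeeping at the sink in Step 1: the inflow to $\emptyset$ equals $1-f(\emptyset)$, so after applying IR the empty type's utility contributes with the correct sign.
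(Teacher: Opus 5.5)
Your proposal is correct and follows the paper's proof essentially step for step. Both arguments use the same sequence: conservation gives the telescoping identity with $I(\varnothing)=1-f(\varnothing)$, lowest-type IR drops $u_i(\varnothing,t)$, the virtual values $\Phi_{ij}$ are read off after subtracting from allocated value, independence averages over opponents, and feasibility gives the itemwise bound $\sum_i\Phi_{ij}x_{ij}\le\max\{0,\Phi_{1j},\ldots,\Phi_{nj}\}$. Your equality discussion matches the paper's, only more explicit about the case where the maximum virtual value is nonpositive.
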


The weighted-sum proof is in Appendix~\ref{app:binary-flow-proof}.
To specialize the bound, first delete the buyer's high coordinates in
\(D\), which no opponent values highly; only then delete coordinates
in \(H\). This separates the rent charge on all-low items from the
competition on the remaining items. Send each node's source mass plus
incoming flow according to
\begin{equation}\label{eq:chosen-flow}
 \begin{cases}
 \text{split equally among edges deleting an element of }S\cap D,
       &S\cap D\ne\varnothing,\\
 \text{split equally among all downward edges},
       &\varnothing\ne S\subseteq H.
 \end{cases}
\end{equation}
The resulting scores depend on the report only through the all-low and
exclusive-high counts. Increasing the exclusive count lowers the rent
charge; at fixed \(d\), this is also what makes a single sale threshold
sufficient.

\begin{lemma}\label{lem:binary-scores}
The rule in \eqref{eq:chosen-flow} defines a nonnegative flow satisfying
\eqref{eq:flow-conservation}. On an all-low item,
\begin{equation}\label{eq:all-low-virtual}
 \Phi_{ij}(\boldsymbol S)=a-\Delta\Gamma_{c,e_i}(\rho)
 \qquad(j\in L).
\end{equation}
On every other item, the maximum virtual value is \(b\). Moreover,
\begin{equation}\label{eq:gamma-monotonicity}
 \begin{aligned}
 \Gamma_{c,r+1}(\rho)&<\Gamma_{c,r}(\rho)
       &&(c\ge1,\ r\ge0),\\
 \Gamma_{c-1,r+1}(\rho)&<\Gamma_{c,r}(\rho)
       &&(c\ge2,\ r\ge0).
 \end{aligned}
\end{equation}
Consequently the thresholds in \eqref{eq:optimal-thresholds} satisfy,
for \(0\le k<d\),
\begin{equation}\label{eq:threshold-sign}
 k\ge K_d\quad\Longleftrightarrow\quad
 a-\Delta\Gamma_{d-k,k}(\rho)>0.
\end{equation}
\end{lemma}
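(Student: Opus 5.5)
The proof has three parts. First, solve the flow \eqref{eq:chosen-flow} explicitly on a symmetric subcube and read off the all-low virtual values. Second, check the non-all-low items. Third, compare \eqref{eq:gamma} term by term to get \eqref{eq:gamma-monotonicity} and then \eqref{eq:threshold-sign}.

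\emph{Well-definedness and nonnegativity.} The rule \eqref{eq:chosen-flow} assigns each node's outflow as its source mass plus its inflow. All inflow comes from strictly larger sets, so processing nodes in order of decreasing size defines every weight. Every weight is nonnegative, being an equal split of a nonnegative quantity. Conservation \eqref{eq:flow-conservation} then holds by construction at every nonempty node, because each such node has at least one permitted downward edge.

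\emph{Explicit flows.} Fix $Q\subseteq H$ and consider nodes $S=Q\cup T$ with $T\subseteq D$ and $T\ne\varnothing$. Such a node sends flow only along edges deleting an element of $T$, so $Q$ never changes along these edges. Its inflow can come only from a node $S\cup\{j\}$.
\begin{itemize}
\item If $j\in H\setminus Q$, the node $S\cup\{j\}$ still meets $D$, so it deletes only elements of $D$ and sends nothing to $S$.
\item If $j\in D\setminus T$, the node $S\cup\{j\}$ does send flow to $S$.
\end{itemize}
Hence the flow lives on the $D$-subcube above $Q$. By symmetry, the total outflow $g_r$ of a node depends only on $r=|T|$.

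Write $g_r=f(S)h_r$ and use $f(S\cup\{j\})/f(S)=\rho$. The recursion is
\[
h_d=1,\qquad h_r=1+\frac{(d-r)\rho\,h_{r+1}}{r+1}.
\]
For $j\in L=D\setminus T$, the edge weight is $\lambda_i(S\cup\{j\},S;t)=g_{r+1}/(r+1)$. Therefore
\[
\frac{\lambda_i(S\cup\{j\},S;t)}{f(S)}=\frac{\rho\,h_{r+1}}{r+1}.
\]
Unrolling the recursion with $c=d-r$ gives
\[
\frac{\rho}{r+1}+\frac{(c-1)\rho^2}{(r+1)(r+2)}+\cdots,
\]
which is exactly $\Gamma_{c,r}(\rho)$ in \eqref{eq:gamma}. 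A short induction on $c$ makes this rigorous, and it yields \eqref{eq:all-low-virtual} because $e_i=r$.

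The remaining case is $r=0$, where $S\subseteq H$. Then $L=D$ and $c=d$. The only inflow into $S$ from $S\cup\{j\}$ with $j\in D$ is that node's full outflow $g_1$. So $\lambda/f(S)=\rho h_1=\Gamma_{d,0}$, and the same formula holds. This computation is the main bookkeeping step; the other steps are routine.

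\emph{Items outside $L$.} Every virtual value is at most $b$, since $\Phi_{ij}$ is either $b$ or $a$ minus a nonnegative term. Take an item $j\notin L$. Some buyer $\ell$ reports $j$ high, and then $\Phi_{\ell j}=b$ by \eqref{eq:conditional-virtual-values}. Hence the maximum virtual value on $j$ is $b$. Conversely, if buyer $i$ has $j\notin S_i$ and $j\in D$, then no buyer reports $j$ high, so $j\in L$. This confirms that the all-low formula covers exactly the items in $L$.

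\emph{Monotonicity.} Both inequalities in \eqref{eq:gamma-monotonicity} follow from termwise comparison of the series in \eqref{eq:gamma}.
\begin{itemize}
\item For the first inequality, fix $h$. The coefficient $\frac{(c-1)!\,r!}{(c-h)!\,(r+h)!}$ is multiplied by $\frac{r+1}{r+1+h}<1$ when $r$ increases to $r+1$. So $\Gamma_{c,r+1}<\Gamma_{c,r}$.
\item For the second inequality, compare the $h$-th coefficient of $\Gamma_{c-1,r+1}$ with that of $\Gamma_{c,r}$. Their ratio is
\[
\frac{(c-h)(r+1)}{(c-1)(r+1+h)}<1\qquad(h\le c-1).
\]
Moreover $\Gamma_{c-1,r+1}$ omits the positive term $h=c$. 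So $\Gamma_{c-1,r+1}<\Gamma_{c,r}$.
\end{itemize}

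\emph{Threshold sign.} Fix $d$. The second inequality of \eqref{eq:gamma-monotonicity} with $c=d-k$ shows that $a-\Delta\Gamma_{d-k,k}(\rho)$ is strictly increasing in $k$ on $0\le k<d$. So the set of $k$ for which it is positive is upward closed. Its minimum is $K_d$ as defined in \eqref{eq:optimal-thresholds}, with the convention $K_d=d$ if the set is empty. This gives \eqref{eq:threshold-sign}.
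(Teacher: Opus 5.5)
Your proposal is correct and follows the paper's proof. You restrict the flow to the $D$-subcube above a fixed $Q$ (checking that no flow enters it through $H$-coordinates), use symmetry to get the edge weights, and divide by the source mass to obtain $\Gamma_{c,r}$; the monotonicity and threshold claims then come from termwise coefficient ratios. The one small difference is how you compute the edge weights: you use a level-by-level recursion and unroll it by induction, whereas the paper gets them in closed form by conserving flow across the cut below the layers $|T|\ge r+1$, dividing the total source mass above the cut by the $(d-r)\binom{d}{r}$ edges.
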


Appendix~\ref{app:binary-scores-proof} gives the layer-counting and
coefficient calculations. The strict sale convention retains all-low
items at zero virtual value; this does not change the optimal revenue.

\subsection{Truthful implementation by conditional bundle offers}
\label{subsec:binary-mechanism}

The revenue bound explains the recipient of each item, but pointwise
maximization alone does not prove truthfulness. We now interpret the
payments in \eqref{eq:binary-payments} as conditional bundle offers. With
opponents fixed, a buyer with \(r\) high values in \(D\) values that bundle
at \(ad+\Delta r\); a price \(ad+\Delta k\) leaves utility
\(\Delta(r-k)\). The offer threshold combines the sale condition with the
exclusive counts of the other buyers.

The truthfulness proof applies to the same allocation and payment rules
with any integer thresholds satisfying
\begin{equation}\label{eq:arbitrary-thresholds}
 K_0=0,\qquad 0\le K_d\le d\quad(1\le d\le m).
\end{equation}
We temporarily use this larger family and return to
\eqref{eq:optimal-thresholds} when proving optimality.

Return to fixed $i,t$. For each opponent $\ell$, put
\[
 C_\ell=S_\ell\setminus\bigcup_{k\ne i,\ell}S_k.
\]
Her exclusive high count is $e_\ell(Q)=|C_\ell\setminus Q|$, since
$S_\ell\cap T=\varnothing$. Thus the threshold facing buyer $i$ is
\begin{equation}\label{eq:theta-Q}
 \Theta(Q)=\max\bigl(\{K_d,0\}\cup
 \{e_\ell(Q)+\ind{\ell<i}:\ell\ne i\}\bigr).
\end{equation}
For fixed $Q$, this threshold is independent of $T$.

The useful restriction is on the change in the threshold: a false high
report on an item in \(H\) can weaken competition, but only by an amount
paid for through that reported item. The next lemma describes the two
available transaction forms and quantifies this restriction.

\begin{lemma}\label{lem:binary-menu}
Fix the opponents' reports. A report $Q\mathbin{\dot\cup}T$, with
$r=|T|$, produces one of the following transactions:
\begin{enumerate}
 \item If $r<\Theta(Q)$, the buyer receives a subset of her reported
 high-valued items and pays $b$ per item.
 \item If $r\ge\Theta(Q)$, she receives $D\cup Q$ and pays
 \begin{equation}\label{eq:conditional-bundle-price}
 ad+b|Q|+\Delta\Theta(Q).
 \end{equation}
\end{enumerate}
For all $Q,Q'\subseteq H$,
\begin{equation}\label{eq:threshold-change}
 \Theta(Q)\le\Theta(Q')+|Q'\setminus Q|.
\end{equation}
\end{lemma}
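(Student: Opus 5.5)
We verify the lemma by unwinding the allocation and payment rules at a fixed buyer \(i\) and fixed opponents' reports \(t\), using the decomposition \(S=Q\mathbin{\dot\cup}T\). The first step records the counts at the profile. Buyer \(i\)'s exclusive count is \(e_i=r\). Each opponent's exclusive count is \(e_\ell(Q)=|C_\ell\setminus Q|\): an item of \(S_\ell\) is exclusive to \(\ell\) exactly when it lies in \(C_\ell\) and buyer \(i\) does not report it high, and reports in \(T\subseteq D\) never meet \(S_\ell\). Also \(c=d-r\) and \(c+e_i=d\). Hence \(B_i\) from \eqref{eq:binary-payments} equals \(\max(\{0\}\cup\{e_\ell(Q)+\ind{\ell<i}\})\), and \(\Theta_i=\max\{K_d,B_i\}=\Theta(Q)\) as in \eqref{eq:theta-Q}.

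The second step splits into two cases. If \(r<\Theta(Q)\), then either \(r<B_i\) or \(r<K_d\). In the first case \eqref{eq:first-buyer} gives \(i\neq w\), so \(A_i\subseteq S\) by \eqref{eq:allocation-containment}. The payment is \(b|A_i|\), since every allocated item is reported high and \(\pos{r-\Theta}=0\). In the second case \(i=w\) but \(e_w<K_{c+e_w}\), so \(L\) is retained. Then \(A_i\) consists exactly of the items outside \(L\) for which \(i\) is first among high reporters, namely \(S\), and the payment is again \(b|A_i|\).

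If \(r\ge\Theta(Q)\), then \(r\ge B_i\), so \(i=w\), and \(r\ge K_d=K_{c+e_w}\), so \(L\) goes to \(i\). Thus \(A_i=S\cup L=Q\cup D\), because \(L=D\setminus T\). The reported value of this set is \(b|Q|+br+a(d-r)=ad+b|Q|+\Delta r\). Subtracting \(\Delta(r-\Theta(Q))\) gives the price \eqref{eq:conditional-bundle-price}.

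The third step proves \eqref{eq:threshold-change}, which I expect to be the only step requiring any argument. Fix \(Q,Q'\subseteq H\). For each opponent \(\ell\),
\[
e_\ell(Q)=|C_\ell\setminus Q|\le|C_\ell\setminus Q'|+|C_\ell\cap(Q'\setminus Q)|\le e_\ell(Q')+|Q'\setminus Q|.
\]
The remaining terms in the maximum are \(K_d\) and \(0\); they do not depend on \(Q\) and are bounded by \(\Theta(Q')\). Every term in the maximum defining \(\Theta(Q)\) is therefore at most \(\Theta(Q')+|Q'\setminus Q|\), and the inequality follows. The main care needed is in the earlier case split: one must check that the tie-breaking through \(\ind{\ell<i}\) makes \eqref{eq:first-buyer} align exactly with the condition \(r\ge B_i\).
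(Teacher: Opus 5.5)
Your proof is correct and follows essentially the same route as the paper. You identify \(\Theta_i=\Theta(Q)\), use \eqref{eq:first-buyer} together with the threshold \(K_d\) to separate the per-item and bundle transactions, and prove \eqref{eq:threshold-change} opponent by opponent from the decomposition of \(C_\ell\setminus Q\). Your version is slightly more explicit than the paper's; just read your ``second case'' as \(r\ge B_i\) and \(r<K_d\), so that \(i=w\) is justified.
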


For truthfulness, a per-item transaction cannot give positive surplus
on an overreported item. For a bundle transaction, the threshold-change
bound prevents the buyer from gaining more through weaker competition
than she loses on those overreports. This reasoning handles simultaneous
upward and downward changes, rather than only adjacent deviations.

\begin{proposition}\label{prop:binary-IC}
For every choice of thresholds in \eqref{eq:arbitrary-thresholds}, the
auction is deterministic, feasible, DSIC, and ex-post IR, with
nonnegative payments.
\end{proposition}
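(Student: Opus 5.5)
The plan is to reduce everything to the two transaction forms of Lemma~\ref{lem:binary-menu}: fix the opponents' reports and compare the truthful utility with the utility of an arbitrary misreport. Determinism is immediate because $x$ and $s$ are indicator allocations and deterministic payments. Feasibility follows from the allocation rule. Each item outside $L$ goes to exactly one buyer, the first high reporter in the common order. The set $L$ is either given entirely to $w$ or retained. So $\sum_i x_{ij}\le 1$, with entries in $\{0,1\}$.

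Next I identify the truthful utility, using the notation $S=Q\mathbin{\dot\cup}T$, $r=|T|$, with $D,H,d$ fixed. By \eqref{eq:first-buyer}, buyer $i$ is $w$ iff $r=e_i\ge B_i$. Since $c+e_i=d$, she receives $L$ iff additionally $r\ge K_d$. Thus $r\ge\Theta(Q)$ is exactly the bundle case of Lemma~\ref{lem:binary-menu}.
\begin{itemize}
\item In the per-item case, a truthful buyer receives only truly high items and pays $b$ for each, so her utility is $0$. This matches $\Delta\pos{e_i-\Theta_i}=0$, because $r<\Theta(Q)$.
\item In the bundle case she receives $D\cup Q$. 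Her value for it is $ad+\Delta r+b|Q|$, and she pays \eqref{eq:conditional-bundle-price}, so her utility is $\Delta(r-\Theta(Q))\ge0$.
\end{itemize}
Hence truthful utility equals $\Delta\pos{r-\Theta(Q)}\ge0$, which gives ex-post IR. Payments are nonnegative in both cases. In the per-item case they equal $b|A_i|\ge0$. In the bundle case the price $ad+b|Q|+\Delta\Theta(Q)$ is nonnegative because $\Theta(Q)\ge0$. Note also that if $w$ does not receive $L$, then $r<K_d\le\Theta(Q)$, so we are in the per-item case.

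For DSIC, let the true type be $Q\mathbin{\dot\cup}T$ and the report be $Q'\mathbin{\dot\cup}T'$ with $r'=|T'|$. If the report falls in the per-item case, the buyer pays $b$ for each received item, and each item is worth at most $b$ to her. Her utility is therefore at most $0\le\Delta\pos{r-\Theta(Q)}$.

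If the report falls in the bundle case, she receives $D\cup Q'$. Her true value for it is $ad+\Delta r+b|Q'\cap Q|+a|Q'\setminus Q|$, and she pays $ad+b|Q'|+\Delta\Theta(Q')$. Her utility is therefore
\[
 \Delta\bigl(r-\Theta(Q')-|Q'\setminus Q|\bigr).
\]
Applying \eqref{eq:threshold-change} gives $\Theta(Q)\le\Theta(Q')+|Q'\setminus Q|$, so this utility is at most $\Delta(r-\Theta(Q))\le\Delta\pos{r-\Theta(Q)}$. This bound is independent of $T'$ and covers simultaneous upward and downward changes in one step. The property holds for every buyer $i$ and every opponents' profile, which proves DSIC. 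The argument uses only \eqref{eq:arbitrary-thresholds} (through $K_d\ge0$ and $c+e_i=d$), so it holds for every admissible threshold family.

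The real work lies in Lemma~\ref{lem:binary-menu}, which I take as given. Within this proof, the step needing care is checking that the condition $r\ge\Theta(Q)$ coincides exactly with ``$i=w$ and $L$ is sold''. This requires the observation that $c+e_i=d$ is report-invariant, so that the sale condition $e_w\ge K_{c+e_w}$ reads $r\ge K_d$. Without it, truthful utility would not match the formula $\Delta\pos{r-\Theta(Q)}$ used in the comparison.
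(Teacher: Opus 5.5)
Your proposal is correct and follows the paper's proof essentially step for step. You fix the opponents' reports and use the two transaction forms of Lemma~\ref{lem:binary-menu}, bounding any per-item deviation by zero and any bundle deviation by $\Delta(r-\Theta(Q))$ via \eqref{eq:threshold-change}, and you read IR, nonnegative payments, feasibility, and determinism off directly; your extra check that $r\ge\Theta(Q)$ coincides with ``$i=w$ and $L$ is sold'' (via $c+e_i=d$) only makes explicit the truthful-utility formula the paper takes from Lemma~\ref{lem:binary-menu}.
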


Both proofs are in Appendix~\ref{app:binary-implementation}.
For the optimal thresholds, conditional bundle utilities make the
positive-flow incentive constraints tight, and the common buyer order
attains the largest virtual value on each item. Appendix~\ref{app:binary-attainment}
checks these equality conditions and proves Theorem~\ref{thm:binary}.

\subsection{Computing the mechanism and its revenue}
\label{subsec:binary-computation}

The allocation and payment rules use only the reported counts and the
precomputed thresholds. Exact expected revenue requires more care because
there are \(2^{nm}\) report profiles. The computation groups items into
three kinds: all-low, a sole high reporter of a specified identity, and
multiple high reporters. Truncated polynomial products aggregate these
counts; subtracting the coefficients for maxima at most \(r-1\) from
those for maxima at most \(r\) isolates the exclusive-count maximum.
Appendix~\ref{app:binary-computation} gives the finite formula, algorithm,
and rational bit-complexity analysis.

\section{Hardness with a Single Buyer}\label{sec:single-buyer}
\label{sec:complexity}

The two-point optimum in Section~\ref{sec:binary} is explicit and
computable. With a general finite prior, even one buyer presents a different
obstacle: identifying the better of two explicit bundle prices can be hard,
although one of them is optimal among all randomized mechanisms.

\subsection{The computational problem and the main result}

There is one additive buyer and \(m\ge2\) items; BIC and DSIC coincide.
The benchmark allows randomized IC, IR mechanisms. To make the two
computational tasks hard at once, the construction both controls the
optimal revenue and forces every optimum to reveal the same allocation
bit. The count is hidden in the choice between two prices, not in the
length of a mechanism description.

\begin{theorem}\label{thm:single-buyer-hardness}
Computing \(\OPT(F;1,m)\) exactly is \(\#\mathrm P\)-hard under
deterministic polynomial-time Turing reductions. Moreover, if a deterministic
polynomial-time algorithm outputs an optimal single-buyer IC, IR mechanism
whose marginal allocation probabilities are exactly evaluable in polynomial
time, then every \(\#\mathrm P\) function belongs to \(\mathrm{FP}\).
Both conclusions hold even when \(m\) is encoded in unary and \(F\) has
exactly \(2m+2\) distinct rational support points in \((0,1]\), with
positive rational probabilities. Every constructed instance has two
explicit deterministic grand-bundle prices with distinct revenues; the
better price is optimal among all randomized IC, IR mechanisms.
\end{theorem}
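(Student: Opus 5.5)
We will reduce from a \(\#\mathrm P\)-hard counting problem of subset-sum type: given positive integers \(w_1,\ldots,w_m\) and a threshold \(W\), count the sets \(A\subseteq[m]\) with \(\sum_{j\in A}w_j\le W\). By standard self-reducibility it suffices to decide comparisons of the form \(\#\{A:\sum_A w_j\le W\}\ge N\) for an arbitrary rational \(N\), using polynomially many such comparisons and binary search. The instance we build has one buyer with values drawn i.i.d.\ from a common \(F\) on \(2m+2\) points. The plan is to start from a two-point base prior \(p\delta_a+q\delta_b\) and choose \(a,b,p\) so that, by the single-buyer specialization of Theorem~\ref{thm:binary}, two grand-bundle prices tie exactly as optimal mechanisms. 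The natural candidates are the price \(ma\), which sells to everyone, and a higher price that excludes the all-low type. Because the theorem gives the optimal revenue and also the tight flow certificate of Lemma~\ref{lem:binary-dual}, we know the entire optimal family, and we can measure how much revenue any IC, IR mechanism loses relative to it.

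Next we encode the counting instance with a paired-marker construction. Each base point is split into a pair of nearby values, \(a\) into \(a\pm\eta\alpha_j\) and \(b\) into \(b\pm\eta\beta_j\), using \(m\) distinct perturbation scales. These give \(2m\) new points; together with the two base points \(a,b\) this makes \(2m+2\) support points in total. The new points receive total probability \(\epsilon\). The perturbations are arranged so that the buyer's bundle value equals \(ma\) plus an integer combination that mirrors \(\sum_{A}w_j\). The common marginal forces every item to use the same marker set. To break this symmetry, we separate markers by their distinct magnitudes \(\alpha_j\): the sum of the markers that are present then identifies the multiset of markers. Using items whose marker weights vary independently, we make the event "bundle value is at least a threshold" have probability equal to a known affine function of the subset count. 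We then choose \(\epsilon\) and \(\eta\) so that the revenue difference between the two grand-bundle prices equals \(\epsilon\) times a nonzero affine function of \(\#\{A:\sum w_j\le W\}-N\), and we tune parameters to make it nonzero. All of these quantities are rational with polynomial bit length.

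The key step, and the main obstacle, is the stability statement that the better of the two prices remains optimal among \emph{all} randomized IC, IR mechanisms. We plan to prove it via Theorem~\ref{thm:bundle-stability}. Take any mechanism for the perturbed prior and restrict it to base types, which is legitimate because IC and IR restrict to a sub-type-space. Its base revenue is at most the base optimum minus a loss \(\mathcal L\ge0\), read off from the slack in the weighted incentive constraints of Lemma~\ref{lem:binary-dual}. For a perturbed type \(v'\) near a base type \(v\), IC in both directions bounds \(s(v')\) by \(s(v)\) plus \(O(\eta)\) plus a multiple of the distortion. This shows that any payment above what a grand-bundle price would collect costs at least a constant times \(\mathcal L\) on the base mass \(1-\epsilon\). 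Taking \(\epsilon\) small relative to the constants of this stability inequality, but still polynomially bounded in bit length, the total gain \(\epsilon\cdot O(\mathcal L)\) cannot compensate the loss. Hence an optimum must lie in the base optimal family up to the perturbed types, and there the comparison reduces to the two explicit prices. I expect the hard part to be making the constant explicit and uniform over all randomized menus, especially handling lotteries near the tie.

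Finally we derive both conclusions. For revenue hardness, \(\OPT\) equals the maximum of the two explicit revenues, so one oracle call to compute it decides the comparison. For mechanism hardness, the stability argument shows that every optimum either gives the whole bundle to the all-minimum type \((a_1,\ldots,a_1)\) or gives it nothing, according to which price wins. A single evaluation of \(x_1\) at that type therefore decides the comparison. With binary search over \(N\), either capability would place \(\#\mathrm P\) in \(\mathrm{FP}\).
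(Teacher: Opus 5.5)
Your architecture is the paper's: a critical two-point prior where the bundle prices \(ma\) and \(ma+1\) tie, the flow identity with slack retained as a loss, a small-mass perturbation, a forced bit at the lowest type, and binary search. The gap is the perturbation you choose, which breaks both the stability step and the readout.

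\textbf{What goes wrong.} Splitting \(a\) into \(a\pm\eta\alpha_j\) puts support below \(a\). Theorem~\ref{thm:bundle-stability} requires \(\supp F\subseteq[a,a+1]\) with mass at both endpoints, so it does not apply. Moreover \(a\one\) is no longer the lowest type, and price \(ma\) no longer ``sells to everyone.''
\begin{itemize}
\item \emph{The readout fails.} The all-minimum type \((a_1,\ldots,a_1)\) you plan to query has bundle value below \(ma\). It receives nothing under both candidate prices, so \(x_1\) there carries no information.
\item \emph{Exact optimality fails.} Your new types sit just below the thresholds of both prices, with values \(ma-O(\eta)\) and \(ma+1-O(\eta)\). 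Cutting whichever price is better by \(O(\eta)\) captures mass of order \(\epsilon\). So neither explicit price is optimal unless \(\epsilon\) is much smaller than the value gaps, a relation your plan never imposes.
\item \emph{Your stability sketch cannot close this.} The near-vertex estimate \(s(v')\le s(v)+O(\eta)+\cdots\) has an additive term that is not proportional to the base loss \(\mathcal L\). It therefore does not exclude zero-loss mechanisms that earn more on new types. Exact optimality of a price, and uniqueness of the lowest-type bit, need a payment bound with no additive term.
\end{itemize}

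\textbf{How the paper avoids this.} It places all \(2m\) new values strictly inside the cube, at \(a+y_\ell\) with \(y_\ell\approx1/m\). Then:
\begin{itemize}
\item \(a\one\) stays the lowest type, and price \(ma\) serves every type.
\item A tuple with fewer than \(m\) intermediate coordinates and none at \(a+1\) has \(\sum_j\xi_j\le1-\gamma\). Any coordinate at \(a+1\) gives \(\sum_j\xi_j\ge1\).
\item IC against \(a\one\) and the singleton vertices, in both directions, gives \((1-S)(t-z)\le(m+1)C_m\delta\). Hence \(s(v)\le s_z(v)+B_m\delta\), with \(B_m\propto1/\gamma\) and no additive term. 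The added mass is then taken proportional to \(\gamma\).
\end{itemize}

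\textbf{The encoding.} The same interior placement makes the paper's encoding exact. Only all-intermediate tuples are near the threshold, so \(\Pr[\sum_j\xi_j<1]=2^{-m}-\epsilon^mZ_{T_0}/N^m\) for added mass \(\epsilon\). Paired markers then give \(Z_R-Z_{R+1}=r!\,Z\). In your scheme, base coordinates contribute zero perturbation and \(\pm\) pairs can cancel. So tuples with any number of perturbed coordinates reach the threshold, and its probability is not evidently a known affine function of one subset count.
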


Revenue hardness alone would not establish the mechanism-construction
claim. For that claim, we force every optimum to make the same zero--one
allocation decision at a specified type. This is the role of the following
promise problem.

\begin{definition}\label{def:minalloc}
The promise problem \textsc{MinAllocIID} takes a finite rational
distribution \(F\) and an integer \(m\ge2\). With
\(a_{\min}=\min\supp(F)\), the promise is that some
\(b\in\{0,1\}\) satisfies \(x_1(a_{\min}\one)=b\) for every
revenue-optimal IC, IR mechanism under \(F^{\otimes m}\).
The required output is \(b\).
\end{definition}

The universal quantifier makes the answer independent of which optimum
an algorithm returns, as in~\citet{ddt2014}. We obtain the forced bit by
perturbing a two-point prior at which two bundle prices tie. The proof
must rule out every other menu, not just compare the two prices.
Appendix~\ref{app:single-buyer} gives the estimates and reduction.

\begin{lemma}\label{lem:forced-allocation-query}
\textnormal{\textsc{MinAllocIID}} is \(\#\mathrm P\)-hard under
deterministic polynomial-time Turing reductions, even when \(m\) is
encoded in unary and \(F\) has exactly \(2m+2\) distinct rational
support points in \((0,1]\), all with positive rational probabilities.
Every instance produced by the reduction has two explicitly specified
deterministic grand-bundle prices with distinct revenues. The better
price attains \(\OPT(F;1,m)\)
among all randomized IC, IR mechanisms, and every optimal mechanism
allocates every item to the lowest type with the same probability
\(b\in\{0,1\}\).
\end{lemma}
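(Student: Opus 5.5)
The plan is a Turing reduction from a \(\#\mathrm P\)-complete subset-counting problem. The natural source is \(\#\)\textsc{SubsetSum}: given positive integers \(w_1,\ldots,w_m\) and a target \(W\), count the subsets with sum at most \(W\). The reduction queries \textsc{MinAllocIID} on a family of instances, and each instance answers one comparison of the form ``is the count at least \(N\)?''. Binary search over \(N\in\{0,\ldots,2^m\}\) then recovers the count with \(m+1\) queries. Since the count is at most \(2^m\), these queries can be made adaptively in polynomial time.

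\textbf{Step 1 (a tied base prior).} Start from a two-point prior \(F_0=p\delta_a+q\delta_b\) with one buyer. By Theorem~\ref{thm:binary} with \(n=1\), where \(e=|S|\) and \(c=m-|S|\), the optimum is the discounted item pricing of~\cite{cmpy2018}. I would choose rational \(a,b,p\) so that exactly two grand-bundle prices tie and are jointly optimal:
\begin{itemize}
 \item price \(P_1=ma\), which sells to every type, so the lowest type receives everything;
 \item price \(P_2=ma+\Delta\), which excludes exactly the lowest type.
\end{itemize}
This amounts to putting a threshold such as \(K_m\) from \eqref{eq:optimal-thresholds} at a boundary where \(a-\Delta\Gamma_{m,0}(\rho)=0\), adjusted so that the whole optimal family consists of these two bundles. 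By Lemma~\ref{lem:binary-dual}, every other menu should then lose an amount bounded below by a positive multiple of its distance from this family.

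\textbf{Step 2 (perturbation encoding the count).} Add \(2m\) new support points near \(a\) and \(b\), each of small rational probability \(\varepsilon\), placed in pairs. The paired-marker lemma of Appendix~\ref{app:hardness-common} is meant to ensure that item \(j\) carries a marker whose presence shifts the bundle value by an amount proportional to \(w_j\). The two explicit prices \(P_1,P_2\) then have revenues differing by \(\varepsilon\) times an affine function of \(\Pr[\sum_{j\in J}w_j\le W]\), where \(J\) is the random set of items carrying markers. Choosing the marker probabilities uniform makes this probability \(2^{-m}\) times the subset count. Shifting a base parameter by a rational amount depending on \(N\), with \(N\) offset by \(\tfrac12\) so that no tie remains, makes
\[
 \operatorname{sign}\bigl(\Rev(P_1)-\Rev(P_2)\bigr)
\]
equal to whether the count is at least \(N\). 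The support has exactly \(2m+2\) points: the original two and the \(2m\) new ones. All bit lengths stay polynomial.

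\textbf{Step 3 (stability and the forced bit).} Invoke Theorem~\ref{thm:bundle-stability}. Incentive constraints bound any extra payment a menu extracts at a new low-probability type by a constant multiple of the menu's loss relative to the base optimal family. Hence for \(\varepsilon\) below an explicit polynomial-bit bound, every optimal IC, IR mechanism for the perturbed prior is one of the two bundle prices. The strict inequality from Step 2 then selects a unique one. So every optimum has \(x_1(a_{\min}\one)=b\), with \(b=1\) when \(P_1\) wins and \(b=0\) when \(P_2\) wins. This gives the promise, the two explicit prices with distinct revenues, and optimality of the better price among all randomized mechanisms.

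The main obstacle is Step 3: choosing \(\varepsilon\) small enough, yet still of polynomial bit length, that the stability gain is strictly dominated even though the revenue gap in Step 2 is itself only of order \(\varepsilon 2^{-m}\). I would first make the loss-versus-gain constants explicit for the base prior, giving a linear loss lower bound and a linear per-type gain bound. I would then pick \(\varepsilon=2^{-\mathrm{poly}(m)}\) times those constants, checking that the \(\tfrac12\)-offset margin survives.
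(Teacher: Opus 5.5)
\paragraph{The gap is in Step 2,} where the count has to enter the revenue. Under a common marginal every coordinate is an independent draw from the same \(F\). So no support point can be attached to item \(j\) and its weight \(w_j\), and there is no random set \(J\) of marked items whose weight sum governs the bundle decision. The \(m\) coordinates produce an ordered \(m\)-tuple of letters from one shared alphabet.

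The paired-marker lemma (Lemma~\ref{lem:paired-markers}) turns such tuples into subsets only on an \emph{exact}-sum event. In the paper's notation, with a large scale \(W\) (not your target) and subset-sum target \(T\), the letters are \(W\beta^{i-1}\) and \(W\beta^{i-1}+w_i\). The base-\(\beta\) expansion forces one letter from each pair only when the tuple sum equals \(R=W\sum_i\beta^{i-1}+T\).

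The price comparison, however, sees a threshold event: at price \(ma+1\) the buyer pays exactly when \(S=\sum_j(v_j-a)\ge1\). A threshold count such as ``letter sum at least \(T_0\)'' (or your ``at most'') also includes tuples with two letters from one pair and none from another. So one query cannot compare the subset count with your comparison value. The claimed dependence on \(2^{-m}\) times the subset count does not hold for an i.i.d.\ prior.

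Placing the new points near \(a\) and \(b\) also spoils the accounting. The paper puts all \(2m\) new values at \(a+y_\ell\) with \(y_\ell\approx 1/m\) (Lemma~\ref{lem:intermediate-values}). Then, absent a top coordinate, fewer than \(m\) intermediate coordinates give \(S\le1-\gamma\), and \(m\) of them give \(S\ge1\) iff their letters sum to at least \(T_0\). With endpoint masses \(1/2-\eta\) and \(1/2\) this yields exactly \(\Pr[S<1]=2^{-m}-\eta^m Z_{T_0}/N^m\). Here \(N=2m\) is the alphabet size and \(Z_{T_0}\) counts tuples in \([N]^m\) whose letter sum reaches \(T_0\).

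\paragraph{The missing idea} is to binary-search \emph{tuple} counts at two thresholds and take their difference. For each \(T_0\in\{R,R+1\}\) and \(0\le k<N^m\), set
\[
\theta=\frac{k+1/2}{N^m},\qquad d_0=2^{-m}-\eta^m\theta,\qquad P=\frac{1}{d_0},\qquad a=\frac{P-1}{m}.
\]
The forced bit at \(a\one\) then answers whether \(Z_{T_0}\le k\). The source count is \(Z=(Z_R-Z_{R+1})/m!\). A difference of two thresholds would also handle your at-most count, but a single threshold never suffices.

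\paragraph{On Step 3,} the ``main obstacle'' you name is not one. Theorem~\ref{thm:bundle-stability} compares \(M\) with \(M_z\) at \(M\)'s own \(z=x_1(a\one)\) and gives \(\Rev(M)\le\Rev(M_z)-\kappa\delta\). So only the sign of the endpoint revenue gap matters, not its size. The real constraint is \(mB_m\|F-F_0\|_1<1\) with \(B_m\propto 1/\gamma\), which is met by taking \(\eta\propto\gamma\).

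That theorem also gives agreement with some \(M_z\) only at the vertices, not that every optimum is one of the two bundle prices. This is enough, because \(a\one\) is a vertex.
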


\subsection{Two tied prices and the cost of departing from them}
\label{sec:critical-distribution}

For values \(a,a+1\), the bundle prices \(ma\) and \(ma+1\) tie when
\((ma+1)p^m=1\), where \(p\) is the probability of \(a\). Choose
\(p\in[1/3,2/3]\) and set
\begin{equation}\label{eq:critical}
 F_0=p\delta_a+(1-p)\delta_{a+1},\qquad
 ma=p^{-m}-1,\qquad P=ma+1=p^{-m}.
\end{equation}
For \(z\in[0,1]\), offer the menu
\begin{equation}\label{eq:critical-menu}
 M_z=\{(\mathbf0,0),(z\one,maz),(\one,P-z)\},
\end{equation}
breaking utility ties in favor of the larger payment and identifying
coincident options. For \(v=a\one+\xi\in[a,a+1]^m\), put
\(S=\sum_j\xi_j\). The resulting outcomes are
\begin{equation}\label{eq:menu-outcomes}
 (x_{M_z}(v),s_z(v))=
 \begin{cases}
 (z\one,maz),&S<1,\\
 (\one,P-z),&S\ge1.
 \end{cases}
\end{equation}
These menus are IC and IR on the whole cube. Their endpoints post
prices \(ma\) and \(P\), and
\begin{equation}\label{eq:critical-revenue}
 \Rev_{F_0^{\otimes m}}(M_z)
 =p^mmaz+(1-p^m)(P-z)=ma.
\end{equation}

Let \(\mathcal T\subseteq[a,a+1]^m\) be finite and contain every vertex
\(v^A=a\one+\one_A\), \(A\subseteq[m]\). Write \(u(A),x_j(A)\)
for the utility and allocation at \(v^A\), and
\(f(A)=p^{m-|A|}(1-p)^{|A|}\).
We normalize the lowest-type utility to \(u(\varnothing)=0\), as
justified in Appendix~\ref{app:single-buyer-certificate}.

The revenue bound in Section~\ref{subsec:binary-flow} was obtained by
summing weighted incentive constraints and dropping nonnegative slack.
Here we retain that slack: it measures how much revenue an alternative
mechanism loses at the original two-point prior. This loss will control
its possible gains when rare new values are added. Specializing the flow
to one buyer and value gap one, the edge weight from layer \(k\) to
layer \(k-1\) is
\begin{equation}\label{eq:edge-weights}
 \lambda_k=
 \frac{\sum_{\ell=k}^m\binom m\ell p^{m-\ell}(1-p)^\ell}
      {k\binom mk}\quad(1\le k\le m),\qquad \lambda_{m+1}=0.
\end{equation}
Define the edge slack and virtual value by
\begin{align}
 \sigma(A,j)&=u(A)-u(A\setminus\{j\})
                   -x_j(A\setminus\{j\})\ge0
       &&(j\in A),\label{eq:edge-slack}\\
 \Phi_j(A)&=
 \begin{cases}
 a+1,&j\in A,\\
 a-\lambda_{|A|+1}/f(A),&j\notin A.
 \end{cases}\label{eq:critical-virtual}
\end{align}

Restoring the discarded terms measures both withheld allocation and
incentive slack. Comparing the two paths from \(\varnothing\) to
\(\{i,j\}\) controls the lowest type's allocation coordinates, which
in turn bounds payments at new types.

\begin{proposition}\label{prop:revenue-loss}
For a normalized IC, IR mechanism on \(\mathcal T\), put
\(\delta=ma-\Rev_{F_0^{\otimes m}}(M)\). Then
\begin{equation}\label{eq:revenue-loss}
 \delta=
 \sum_{A\ne\varnothing}f(A)\sum_j\Phi_j(A)(1-x_j(A))
 +\sum_{A\ne\varnothing}\sum_{j\in A}
       \lambda_{|A|}\sigma(A,j).
\end{equation}
Every summand is nonnegative: \(\Phi_j(\varnothing)=0\), and
\(\Phi_j(A)\ge1/4\) for \(A\ne\varnothing\).
\end{proposition}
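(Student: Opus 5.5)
Proposition~\ref{prop:revenue-loss} is an exact version of the weighted-sum argument behind Lemma~\ref{lem:binary-dual}, specialized to one buyer, \(\Delta=1\), and the symmetric flow \eqref{eq:edge-weights}, with no slack discarded. The plan has four steps: verify the flow, derive the revenue identity, recover \(ma\) from the virtual values, and check the sign of each summand.

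\emph{Flow conservation.} I first check that the layer weights \(\lambda_k\) satisfy \eqref{eq:flow-conservation}. At a node \(A\) with \(|A|=k\ge1\), the outflow is \(k\lambda_k\) and the inflow is \((m-k)\lambda_{k+1}\). The difference is
\[
\frac{\sum_{\ell\ge k}\binom m\ell p^{m-\ell}(1-p)^\ell}{\binom mk}
-\frac{\sum_{\ell\ge k+1}\binom m\ell p^{m-\ell}(1-p)^\ell}{\binom mk}
=p^{m-k}(1-p)^k=f(A),
\]
using \((m-k)/((k+1)\binom m{k+1})=1/\binom mk\).

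\emph{Revenue identity.} Revenue is \(\sum_A f(A)\bigl(v^A\cdot x(A)-u(A)\bigr)\). Multiply the defining identity \eqref{eq:edge-slack} by \(\lambda_{|A|}\) and sum over all edges. By conservation, and since \(u(\varnothing)=0\), the \(u\)-terms sum to \(\sum_{A\ne\varnothing}f(A)u(A)\). This gives
\[
\sum_{A}f(A)u(A)=\sum_{A,\,j\notin A}\lambda_{|A|+1}x_j(A)+\sum_{A\ne\varnothing}\sum_{j\in A}\lambda_{|A|}\sigma(A,j).
\]
Substituting, the revenue equals \(\sum_A f(A)\sum_j\Phi_j(A)x_j(A)\) minus the slack sum. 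The virtual value is \((a+1)\) on high coordinates and \(a-\lambda_{|A|+1}/f(A)\) on low coordinates.

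\emph{Recovering \(ma\) and the sign of \(\Phi_j(\varnothing)\).} I then show \(\sum_A f(A)\sum_j\Phi_j(A)=ma\), which yields \eqref{eq:revenue-loss} after subtraction. At \(A=\varnothing\), we have \(\lambda_1=(1-p^m)/m\) and \(f(\varnothing)=p^m\). Hence
\[
\Phi_j(\varnothing)=a-\frac{1-p^m}{mp^m}=\frac{map^m-1+p^m}{mp^m}=0
\]
by the criticality condition \(ma=p^{-m}-1\). For the remaining sum I compare with \(\Rev(M_1)=ma\). This is the grand bundle at price \(P\), sold to all \(A\ne\varnothing\), which makes all downward edges into \(\varnothing\)-free nodes tight. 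Alternatively I use \(M_0\), which sells nothing to \(\varnothing\) at price \(ma\). Applying the identity to such a reference mechanism with zero slack and \(x=\one\) for \(A\ne\varnothing\) gives \(\sum_{A\ne\varnothing}f(A)\sum_j\Phi_j(A)=\Rev(M_1)=ma\). The slack of \(M_1\) vanishes because \(u(A)=|A|-1\) for \(A\ne\varnothing\) and \(x(\varnothing)=\mathbf 0\). I need to check the edge from a singleton to \(\varnothing\): there \(u(\{j\})-u(\varnothing)-x_j(\varnothing)=0\). Good. So \(\delta\) equals the displayed expression.

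\emph{Main obstacle: the bound \(\Phi_j(A)\ge1/4\).} For \(j\in A\) the bound is trivial, since \(a+1\ge1\). For \(j\notin A\) with \(|A|=k\ge1\), I need \(\lambda_{k+1}/f(A)\le a-1/4\), where \(a=(p^{-m}-1)/m\). Write
\[
\frac{\lambda_{k+1}}{f(A)}=\frac{\Pr[\mathrm{Bin}(m,1-p)\ge k+1]}{(k+1)\binom m{k+1}p^{m-k}(1-p)^k}.
\]
My first attempt is to bound the tail ratio by a geometric series with ratio \(\frac{(m-k-1)(1-p)}{(k+2)p}\). Then I compare with the \(k=0\) case, where equality \(\lambda_1/f(\varnothing)=a\) holds, and show that the ratio decreases in \(k\) by a gap of at least \(1/4\). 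This is the kind of monotonicity asserted in \eqref{eq:gamma-monotonicity}. I would therefore write \(\lambda_{k+1}/f(A)\) as \(\Gamma_{m-k,k}(\rho)\)-type sums, via Lemma~\ref{lem:binary-scores} with \(n=1\). The goal is to show \(\Gamma_{m,0}-\Gamma_{m-1,1}\ge1/4\) for \(p\in[1/3,2/3]\). Since \(\Gamma_{m,0}-\Gamma_{m-1,1}\) has leading term \(\rho-\rho/2=\rho/2\ge1/4\) when \(\rho\ge1/2\), the range of \(p\) is exactly what makes this work. The remaining terms are nonnegative by a termwise comparison of the coefficients, and this termwise comparison is the step I expect to need care.
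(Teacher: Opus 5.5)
Your proposal is correct and follows the paper's proof essentially step for step. Both arguments use the exact flow identity with the slacks retained, subtract that identity for a zero-slack reference menu of revenue \(ma\) (the empty-vertex term vanishes because \(\Phi_j(\varnothing)=0\)), and bound low coordinates by \(\Phi_j(A)\ge a-\Gamma_{m-1,1}(\rho)=\Gamma_{m,0}(\rho)-\Gamma_{m-1,1}(\rho)\ge\rho/2\ge1/4\), using the coefficientwise monotonicity of \(\Gamma\).

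One small slip is in your labels. By \eqref{eq:critical-menu}, \(M_1\) posts the bundle at price \(ma\) and \(M_0\) posts it at price \(P\). The reference menu you actually compute with, with \(x(\varnothing)=\mathbf 0\) and \(u(A)=|A|-1\), is therefore \(M_0\), not \(M_1\). This is harmless, since every \(M_z\) has revenue \(ma\) and zero slack.
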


\subsection{Stability under a small perturbation}

New types could create a better menu. We bound that menu's possible
payment advantage by the old revenue loss \(\delta\), so a sufficiently
small perturbation cannot compensate for leaving the original family.

Assume that, for some \(\gamma\in(0,1]\),
\begin{equation}\label{eq:threshold-separation}
 v=a\one+\xi\in\mathcal T,\quad \sum_j\xi_j<1
 \quad\Longrightarrow\quad \sum_j\xi_j\le1-\gamma.
\end{equation}
Given \(M\), choose its comparison menu \(M_z\) by taking \(z\) to be
\(M\)'s allocation probability for item \(1\) at the lowest type. Define
\begin{equation}\label{eq:control-constants}
 C_m=8m^2 3^m,\qquad V=m(a+1),\qquad z=x_1(\varnothing),
\end{equation}
and
\begin{equation}\label{eq:payment-constant}
 B_m=\frac{(m+1)(V+1)C_m}{\gamma}.
\end{equation}

IC compares a new type with the lowest and singleton vertices in both
directions. The old revenue loss already controls those vertices, while
the separation \(\gamma\) prevents a type from approaching the bundle
indifference threshold too closely. These comparisons bound any new
payment advantage by the same loss \(\delta\).

\begin{lemma}\label{lem:new-type-payment}
Under~\eqref{eq:threshold-separation}, every normalized IC, IR mechanism
on \(\mathcal T\) satisfies
\begin{align}
 |s(v)-s_z(v)|&\le(V+1)C_m\delta
       &&(v\in\{a,a+1\}^m),\label{eq:vertex-payment-control}\\
 s(v)&\le s_z(v)+B_m\delta
       &&(v\in\mathcal T).\label{eq:new-type-payment}
\end{align}
If \(\delta=0\), its allocations and payments at every vertex agree
with those of \(M_z\).
\end{lemma}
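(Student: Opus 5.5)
We first establish the vertex estimate \eqref{eq:vertex-payment-control} from Proposition~\ref{prop:revenue-loss}, and then transfer it to new types through incentive constraints against the lowest and singleton vertices.

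\emph{Step 1: vertex allocations and utilities.} Every summand in \eqref{eq:revenue-loss} is nonnegative, and \(f(A)\ge 3^{-m}\) because \(p\in[1/3,2/3]\). Since \(\Phi_j(A)\ge1/4\) for \(A\ne\varnothing\), we get \(1-x_j(A)\le 4\cdot3^m\delta\) at every nonempty vertex. Each slack satisfies \(\sigma(A,j)\le\delta/\lambda_{|A|}\), and \(\lambda_k\) is bounded below by a constant polynomial in \(3^{-m}\) and \(1/m\). We then compare the two paths \(\varnothing\to\{i\}\to\{i,j\}\) and \(\varnothing\to\{j\}\to\{i,j\}\). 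Summing the edge slacks along each path gives
\[
u(\{i,j\})=u(\{i\})+x_j(\{i\})+\sigma=u(\{j\})+x_i(\{j\})+\sigma'.
\]
Combining this with \(u(\{i\})=x_i(\varnothing)+\sigma(\{i\},i)\) and \(x_j(\{i\})\approx1\), we conclude \(|x_i(\varnothing)-x_j(\varnothing)|=O(C_m\delta)\). Hence every coordinate of the lowest-type allocation is within \(C_m\delta\) of \(z=x_1(\varnothing)\).

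Going up the lattice, the same telescoping shows that \(u(A)\) is within \(O(m C_m\delta)\) of \(|A|z\) when \(A\) is a singleton. For larger \(A\) it is within the same bound of \(M_z\)'s utility \(u_z(A)=\pos{|A|-1}+z\). The only difference is that the chain contributes \(x_j\approx1\) rather than \(z\) for steps above the first. Payments are value times allocation minus utility, and \(|v|_1\le V\). So \(|s-s_z|\le (V+1)C_m\delta\) at all vertices after we fix constants. When \(\delta=0\), every inequality is an equality and the vertices agree with \(M_z\) exactly.

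\emph{Step 2: new types.} Fix \(v=a\one+\xi\in\mathcal T\) and split into two cases according to \(S=\sum_j\xi_j\).

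If \(S\ge1\), then \(s_z(v)=P-z\). We use IC of the vertex \(v^{[m]}=(a+1)\one\) against reporting \(v\). This gives \(s(v)\le s(\one\text{-vertex})+(a+1)\one\cdot(x(v)-x([m]))\). Since \(x([m])\ge\one-4\cdot3^m\delta\one\), the second term is at most \(V\cdot4\cdot 3^m\delta\), and Step~1 bounds the first term by \(P-z+(V+1)C_m\delta\).

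If \(S\le1-\gamma\), then \(s_z(v)=maz\), and we combine two IC constraints. The lowest vertex cannot profit from reporting \(v\), so
\[
s(v)\le s(\varnothing)+a\one\cdot(x(v)-x(\varnothing)).
\]
This yields \(s(v)\le maz+a\sum_j(x_j(v)-z)+O(\delta)\). To control the excess allocation \(\sum_j(x_j(v)-z)\), we use IR and IC of \(v\) itself against reporting the full vertex. Truthful utility at \(v\) must be at least \(v\cdot\one-s([m])\approx ma+S-(P-z)=S-1+z\). Utility at \(v\) also equals \(v\cdot x(v)-s(v)\), so extra payment is possible only with extra allocation, which leaves room \((1-S)\ge\gamma\) per unit. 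Rearranging, the excess allocation is at most \(O(C_m\delta)/\gamma\) times value factors, which gives the constant \(B_m\).

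The main obstacle is this second case: making the \(\gamma\)-separation produce a factor \(1/\gamma\) rather than an unbounded one. For that I would try the convex combination of the two constraints with weights \(S\) and \(1-S\) first, replacing the single-vertex bound above. The factor \(m+1\) in \(B_m\) suggests one term per lowest or singleton vertex, so this is the likely form of the final estimate.
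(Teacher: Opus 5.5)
Your Step~1 is essentially the paper's vertex-control argument (single-term loss bounds from the revenue-loss identity, the two paths to \(\{i,j\}\), and telescoping along a chain). You still need to track constants to land on the fixed \(C_m=8m^2 3^m\); the paper's chain bound is \((m^2+6m)3^m\delta\le C_m\delta\).

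The real problems are in Step~2. Both incentive constraints you write there point the wrong way. A vertex \(w\) not gaining by reporting \(v\) gives the \emph{lower} bound \(s(v)\ge s(w)+w\cdot(x(v)-x(w))\). Upper bounds on \(s(v)\) come only from \(v\)'s own constraints: \(s(v)=v\cdot x(v)-u(v)\le ma+S-u(v)\), combined with a lower bound on \(u(v)\). For \(S\ge1\) the repair is immediate. The constraint that \(v\) does not gain by reporting \((a+1)\one\) gives \(u(v)\ge S-1+z-(2V+1)C_m\delta\), hence \(s(v)\le P-z+(2V+1)C_m\delta\le s_z(v)+B_m\delta\). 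The paper compares with the singleton \(v^{\{1\}}\) instead, which works equally well.

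The case \(S<1\), which you leave open, needs an idea your proposed constraints cannot supply. First, the bound \(u(v)\ge S-1+z\) is strictly weaker than the needed \(u(v)\ge zS\) when \(S<1\). Second, the full vertex cannot cap excess allocation. For \(m\ge2\) and \(z<1\), make every vertex exactly as in \(M_z\) (so \(\delta=0\)), and give \(v\) the outcome \(x(v)=t\one\), \(s(v)=(ma+S)t-zS\), with \(t>z\) and \((m-S)(t-z)\le(m-1)(1-z)\). Every constraint between \(v\) and the empty and full vertices holds in both directions, and IR holds. Yet \(s(v)-s_z(v)=(ma+S)(t-z)>0\).

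What excludes this outcome is the singleton vertex. Choose \(j\) attaining \(t=\max_jx_j(v)\). The IC constraint of \(v^{\{j\}}\) against reporting \(v\), together with \(u(\{j\})\le z+C_m\delta\), \(s(v)\ge a\sum_\ell x_\ell(v)\), and \(\xi\cdot x(v)\le tS\), gives \(u(v)\le tS-\pos{t-z-C_m\delta}\). Separately, \(v\)'s constraint against reporting the empty vertex gives \(u(v)\ge\xi\cdot x(\varnothing)\ge zS-mC_m\delta\). Combining them yields \((1-S)(t-z)\le(m+1)C_m\delta\), so \(t-z\le(m+1)C_m\delta/\gamma\) by separation. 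Then \(s(v)\le(ma+S)t-zS+mC_m\delta=maz+(ma+S)(t-z)+mC_m\delta\le maz+B_m\delta\).

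The singleton is the right comparison because its extra value per unit of item \(j\) is exactly \(1\). That is what produces the factor \(1-S\) controlled by \(\gamma\); the full vertex gives only \((m-S)(t-z)\le(m-1)(1-z)\). Your guess about the factor \(m+1\) is right in spirit: it is \(mC_m\delta\) from \(x(\varnothing)\) plus \(C_m\delta\) from \(u(\{j\})\).
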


For finite measures use the unhalved distance
\(\|\nu-\nu'\|_1=\sum_v|\nu(v)-\nu'(v)|\).
The preceding payment bounds control the gain per unit of changed
probability. If the total perturbation is small enough, that gain cannot
offset the revenue lost on the original distribution. The resulting
strict penalty also rules out a different optimum with positive old loss.

\begin{theorem}\label{thm:bundle-stability}
Let \(F\) be finite on \([a,a+1]\), with positive mass at both endpoints.
Suppose its product support satisfies~\eqref{eq:threshold-separation},
and \(F_0\) satisfies~\eqref{eq:critical}. If
\begin{equation}\label{eq:stability-margin}
 \kappa=1-mB_m\|F-F_0\|_1>0,
\end{equation}
then every normalized IC, IR mechanism \(M\) on \(\supp(F)^m\), with
\(z\) and \(\delta\) defined above for this mechanism, satisfies
\begin{equation}\label{eq:stability-inequality}
 \Rev_{F^{\otimes m}}(M)
 \le\Rev_{F^{\otimes m}}(M_z)-\kappa\delta.
\end{equation}
Consequently,
\begin{equation}\label{eq:stability-optimum}
 \OPT(F;1,m)=\max_{0\le z\le1}\Rev_{F^{\otimes m}}(M_z),
\end{equation}
and every optimal mechanism agrees with some \(M_z\) at all vertices.
\end{theorem}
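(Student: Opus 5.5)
The approach is a direct comparison of revenues under \(F^{\otimes m}\) and \(F_0^{\otimes m}\), organized around the old loss \(\delta\). Fix a normalized IC, IR mechanism \(M\) on \(\supp(F)^m\). Every vertex \(v^A\) lies in \(\mathcal T=\supp(F)^m\) because \(F\) puts mass on both endpoints, so Proposition~\ref{prop:revenue-loss} and Lemma~\ref{lem:new-type-payment} apply to \(M\). Let \(\mu=F^{\otimes m}\) and \(\mu_0=F_0^{\otimes m}\), both viewed as measures on the finite set \(\mathcal T\), and write
\[
 \Rev_\mu(M)-\Rev_\mu(M_z)
 =\sum_{v}\mu_0(v)\bigl(s(v)-s_z(v)\bigr)
 +\sum_v\bigl(\mu(v)-\mu_0(v)\bigr)\bigl(s(v)-s_z(v)\bigr).
\]
The first sum equals \(\Rev_{\mu_0}(M)-\Rev_{\mu_0}(M_z)=(ma-\delta)-ma=-\delta\), using \eqref{eq:critical-revenue}. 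It also uses that \(M_z\) is IC and IR on the whole cube, so it is a legitimate mechanism on \(\mathcal T\). The whole proof then reduces to bounding the perturbation sum by \((1-\kappa)\delta=mB_m\|F-F_0\|_1\,\delta\).

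For the perturbation term, bound each \(|s(v)-s_z(v)|\) where \(\mu(v)\neq\mu_0(v)\). On the vertices, \eqref{eq:vertex-payment-control} gives a two-sided bound \((V+1)C_m\delta\le B_m\delta\), since \((m+1)/\gamma\ge1\). Off the vertices \(\mu_0(v)=0\), so the term is \(\mu(v)(s(v)-s_z(v))\le\mu(v)B_m\delta\) by \eqref{eq:new-type-payment}; only the upper bound is needed there because the coefficient is nonnegative. Hence the perturbation sum is at most \(B_m\delta\,\|\mu-\mu_0\|_1\). Subadditivity of the \(\ell_1\) distance for product measures (a telescoping hybrid argument over coordinates) gives \(\|\mu-\mu_0\|_1\le m\|F-F_0\|_1\). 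Combining the two sums yields \eqref{eq:stability-inequality}.

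The consequences follow quickly. Since \(\kappa>0\) and \(\delta\ge0\) by Proposition~\ref{prop:revenue-loss}, every normalized mechanism earns at most \(\max_z\Rev_\mu(M_z)\). The normalization \(u(\varnothing)=0\) is without loss: lowering every payment by \(u(\varnothing)\ge0\) preserves IC and IR and does not decrease revenue, as justified in Appendix~\ref{app:single-buyer-certificate}. The maximum over \(z\) is attained because revenue is continuous in \(z\) on the finite support, away from the threshold by \eqref{eq:threshold-separation}. This gives \eqref{eq:stability-optimum}. For an optimal \(M\), the inequality forces \(\kappa\delta\le0\), hence \(\delta=0\), and then Lemma~\ref{lem:new-type-payment} says \(M\) agrees with \(M_z\) at all vertices. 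One point needs care here: optimality must refer to unnormalized mechanisms. If \(u(\varnothing)>0\), normalization strictly raises revenue whenever the set of types with positive mass is nonempty, which it is, so every optimum is already normalized.

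The main obstacle is bookkeeping rather than depth. The key points are that the vertex bound is two-sided while the new-type bound is one-sided, and that this matches the signs of \(\mu-\mu_0\) exactly. The nonvertex types have \(\mu_0=0\); vertices may gain or lose mass. I would also double-check the constant, namely that the factor \(m\) in \(\kappa\) is precisely the product-measure \(\ell_1\) bound and that \(B_m\) dominates \((V+1)C_m\).
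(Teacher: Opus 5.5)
Your proposal is correct and follows the paper's proof essentially step for step. It uses the same split into the base term \(-\delta\) and a perturbation term, the two-sided vertex bound against the one-sided new-type bound (matched to the signs of \(F^{\otimes m}-F_0^{\otimes m}\)), the product \(\ell_1\) bound, and the same deduction of \(\delta=0\) at a (necessarily normalized) optimum; the paper simply notes that \(\Rev_{F^{\otimes m}}(M_z)\) is affine in \(z\), which gives attainment of the maximum directly. One verbal slip: normalization \emph{raises} every payment by \(u(\varnothing)\ge0\) rather than lowering it, as your later remark that normalization strictly increases revenue already presupposes.
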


The revenue-loss identity and payment estimates are proved in
Appendices~\ref{app:single-buyer-certificate}--\ref{app:single-buyer-estimates};
the stability argument is in Appendix~\ref{app:single-buyer-stability}.

\subsection{How the perturbation reveals the count}

The encoding adds rare intermediate values near the point where their
sum reaches the bundle threshold. Their tail probability makes the
revenue difference between the two endpoint prices depend on the
source count. A half-integer comparison prevents a tie. The stability
theorem then forces every optimum to make the corresponding zero--one
choice at the lowest type. The paired-marker lemma in
Appendix~\ref{app:hardness-common} turns the common alphabet into the
required subset-count encoding. Appendix~\ref{app:single-buyer-encoding}
gives the rational construction; Appendix~\ref{app:single-buyer-conclusion}
proves the mechanism and revenue reductions, including their precision
and encoding requirements.

\section{Revenue Hardness with Two Items}
\label{sec:two-items}

We now fix two items and let the buyer count and support size be input
parameters. Each buyer has only \(K^2\) types; the difficulty instead lies
in satisfying supply across buyers' conditional menus. This section
concerns randomized DSIC, ex-post IR mechanisms under independent draws
from one common marginal. The proof encodes a count in the largest joint
capacity of truthful conditional menus; it does not require one buyer to
have exponentially many explicitly listed types.

\begin{theorem}\label{thm:ti-hardness}
Computing \(\OPT(F;n,2)\) is \(\#\mathrm P\)-hard under deterministic
polynomial-time Turing reductions, even when \(n\) is written in unary
and \(F\) is an explicitly listed rational distribution on \((0,1]\).
More precisely, such a reduction from \(\#\textnormal{\textsc{SubsetSum}}\)
uses only queries of the form \(\OPT(F;n,2)\ge q\), with rational \(q\).
Consequently, a revenue approximation algorithm running in time
polynomial in the input length and \(\log(1/\epsilon)\), with additive
error at most \(\epsilon\), would compute every \(\#\mathrm P\) function
in polynomial time.
\end{theorem}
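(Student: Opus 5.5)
We would reduce from \#\textsc{SubsetSum}: given positive integers \(w_1,\ldots,w_N\) and a target \(W\), count the subsets with sum exactly \(W\). Following the outline in the introduction, the reduction has three layers.

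\textbf{Step 1: a per-menu revenue bound.} Fix a buyer \(i\) and opponents' reports \(t\). By DSIC and ex-post IR, buyer \(i\)'s outcomes under \(t\) form an IC, IR menu over \(T=\supp(F)^2\). Let its two capacities be \(\kappa_1(t)=\max_{w}x_{i1}(w,t)\) and \(\kappa_2(t)=\max_w x_{i2}(w,t)\). We would design a common prior \(F\) as follows. Most mass sits on a few ``base'' values, and there is a rare very high value \(h\). For each capacity pair we prove a conditional revenue bound \(R(\kappa_1,\kappa_2)\) by a flow argument in the spirit of Lemma~\ref{lem:binary-dual}. The bound has two parts. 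Revenue from base types grows roughly linearly in the capacities. The type with value \(h\) may imitate the report that attains \(\kappa_j\), so she must be left utility at least about \((h-\text{base})\kappa_j\) minus slack. Because \(h\) has tiny probability, this yields a concave penalty. The key claim is that every menu attaining \(R\) is determined by \((\kappa_1,\kappa_2)\); we would prove this by showing that all positive-flow IC constraints are tight and that the virtual values are strictly signed. This exact characterization is the step we expect to be hardest, since any extra optimal menu could change the global optimum. To get it, we would first try to make the flow certificate strictly complementary: every virtual value is nonzero, and every edge has strictly positive weight. Then any menu attaining \(R\) must satisfy a full-rank system of equalities.

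\textbf{Step 2: the supply coupling.} Summing over buyers, the expected revenue is at most \(\E_v\sum_i R(\kappa(v_{-i}))\), subject to joint feasibility. Since each item has one copy, whenever opponents' reports are arbitrary, the capacities facing different buyers at a common profile must be realized without overallocation. We would set up the base values so that each buyer's type encodes an integer weight \(w_k\), realized through the paired-marker lemma of Appendix~\ref{app:hardness-common} using the common alphabet. The buyers then fall into \(N\) pairs, and the independent choices of marker values mimic a random subset. The resulting optimum becomes an explicit affine function of \(\E\bigl[\pos{\tau-\sum_k w_k X_k}\bigr]\), where the \(X_k\) are independent Bernoulli variables given by \(F\) and the threshold \(\tau\) is chosen by an adjustable support value. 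We must verify that this optimum is attained by a real DSIC mechanism, so that the bound is exact; we would check this by constructing conditional menus with the characterized capacities and confirming feasibility pointwise.

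\textbf{Step 3: extracting the count.} The function \(\tau\mapsto\E\pos{\tau-Z}\) is piecewise linear with integer kinks. For \(Z=\sum_k w_kX_k\), the second difference at \(W\),
\[
 \E\pos{W+1-Z}-2\E\pos{W-Z}+\E\pos{W-1-Z},
\]
equals \(\Pr[Z=W]\). With uniform Bernoulli parameters, or known rational ones after reweighting, this probability gives the subset count. We therefore build three instances at thresholds \(W-1,W,W+1\) and query their optima. All quantities are rationals of polynomial bit length with a bounded denominator, so each optimum is found exactly by binary search using queries \(\OPT(F;n,2)\ge q\). The same denominator bound yields the final claim: an additive \(\epsilon\)-approximation with \(\epsilon\) below half the minimal gap, running in time polynomial in \(\log(1/\epsilon)\), determines every optimum exactly, so it would compute every \(\#\mathrm P\) function in polynomial time.
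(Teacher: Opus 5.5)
There is a genuine gap in your Step~2, and the whole reduction rests on that step.

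\textbf{The capacity coupling is not a valid upper bound.} You impose supply on capacities. But capacities are maxima over a buyer's own reports, different buyers' maxima need not occur at the same profile, and supply constrains only actual allocations. Capacities become linearly supply-coupled only for menus whose allocations are determined by their capacities, i.e.\ equality menus, where buyer $i$ receives $C_{t_i}w_i(t_{-i})$. Characterizing those menus (your Step~1) does not force an optimal mechanism to use them. A menu that gives up a little conditional revenue but allocates less can free supply for other buyers.

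The paper needs a separate face-isolation argument for this (the first part of Lemma~\ref{lem:ti-transfer}). On the single-buyer polytope, Cramer's rule gives a uniform gap $\Delta$ at every vertex off the face $R=\gamma Y$. Replacing such vertices by the zero menu lowers allocations, so supply is preserved, and loses at most $2G/\Delta$ of capacity. With $t=\gamma-\sigma$ and $\sigma\le\Delta/4$, this yields $\max_M\{R-tY\}=\sigma Y_n$. The rare high atom then only realizes the linear term $-tY$, and only up to an error $C_H/H$. So your claim that the optimum of the instance containing $h$ is exactly affine in the encoded statistic is also unsupported. The paper recovers $Y_n$ only to a precision $e$ and rounds only the final count.

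\textbf{The positive part is asserted, not derived.} You claim the optimum depends affinely on $\E\pos{\tau-\sum_kw_kX_k}$, but this model cannot come from a common prior. Buyers are exchangeable, so they cannot fall into pairs carrying distinct weights $w_k$ with independent Bernoulli indicators. What is available is a sum of i.i.d.\ letter draws, one per opponent. The paired-marker lemma converts its exact-hit probability into a subset count. You still need a capacity program whose optimum contains a positive part of such a sum, together with a reason it does.

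In the paper this comes from a specific geometry (Lemma~\ref{lem:ti-capacity}). Equality capacities lie in the cone $\mathcal C_\kappa$ and act through the maps $(u-\beta_jv,0)$, $(0,v-\beta_ju)$, $(u,v)$. The optimum is the order-averaged residual-capacity table. That table lies in the cone except when all opponents share one orientation. There the cone forces a capacity reduction of $\frac1{n\kappa}\pos{(n-1)b_0-\sum_\nu\beta_{j_\nu}}$, and the slopes $\beta_j=b_0+\delta(W_j-T/s)$ encode the letters.

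Realizing exactly this menu family with one scalar prior (Lemma~\ref{ti-lem-realization}) is the technical core. Your hoped-for strictly complementary certificate leaves it open. The realization also brings constraints your Step~3 ignores:
\begin{itemize}
\item Class probabilities are geometric, so solutions must be separated by cardinality (the offset $kC$).
\item There is an exit class of types receiving nothing, which the paper removes by binomial inversion over buyer counts $1,\ldots,n$.
\end{itemize}
Your second-difference extraction itself does match the paper.
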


The reduction encodes a subset-sum instance in an integer-valued random
variable \(W\). For \(s\) independent copies, the quantity to recover is
\[
 \mathcal H(T)=\E\pos{T-\sum_{\nu=1}^sW^{(\nu)}}.
\]
Its second difference at an integer \(D\) is
\(\Pr[\sum_\nu W^{(\nu)}=D]\), from which the encoding recovers the
subset count. We first show how revenue queries recover conditional-menu
capacities, then construct a capacity optimum containing \(\mathcal H\),
and finally realize its menus by one common scalar prior.
Appendix~\ref{app:two-items} proves the technical lemmas and verifies
the construction's parameters.

\subsection{Reading conditional-menu capacities from revenue}
Fix a rational marginal \(F\) on \(K\) positive values, put
\(b=\max\supp(F)\), and let \(\gamma=\OPT(F;1,2)/2>0\).
The construction below supplies \(\gamma\) explicitly. For a mechanism
\(M\), define
\begin{equation}\label{eq:ti-capacity-statistic}
 y_{ij}(v_{-i})=\max_{v_i\in\supp(F)^2}x_{ij}(v_i,v_{-i}),\qquad
 Y(M)=\sum_{i,j}\E_{v_{-i}}y_{ij}(v_{-i}),\qquad
 R(M)=\Rev_{F^{\otimes2n}}(M).
\end{equation}
A capacity maximizes allocation over a buyer's own reports, with opponents
fixed. Different buyers' maxima need not occur at the same full profile,
so the two-item supply bound does not directly bound their sum. For these independent i.i.d.\ values, conditional-menu revenue obeys
\(R(M)\le\gamma Y(M)\); Appendix~\ref{app:two-items-transfer}
proves the bound by symmetrizing a single-buyer menu.
Set
\begin{equation}\label{eq:ti-face-value}
 Y_n=\max\{Y(M):R(M)=\gamma Y(M)\}.
\end{equation}
This maximum exists after the usual lowest-type utility normalization.
Its feasible set is nonempty by serving one buyer optimally.
On this tight face, \(R-tY=(\gamma-t)Y\); choosing \(t\) just below
\(\gamma\) selects its largest capacity. The next lemma realizes this
objective through a rare high value, with a finite error bound.

The first obstacle is to exclude menus that lose conditional revenue
but offer extra capacity. Their positive revenue gap supplies a penalty;
discarding such menu components only decreases allocations and hence
preserves supply. The second obstacle is to expose capacity through an
ordinary revenue query. A rare high type can imitate whichever base
report receives the most of its valuable item, so its information rent
records that maximum. A matching high-item offer controls the error.
This role of a rare high value is related to~\cite[Section~6.2]{kmsw2019}.

\begin{lemma}\label{lem:ti-transfer}
Given \(F,\gamma\), one can compute positive rationals \(\sigma,t\), of
polynomial bit length and independent of \(n\), with \(t=\gamma-\sigma\),
such that
\begin{equation}\label{eq:ti-face-isolation}
 \max_M\{R(M)-tY(M)\}=\sigma Y_n.
\end{equation}
For \(N=2n\) and \(H>\max\{b,t\}\), put
\begin{equation}\label{eq:ti-high-atom}
 F_H=(1-t/H)F+(t/H)\delta_H,\qquad
 C_H=5Nbt+N(N-1)t^2.
\end{equation}
Then
\begin{equation}\label{eq:ti-high-atom-error}
 \left|\OPT(F_H;n,2)-Nt-\sigma Y_n\right|\le C_H/H.
\end{equation}
For \(0<e\le1\), polynomially many rational revenue comparisons determine
\(Y_n\) within \(e\), using inputs of length polynomial in that of
\((F,\gamma,n)\) and \(\log(1/e)\), with every value in \((0,1]\).
\end{lemma}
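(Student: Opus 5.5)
The proof has three parts: the face-isolation identity \eqref{eq:ti-face-isolation}, the high-atom estimate \eqref{eq:ti-high-atom-error}, and the query procedure.

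\emph{Face isolation.} Write \(R(M)-tY(M)=(\gamma-t)Y(M)-(\gamma Y(M)-R(M))=\sigma Y(M)-g(M)\), where the gap \(g(M)=\gamma Y(M)-R(M)\ge0\) comes from the transfer bound \(R\le\gamma Y\). On the tight face \(g=0\), so the maximum is at least \(\sigma Y_n\). For the reverse inequality, I would decompose \(M\) by conditional menus: for each buyer \(i\) and each \(v_{-i}\) there is a single-buyer menu whose revenue is at most \(\gamma\sum_j y_{ij}(v_{-i})\). Since the single-buyer prior has finitely many types, the set of normalized single-buyer menus is a polytope. Its tight face \(\{\text{rev}=\gamma\cdot\text{capacity}\}\) is then a face of that polytope, so there is a rational \(\sigma>0\), computable from an LP over the \(K^2\) types and independent of \(n\), such that every single-buyer menu \(\mu\) satisfies
\[
 \text{rev}(\mu)-t\,\text{cap}(\mu)\le\sigma\,\text{cap}(\mu^\ast)
\]
for some tight menu \(\mu^\ast\) with allocations pointwise dominated by \(\mu\)'s. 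I would obtain \(\mu^\ast\) by scaling down or discarding components, which is the ``discarding preserves supply'' step. Replacing every conditional menu by its dominated tight version keeps feasibility, because allocations only decrease. It also keeps DSIC, because menus are replaced one fixed \(v_{-i}\) at a time. This gives \(\max\{R-tY\}\le\sigma Y_n\). The main obstacle is proving that this dominated tight replacement exists with a uniform \(\sigma\). My first attempt would be a vertex-enumeration argument: at each vertex of the single-buyer polytope that is not tight, the gap is bounded below by a rational \(\eta>0\), while capacity is at most \(2\). Choosing \(\sigma<\eta/2\) then makes the loss dominate the capacity gain.

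\emph{High atom.} Under \(F_H\), a buyer with value \(H\) on an item that the others do not value at \(H\) can imitate whichever base report attains \(y_{ij}(v_{-i})\). Truthfulness therefore gives her utility at least about \(H y_{ij}\) minus bounded terms, paid out of her maximal possible value. Alternatively, one can sell that item to her at price \(H\) with residual probability. The probability of this event is \(t/H\), so the net contribution is \(t\bigl(1-y_{ij}(v_{-i})\bigr)\) per \((i,j)\), up to \(O(b/H)\). Summing over \(N=2n\) buyer--item pairs gives \(Nt-tY(M)\). The base profiles, which have probability \((1-t/H)^{N}\), contribute \(R(M)\). Together the revenue is \(R-tY+Nt\). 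Maximizing and applying face isolation gives \(Nt+\sigma Y_n\). For the upper and lower bounds I would track the error terms: terms with two or more high atoms have probability \(\binom N2(t/H)^2\) and carry revenue at most \(H\), which gives the \(N(N-1)t^2/H\) part. The base-probability distortion and the bounded utility terms give \(5Nbt/H\).

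\emph{Queries.} Take \(H\) polynomially large so that \(C_H/H\le\sigma e/2\). Rescale all values by \(1/H\) so they lie in \((0,1]\); revenue scales linearly. Then binary-search \(\OPT(F_H;n,2)\ge q\) over \(O(\log(1/e))\) rational thresholds. This determines \(Y_n\) within \(e\).
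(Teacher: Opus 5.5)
Your three-part outline matches the paper's proof: discard bad vertices of a single-buyer polytope, bound the high atom by imitation above and an augmented offer below, then rescale and bisect. But two steps fail as written.

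\emph{First gap: face isolation.} Capacity \(\sum_j\max_v x_j(v)\) is a maximum. So on the polytope of normalized menus the gap \(g=\gamma\cdot\mathrm{cap}-\mathrm{rev}\) is convex rather than linear, and your claim that the tight set is a face is unjustified. For a vertex decomposition \(\mu=\sum_k\lambda_k\mu_k\) you only get \(g(\mu)\le\sum_k\lambda_k g(\mu_k)\), which is the wrong direction. The inequality \(\mathrm{cap}(\mu)\le\sum_k\lambda_k\mathrm{cap}(\mu_k)\) can be strict when components attain their maxima at different types. Hence the weight of the discarded vertices is not bounded by \(g(\mu)/\eta\), and the capacity you throw away is not paid for by the gap.

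The missing device is to make capacity a variable: add \(u_1,u_2\) with \(0\le x_j(v)\le u_j\le1\). Then \(g=\gamma(u_1+u_2)-\sum_v f(v)s(v)\) is linear and nonnegative on the lifted polytope, so \(g(q)=\sum_k\lambda_k g(q_k)\). Replacing the positive-gap vertices by the zero menu then lowers \(u_1+u_2\) by at most \(2g(q)/\Delta\). At a zero-gap point the \(u_j\) equal the actual maxima, because \(\mathrm{rev}\le\gamma\cdot\mathrm{cap}\le\gamma(u_1+u_2)\) and \(\gamma>0\). This is what lets you compare the replaced mechanism with \(Y_n\). Obtain \(\Delta\) from a Cramer's-rule bound on the integer coefficients, not by enumerating vertices, which are exponentially many in \(K^2\). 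Take \(\sigma=\min\{\gamma/2,\Delta/4\}\) so that \(t>0\) as well.

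\emph{Second gap: the lower-bound mechanism.} Suppose ``sell that item at price \(H\) with residual probability'' means a separate lottery giving item \(j\) with probability \(1-y_{ij}\) at unit price \(H\). Then the high type gets zero surplus from it. Imitating the base report attaining \(y_{ij}\) instead gives her at least \(Hy_{ij}-2b\), so she imitates and pays at most \(2b\). You need an offer that leaves her this rent. One choice is item \(j\) outright at price \(H(1-y_{ij})-2b\) whenever this is at least \(b\). Base types then get nonpositive utility from it, and it beats every old option, each worth at most \(Hy_{ij}+b\) to her. When the offer is absent, \(H(1-y_{ij})<3b\). Another choice is a top-up of each old option to probability one at unit price \(H\), with ties broken toward larger payment.

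You must also define the mechanism off base profiles. If exactly one buyer reports a value \(H\), serve only her from her augmented menu; if two or more do, serve nobody. Each conditional menu is then independent of the buyer's own report, so DSIC and supply hold.

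Two smaller corrections. Two high coordinates can produce revenue \(2H\), not \(H\); that is where the \(N(N-1)t^2/H\) term comes from. The bisection needs \(O(\log H+\log(1/\sigma)+\log(1/e))\) comparisons rather than \(O(\log(1/e))\), which is still polynomial.
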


The proof and finite-precision extraction are in
Appendix~\ref{app:two-items-transfer}.

\subsection{A supply optimum containing the encoded statistic}

Lemma~\ref{lem:ti-transfer} extracts the total capacity of revenue-tight
menus. We now choose allocation maps whose supply optimum contains the
positive-part expectation \(\mathcal H\).
For \(n\ge3\), put
\[
 b_0=\frac14,\qquad \kappa=b_0+\frac{1-b_0}{n},\qquad
 \varepsilon_*=\frac{3}{8n(n-1)^2},\qquad
 \mathcal C_\kappa=\{(u,v)\in[0,1]^2:u\ge\kappa v,\ v\ge\kappa u\}.
\]
Let \(\beta_0,\ldots,\beta_J\in[b_0-\varepsilon_*,\kappa]\).
Report classes \(\mathsf X_j,\mathsf Y_j\) have probabilities \(\alpha_j>0\) each;
the remaining class \(\mathsf B\) has probability \(1-\alpha\), where
\(\alpha=2\sum_j\alpha_j\le1/100\). Their allocation maps are
\begin{equation}\label{eq:ti-supply-maps}
 C_{\mathsf B}(u,v)=(u,v),\qquad C_{\mathsf X_j}(u,v)=(u-\beta_jv,0),\qquad
 C_{\mathsf Y_j}(u,v)=(0,v-\beta_ju).
\end{equation}
Let \(Y_n^+\) maximize \(\sum_i\E[u_i+v_i]\) over capacities
\(w_i(t_{-i})=(u_i,v_i)\in\mathcal C_\kappa\), subject to
\(\sum_iC_{t_i}w_i(t_{-i})\le\one\) at every profile.
Taking \(n=s+1\), the encoding below will give
\(Y_n^+(T)=\mathcal A(T)-c\mathcal H(T)\), where \(\mathcal A(T)\)
is explicitly computable and \(c>0\) is known. The following formula
provides the required positive part; the final proof specifies the encoding.

The formula separates a baseline from a correction. Averaging residual
capacities over buyer orders gives the baseline. The capacity cone only
needs correction when all opponents have the same orientation, producing
a positive-part term. Keeping these cases explicit is what allows the
later encoding to recover an exact-sum probability.

\begin{lemma}\label{lem:ti-capacity}
With \(\lambda=\sum_j\alpha_j(1+\beta_j)\),
\begin{equation}\label{eq:ti-capacity-value}
 Y_n^+=2\sum_{r=0}^{n-1}\lambda^r
 -\frac2\kappa\sum_{j_1,\ldots,j_{n-1}=0}^J
 \alpha_{j_1}\cdots\alpha_{j_{n-1}}
 \pos{(n-1)b_0-\sum_{\nu=1}^{n-1}\beta_{j_\nu}}.
\end{equation}
\end{lemma}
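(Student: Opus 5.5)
The plan is to prove \eqref{eq:ti-capacity-value} by matching a primal construction with a dual certificate, after first separating the problem buyer by buyer. With \(t_{-i}\) fixed, buyer \(i\) chooses one capacity \(w_i(t_{-i})=(u_i,v_i)\in\mathcal C_\kappa\). Supply must then hold at every full profile, and in particular for every realisation of \(t_i\). So the constraints at a profile couple the \(n\) capacities \(w_1(t_{-1}),\dots,w_n(t_{-n})\) through the maps \(C_{t_i}\). My first step is to reduce to capacities that depend on \(t_{-i}\) only through an ordering of buyers and the multiset of opponents' classes. This reduction uses symmetry of the objective and averaging over buyer permutations; the program is linear, so averaging preserves feasibility and value.

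For the primal, I would use a sequential "residual capacity" rule. Fix a uniformly random priority order. Buyer \(i\) takes the residual of \(\one\) left after the earlier buyers' worst-case consumptions, maximised over their own classes. Class \(\mathsf B\) consumes \((u,v)\), and \(\mathsf X_j\) consumes \((u-\beta_jv,0)\). Given that buyer \(i\) has \(r\) earlier buyers, the expected remaining total \(u+v\) is governed by the factor \(\lambda=\sum_j\alpha_j(1+\beta_j)\) per earlier buyer. The reason is that only an opponent in an \(\mathsf X\) or \(\mathsf Y\) class leaves usable room, namely a \(1+\beta_j\) share, in the direction orthogonal to its own. Summing over positions \(r=0,\dots,n-1\) gives the baseline \(2\sum_r\lambda^r\).

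The cone constraint \(u\ge\kappa v\), \(v\ge\kappa u\) binds only when every opponent has the same orientation, say all in \(\mathsf X\) classes. In that case the residual would want \(v\) large and \(u\) small, and the cone forces \(u\ge\kappa v\). The shortfall is then \(\frac1\kappa\bigl[(n-1)b_0-\sum_\nu\beta_{j_\nu}\bigr]_+\); the factor \(2\) accounts for the two symmetric orientations. The hypotheses \(\beta_j\in[b_0-\varepsilon_*,\kappa]\) and \(\alpha\le1/100\) are what should guarantee that mixed-orientation profiles never hit the cone boundary.

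For the upper bound, I would take dual multipliers on the supply constraints at each profile and on the cone constraints. These would be constructed to mirror the primal: uniform weights over orders for supply, and cone multipliers supported on single-orientation opponent profiles, proportional to \(\alpha_{j_1}\cdots\alpha_{j_{n-1}}/\kappa\). Complementary slackness should then confirm that the primal matches.

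The main obstacle is this dual certificate. The \(\max\) over a buyer's own reports makes supply at a profile interact nonlinearly across buyers. I would first verify the \(n=3\) case by hand to calibrate the multipliers. After that I would check that \(\varepsilon_*=3/(8n(n-1)^2)\) is exactly the slack needed for dual feasibility when the \(\beta_j\) dip below \(b_0\).
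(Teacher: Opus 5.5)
Your primal is the paper's own construction. You average over uniformly random buyer orders the residual \(r_k=Z_{t_{\pi(k)}}r_{k-1}\), with \(Z_t=I-C_t\) and \(r_0=\one\). Independence and \(\E[Z_t]\one=\lambda\one\) give the baseline \(2\sum_r\lambda^r\). Only contexts whose opponents all share one orientation leave \(\mathcal C_\kappa\); there \(w_i^0=(U_i,1)\) with \(U_i=(1+\sum_{j\ne i}\widehat\beta_j)/n\), and lowering \(v\) to \(\min\{1,U_i/\kappa\}\) costs \(\frac1{n\kappa}\pos{(n-1)b_0-\sum_{j\ne i}\widehat\beta_j}\).

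Two points in the primal need repair.
\begin{itemize}
\item As written, you take the residual after the predecessors' worst-case consumption over their own classes. Since \(C_{\mathsf B}\) consumes the whole capacity, that leaves every non-first buyer with nothing. The residual must use the predecessors' actual classes, which are part of \(t_{-i}\). Feasibility then follows from \(\sum_kC_{t_{\pi(k)}}r_{k-1}=\one-r_n\le\one\) for each order.
\item The cone correction does not automatically preserve supply, because \(C_{\mathsf X_j}(u,v)=(u-\beta_jv,0)\) is decreasing in \(v\). In an all-\(\mathsf X\) profile every buyer is corrected, and item-1 consumption rises by \(\sum_i\widehat\beta_id_i\), where \(d_i=\pos{1-U_i/\kappa}\le(n-1)\varepsilon_*/(n\kappa)\). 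Starting from the item-1 total \(1-\frac1n\sum_i\widehat\beta_i\), the new total is at most \(1-b_0+n\varepsilon_*\le1\). In every other profile, \(n\ge3\) leaves at most one corrected buyer, and her own report is \(\mathsf B\) or of the opposite orientation, so her allocation only decreases.
\end{itemize}
It is also \(\varepsilon_*\), not \(\alpha\le1/100\), that keeps mixed contexts inside the cone.

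The genuine gap is the upper bound. You construct no certificate: ``uniform weights over orders'' does not specify multipliers, and dual feasibility is never checked. The obstacle you name is also not there. The program defining \(Y_n^+\) is a linear program in the variables \(w_i(t_{-i})\), with no maximum over own reports.

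The missing idea is a recursion through profiles whose remaining buyers all report \(\mathsf B\). After averaging over buyer names, let \(w_k(t_1,\ldots,t_k)\) be the capacity facing \(k\) non-\(\mathsf B\) opponents.
\begin{itemize}
\item Supply at the profile with those \(k\) reports and \(n-k\) reports \(\mathsf B\) gives \(w_k\le U_k(w)=\bigl(\one-\sum_jC_{t_j}w_{k-1}(t_{-j})\bigr)/(n-k)\).
\item Hence \(w_k\le G(U_k(w))\), where \(G\) maps a point to the largest cone point below it and is \((1+\kappa^{-1})\)-Lipschitz in \(\ell_1\). The corrected greedy table satisfies \(w_k^*=G(U_k(w^*))\).
\item Let \(D_k=\E\|w_k-w_k^*\|_1\) and let \(g_k\) be the level-\(k\) objective deficit. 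Then \(D_k\le g_k+AkD_{k-1}/(n-k)\) with \(A\le10\).
\item Put \(\pi_k=\binom{n-1}{k}\alpha^k(1-\alpha)^{n-1-k}\). The identity \(\pi_{k+1}(k+1)/(n-k-1)=\pi_k\alpha/(1-\alpha)\) yields \(\sum_k\pi_kg_k\ge0\), because \(A\alpha/(1-\alpha)\le10/99<1\).
\end{itemize}
This contraction is where \(\alpha\le1/100\) is used. Without it, or an explicit dual encoding the same recursion, your argument yields only the lower bound.
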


Appendix~\ref{app:two-items-capacity} proves feasibility of the corrected
table and the matching upper bound.

\subsection{Realizing every equality menu with a common prior}
\label{ti-subsec-realization}
The next lemma makes these allocation maps arise from every conditional
menu attaining its revenue bound. Characterizing all such menus is
essential: an additional equality menu could change the capacity optimum.

The construction starts with three scalar values and successively splits
its top and bottom atoms to introduce the prescribed slopes. A symmetric
dual certificate has strict signs on all forced allocation coordinates;
its positive-flow equalities determine the remaining coordinates and
payments, including on the cone boundary.

\begin{lemma}\label{ti-lem-realization}
Let \(J\ge1\), and let rational \(\kappa,\beta_1,\ldots,\beta_J\) satisfy
\(1/4<\kappa\le7/16\), \(\kappa>\beta_1>\cdots>\beta_J\), and
\(6/25\le\beta_j\le13/50\). One can construct a rational scalar prior
\(F\) on \(2J+3\) positive values and a rational \(\gamma>0\) in
polynomial time in \(J\) and the total parameter bit length.
Every single-buyer menu with maximum allocations \((u,v)\) has revenue
at most \(\gamma(u+v)\). Equality is possible exactly for
\((u,v)\in\mathcal C_\kappa\); for each such pair, all expected
allocations and payments at equality are uniquely determined.
The type space has a fixed partition, independent of \((u,v)\), into
classes \(\mathsf Z,\mathsf B,\mathsf X_j,\mathsf Y_j\) for
\(0\le j\le J\), with \(C_{\mathsf Z}=0\), the maps in
\eqref{eq:ti-supply-maps}, and \(\beta_0=\kappa\).

There are positive rational parameters \(z,\alpha_0,\alpha_*,R\),
computable from \(J\) alone, with \(z,R<1\), such that
\begin{equation}\label{ti-eq-realization-probabilities}
 \Pr(\mathsf Z)=1-z,\qquad
 \Pr(\mathsf X_0)=\Pr(\mathsf Y_0)=z\alpha_0,\qquad
 \Pr(\mathsf X_j)=\Pr(\mathsf Y_j)=z\alpha_*R^j\quad(j\ge1).
\end{equation}
Conditioned on non-exit reports, write
\begin{equation}\label{eq:ti-count-class-probabilities}
 \alpha_j=\alpha_*R^j\quad(1\le j\le J).
\end{equation}
These probabilities satisfy \(2\sum_{j=0}^J\alpha_j<1/100\).
The equality menus are specified by affine formulas in \((u,v)\) with
explicitly computable rational coefficients of polynomial bit length.
The prior, probability parameters, and \(\gamma\) have the same bit bound.
\end{lemma}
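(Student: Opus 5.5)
We would prove Lemma~\ref{ti-lem-realization} by an explicit construction paired with a single-buyer dual certificate, in the style of Lemma~\ref{lem:binary-dual}. There are two items, and each buyer's type is a pair of scalar values from \(F\). Fix the maximum allocations \((u,v)\) of a menu. The weighted-incentive-constraint argument then gives revenue at most \(\E[\sum_j \Phi_j x_j]\). Here \(x_j(w)\le u\) or \(\le v\) coordinatewise. We would also add extra multipliers for these capacity constraints, since the bound depends on \((u,v)\) only through those caps. The goal is a flow whose virtual values satisfy \(\Phi\cdot x\le \gamma(u+v)\) pointwise, with equality exactly on \(\mathcal C_\kappa\).

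\textbf{Step 1: construct the prior.} Start from three scalar values \(\ell<m<h\). On the pair \(\{\ell,m,h\}^2\), a symmetric flow makes the ``both-high'' types carry virtual value \(\gamma\) per unit of capacity on each item. The ``both-low'' types become the exit class \(\mathsf Z\), with nonpositive virtual value and \(C_{\mathsf Z}=0\). The mixed types \(\mathsf B\) receive the full \((u,v)\). Next we split the top and bottom atoms, once for each \(j=0,\ldots,J\), into two nearby atoms. The probabilities are chosen geometric in \(j\), with ratio \(R\), so that the new mixed types \((\text{high}_j,\text{low}_j)\) form the classes \(\mathsf X_j,\mathsf Y_j\). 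For these classes the positive-flow incentive constraint is tight along a direction of slope \(\beta_j\). That tightness forces the allocation \(u-\beta_j v\) on the valued item and \(0\) on the other, matching \eqref{eq:ti-supply-maps}. The class \(\beta_0=\kappa\) is what carves out the cone \(u\ge\kappa v\), \(v\ge\kappa u\): outside the cone, the forced allocation of \(\mathsf X_0\) or \(\mathsf Y_0\) would turn negative. Negative allocations are infeasible, so the bound becomes strict there.

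\textbf{Step 2: verify the certificate.} This means checking flow conservation and nonnegativity, and then strict sign conditions. Every coordinate that is forced to \(0\) or to a cap must have a virtual value strictly below or strictly above the threshold, respectively. Strictness gives uniqueness. Any equality menu must saturate all positive-sign coordinates and vanish on all negative ones. The tightness of positive-flow edges then determines the remaining coordinates and the utilities as affine functions of \((u,v)\). This determination holds on the cone boundary as well, since there some \(\mathsf X_0\) allocation equals \(0\) with tight constraints. Payments follow as value minus utility. Finally, we must exhibit a feasible equality menu for every \((u,v)\in\mathcal C_\kappa\), by writing the affine formulas and checking IC and IR directly. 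By linearity, it suffices to check these at the cone's extreme rays \((1,\kappa)\), \((\kappa,1)\), and at \((1,1)\).

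\textbf{Step 3: bit bounds.} Rationality and polynomial bit length follow because every atom, probability, and flow weight is an explicit rational expression in \(J\), \(\kappa\), and the \(\beta_j\). The parameters \(z,\alpha_0,\alpha_*,R\) depend on \(J\) only. The constraint \(2\sum\alpha_j<1/100\) is arranged by making \(\alpha_*\) and \(\alpha_0\) small.

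\textbf{Main obstacle.} We expect the hardest part to be the simultaneous sign conditions in Step 2. A single flow must keep all \(\mathsf X_j\) virtual values strictly correctly signed across the narrow range \(\beta_j\in[6/25,13/50]\), while the splitting perturbations stay small enough not to disturb the base certificate. We would first handle this by a perturbation argument, in the same way as Theorem~\ref{thm:bundle-stability}. The base three-value certificate has strict slack of order one. Each split would shift virtual values only by \(O(R^j)\), so a sufficiently small \(R\) keeps every sign.
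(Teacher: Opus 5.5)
\textbf{Gap: the certificate extension at each split.} Your outline has the paper's overall structure. You split the top and bottom atoms and use geometric class masses. You build a symmetric IC/IR flow whose strict signs force zero or capped allocations, and tight positive-flow edges fix the rest. Nonnegativity of the $\mathsf X_0,\mathsf Y_0$ allocation $u-\kappa v$ then cuts out $\mathcal C_\kappa$. The step that carries the lemma, however, is extending the certificate at each split, and you replace it with a perturbation heuristic that cannot deliver what is needed.

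Your sign bookkeeping lists only coordinates forced to $0$ or to a cap. The designated coordinate of every lottery type must instead have coefficient \emph{exactly} zero. The bound is $\Rev\le u\sum_t\pos{N_1(t)}+v\sum_t\pos{N_2(t)}$. So if $N_1(t)\neq0$ at a type in $\mathsf X_j$, equality forces $x_1(t)\in\{0,u\}$, which is incompatible with $u-\beta_jv$ for $(u,v)\in\mathcal C_\kappa$ with $v>0$. Small shifts of virtual values do not preserve exact zeros. These zeros are not perturbations of the three-atom certificate at all: they must be created at each split and shown to survive every later split.

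The paper does this by explicit rerouting. Weight $L=\varepsilon r'(c/h-1)$ moves from $D_*\to P$ to the path $D_*\to D_*'\to P'$, with the matching shift from $P\to Q_{\rm L}$ to $P'\to Q_{\rm L}$. This choice gives $N_1(P')=r'\varepsilon c-h(\tau\varepsilon+L)=0$. Weight $\Lambda$ is added on $P\to P'\to Q_{\rm L}$, which makes $N_2(P')$ strictly negative. These edges are tight because the new boundary segment from $P$ to $D_*'$ lies on $y=A_j-\beta_jx$. The paper also checks that ordinary cloning scales the source mass and incoming weights of old lottery types by a common factor in the allocated coordinate, so earlier zeros stay exactly zero.

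\textbf{Your size estimate is also miscalibrated.} At the new types the rerouted flow is of order (mass)$/h$, since $L/f(P')\approx1/h$. The paper's uniform margin there is only $\widehat N_2(P')\le-6h/25$, which is of the order of the atom spacing, not order one. Later splits shift normalized coefficients by up to $\theta\beta hM$ per step, with incoming-flow ratios up to $M=10/d^2$; these shifts are not $O(R^j)$. The paper keeps the accumulated shift $2J\theta hM$ below $h/1000$ only through the hierarchy $h=d/(100H_0)$, $\theta=d^2/(10^6H_0^2)$ of \eqref{ti-eq-realization-parameters}. These bounds are uniform in $\kappa$ and the $\beta_j$, so that $R=\theta/2$ depends on $J$ alone.

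\textbf{Two smaller corrections.} First, your base classes do not match the required partition. The mixed base type $(1,\ell_0)$ is $\mathsf X_0$, while $(c_0,\ell_0)$ is a bundle type, and $\mathsf X_j$ is the whole row $c_{j-1}$ below $\ell_{j-1}$. Only $J$ splits occur, which is what yields $2J+3$ atoms. Second, the revenue bound is an aggregate over types, not a pointwise inequality $\Phi\cdot x\le\gamma(u+v)$. The constant $\gamma$ is identified with $\sum_t\pos{N_k(t)}$ by saturation with the explicit menu.
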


Appendix~\ref{app:two-items-realization} proves the complete equality
characterization, including boundary capacities. The extra exit class
can be eliminated by querying different buyer counts while keeping its
prior and allocation maps fixed. The required inversion, counting
encoding, and exact error bounds are given in
Appendix~\ref{app:two-items-conclusion} and the detailed checks that follow it.

Thus the three lemmas have distinct roles: realize the entire truthful
menu family, evaluate its supply-constrained capacity, and extract that
capacity from revenue. The paired-marker encoding then turns the
positive-part term into the count required by
Theorem~\ref{thm:ti-hardness}; the full reduction is in
Appendix~\ref{app:two-items-conclusion}.

\section{Exact and Approximate Algorithms}
\label{sec:algorithms}
\label{sec:growing-support-approx}

In this section, we consider general finite rational priors and give algorithms
under restrictions on the buyer count, item count, or support size. We first
state the exact algorithms obtained from finite-type linear programming and
symmetry. We then use the full finite-type program as one part of a PTAS that
requires only the item count to be fixed. These algorithmic consequences of
prior work complement the preceding hardness results.

\subsection{Exact optimization under fixed parameters}
\label{sec:fixed-parameters}

The full DSIC linear program keeps allocations and payments at every report
profile and includes every possible deviation. Fixing \(n,m\) makes this
program polynomial in \(K\). Symmetry gives two further cases. With \(n,K\)
fixed, each item has only \(K^n\) possible patterns of buyer reports, so we
record how many items have each pattern. With \(m,K\) fixed, each buyer has
only \(K^m\) types, so we record how many buyers have each type. These tables
still retain the constraints for every fixed opponents' report; they do not
replace DSIC by an interim condition. The reductions specialize the symmetry
methods of Daskalakis and Weinberg~\cite[Remark~1 and Appendix~K]{dw2012}.

\begin{theorem}[Exact optimization under fixed parameters]
\label{thm:fixed-parameters}
\label{prop:fixed-parameters}
\label{lem:full-lp}
For rational \(F\), if any two of \(n,m,K\) are fixed, one can compute
\(\OPT(F;n,m)\) and an optimal randomized DSIC, ex-post IR mechanism in
polynomial time in the remaining parameter and the distribution bit length
\(B\). Its marginal allocations and expected payments can be evaluated
exactly in polynomial time at any report profile. Exact implementation using
fair random bits takes expected polynomial time.

Without fixing any parameter, the full finite-type linear program computes
an optimal mechanism and its exact expected revenue in
\[
 K^{O(m(n+1))}\operatorname{poly}(n,m,K,B)
\]
time, with exact allocation and payment queries within the same bound.
\end{theorem}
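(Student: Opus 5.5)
The plan is to write down the full finite-type linear program and then shrink it by symmetry in each fixed-parameter case. The variables are \(x_{ij}(v)\) and \(s_i(v)\) for every \(v\in T^n\), with constraints \eqref{eq:feasibility}, \eqref{eq:dsic}, \eqref{eq:ir}. The objective is the linear function \(\sum_v\Pr[v]\sum_is_i(v)\), whose coefficients are products of the rational \(p_k\).

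The program has \(O(nmK^{mn})\) variables and \(O(nK^{m(n+1)})\) constraints. Its coefficients have bit length \(\operatorname{poly}(n,m,B)\). Ellipsoid or an exact interior-point method with rounding then returns an optimal basic solution, which has polynomial bit length in the program size. This gives the \(K^{O(m(n+1))}\operatorname{poly}(n,m,K,B)\) bound and settles the case where \(n,m\) are fixed.

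\textbf{Symmetrization.} For the symmetric cases, I would first average any optimal mechanism over the symmetry group \(G\). When \(K,n\) are fixed, take \(G=S_m\), permuting items. When \(K,m\) are fixed, take \(G=S_n\), permuting buyers; the fixed priority orders are broken by this averaging. Because \(F^{\otimes nm}\) is invariant under \(G\) and every constraint family is mapped to itself, the averaged mechanism is feasible, DSIC and ex-post IR, and has the same revenue. So it suffices to optimize over \(G\)-invariant mechanisms.

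\textbf{Item symmetry (\(n,K\) fixed).} An invariant mechanism is determined by an allocation value \(x_i(\text{pattern } \pi, \text{profile class})\), where \(\pi\in[K]^n\) is an item's pattern and the class is the vector of counts of items per pattern. There are \(\binom{m+K^n-1}{K^n-1}=\operatorname{poly}(m)\) such count vectors. Payments \(s_i\) likewise depend only on the class.

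Writing the DSIC constraints in these variables is the delicate step. Fix buyer \(i\), the opponents' report and her true type, and let her deviate to \(w_i\). The opponents fix, for each item, a column in \([K]^{n-1}\). The true and reported values give per-item pairs \((k,k')\). The constraint's value depends only on the counts of items per (opponent column, \(k\), \(k'\)), which is polynomially many data. Each such count table determines the deviation profile class. The utility \(v_i\cdot x_i(w_i,t)\) is then \(\sum\)(count)\(\cdot a_k\cdot x_i(\pi',\text{class})\) over cells, with \(\pi'\) the pattern formed by \(k'\) and the opponent column. So DSIC becomes polynomially many linear constraints, and the same holds for feasibility. The expected objective needs the class probabilities, which are multinomial and hence polynomial-time computable.

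\textbf{Buyer symmetry (\(m,K\) fixed).} Here a profile class is a count vector over the \(K^m\) types, of which there are \(\operatorname{poly}(n)\). Variables are indexed by own type and class. DSIC for a buyer of type \(\tau\) against opponent counts \(c\) compares her own report \(\tau\) with a deviation \(\tau'\), so it involves finitely many constraints per class. Supply feasibility becomes \(\sum_\tau c_\tau x_j(\tau,c)\le1\).

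\textbf{Recovering and implementing the mechanism.} A per-profile query computes the class and looks up the LP value. Each query evaluates in polynomial time, and the LP solution has polynomial bit length. Randomization uses fair bits, with rational marginals. I would realize the allocations by a lottery over deterministic assignments: for each item, choose a recipient with probability \(x_{ij}\), independently across items, which suffices since marginals are all that matter for additive buyers. Sampling a rational probability exactly from fair bits takes an expected constant number of rounds.

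The main obstacle is the item-symmetric DSIC encoding. Deviations can move items across patterns arbitrarily, and one must confirm that the refined cell counts capture every deviation while keeping the constraint count polynomial. This amounts to enumerating contingency tables of fixed dimension \(K^{n-1}\times K\times K\) with total \(m\), which gives \(O(m^{K^{n+1}})\) constraints and is therefore polynomial for fixed \(n,K\).
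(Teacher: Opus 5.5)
Your proposal is correct and matches the paper's proof step by step: the full finite-type LP for fixed $n,m$; averaging over item or buyer permutations; histogram variables; joint (true, deviating) column counts over $K^{n+1}$ cells to retain every DSIC deviation in the item-symmetric case; single-type swaps in the buyer-symmetric case; multinomial objectives; and independent per-item sampling by rejection from fair bits. The paper additionally normalizes the lowest type's utility to zero and imposes payment bounds $a_1\sum_j x_{ij}(v)\le s_i(v)\le m a_K$, which make the LP compact. Your reliance on a basic optimal solution also works, because IR bounds payments above and the zero mechanism is feasible, but you should say so explicitly. Also, in your buyer-symmetric supply constraint the index must be the full profile histogram $h$, with deviations to $h-e_\tau+e_{\tau'}$, not the opponents' counts.
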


The proof in Appendix~\ref{app:tractability} gives the full and compressed
programs, their objectives, and the implementation of their rational solutions.

\subsection{A PTAS for every fixed number of items}
\label{sec:fixed-items-ptas}

We can allow both \(n\) and \(K\) to be unrestricted if we seek approximately
optimal revenue and keep \(m\) fixed. A polynomial-time approximation scheme
(PTAS) computes, for every fixed \(0<\varepsilon<1\), a mechanism earning
at least \((1-\varepsilon)\OPT(F;n,m)\) in polynomial time. Here the incentive
and participation constraints remain exact.

The algorithm has two cases. When \(n\ge100m/\varepsilon^2\), run Myerson's
optimal auction separately on each item: Beyhaghi and Weinberg's competition
bound~\cite{bw2019} shows that these auctions with the original buyers are
nearly optimal. Otherwise \(n\) is bounded in terms of \(m\) and
\(\varepsilon\), so the full-program bound in
Theorem~\ref{thm:fixed-parameters} gives exact optimization in polynomial time
in \(K\). Both cases provide a mechanism and numerical bounds on the same
optimal revenue.

\begin{theorem}\label{thm:fixed-items-ptas}
For rational \(0<\varepsilon\le1\), a mechanism \(M\) satisfying DSIC
and ex-post IR, together with rational numbers \(L,U\), can be computed in
\[
 K^{O(m^2/\varepsilon^2)}
 \operatorname{poly}(n,m,B,\operatorname{bits}(\varepsilon))
\]
time, with
\[
 \Rev(M)=L\le\OPT(F;n,m)\le U\le(1+\varepsilon)L.
\]
Its expected allocations and payments are exactly evaluable within the
same time bound. Thus, for each fixed \(m\), this is a PTAS
for both the mechanism and the optimal revenue value.
\end{theorem}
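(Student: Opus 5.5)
The algorithm splits into two cases according to whether \(n\ge N_\varepsilon:=\lceil 100m/\varepsilon^2\rceil\). In both cases it outputs a DSIC, ex-post IR mechanism \(M\) together with certified rationals \(L=\Rev(M)\) and \(U\ge\OPT(F;n,m)\) satisfying \(U\le(1+\varepsilon)L\).

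\emph{Small-market case.} If \(n<N_\varepsilon\), we solve the full finite-type DSIC linear program of Theorem~\ref{thm:fixed-parameters}. Its running time is \(K^{O(m(n+1))}\operatorname{poly}(n,m,K,B)\), which here is \(K^{O(m^2/\varepsilon^2)}\operatorname{poly}(n,m,B)\). The program returns an exactly optimal mechanism and its exact revenue, so we set \(L=U=\OPT(F;n,m)\). Allocation and payment queries are exact within the same time bound by that theorem.

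\emph{Large-market case.} If \(n\ge N_\varepsilon\), we run Myerson's optimal single-item auction separately on each item. Each item has \(n\) i.i.d.\ bidders with marginal \(F\), so the auction uses ironed virtual values of the \(K\)-point distribution, computable exactly in \(\operatorname{poly}(K,B)\) time. The combined mechanism is DSIC and ex-post IR because buyers are additive and each item's auction is DSIC and ex-post IR in that item's value alone; the allocation is feasible item by item. Its revenue is \(L=m\cdot\Rev_{\mathrm{Mye}}(F;n)\). This quantity is the expectation of the maximum nonnegative ironed virtual value over \(n\) draws, and it can be computed exactly from the \(K\) atoms: \(\Pr[\max\le a_k]=(\sum_{\ell\le k}p_\ell)^n\), so it takes \(O(K)\) terms with rationals of polynomial bit length in \(n\) and \(B\).

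\emph{Upper bound.} For \(U\) we invoke the competition bound of Beyhaghi and Weinberg~\cite{bw2019}. For \(n\) additive buyers with i.i.d.\ item values and \(n\ge Cm/\varepsilon^2\), separate Myerson auctions achieve a \((1-O(\varepsilon))\) fraction of the optimal revenue; this bound applies even to BIC mechanisms, and DSIC optimum \(\le\) BIC optimum. We then set \(U=L/(1-\varepsilon')\) for the appropriate \(\varepsilon'\), chosen so that \(U\le(1+\varepsilon)L\).

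\emph{Main obstacle.} The key step is pinning down the exact form and constants of the~\cite{bw2019} guarantee so that the threshold \(100m/\varepsilon^2\) suffices. Two things must be checked. First, whether their statement compares against the optimum with \(n\) or \(n-\Theta(\cdot)\) buyers; if it is a "separate auctions with \(n\) buyers beat the optimum with fewer buyers" statement, we must convert it, using the monotonicity and concavity-type comparison of Myerson revenue in the buyer count, into a multiplicative guarantee with the same \(n\). Second, whether their discrete-distribution applicability needs any regularity; their bound is distribution-free for independent items, which covers our setting. If the constant does not match, we adjust \(N_\varepsilon\) by a constant factor; this only changes the \(O(\cdot)\) in the exponent.

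\emph{Conclusion.} Finally, rescaling \(\varepsilon\) handles the mismatch between \(1-\varepsilon\) and \(1/(1+\varepsilon)\). For fixed \(m\) and \(\varepsilon\) the running time is polynomial, so the algorithm is a PTAS for both the mechanism and the value.
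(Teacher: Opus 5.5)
Your two-case algorithm is the paper's, and the small-market case (full LP with \(m(n+1)=O(m^2/\varepsilon^2)\)) and the exact computation of \(L=S_n\) from ironed virtual values are fine. Here \(S_N\) denotes the revenue of separate Myerson auctions with \(N\) buyers. The gap is the step that certifies \(U\): you never prove \(\OPT(F;n,m)\le(1+\varepsilon)S_n\). The guarantee you attribute to Beyhaghi--Weinberg (``separate Myerson earns a \(1-O(\varepsilon)\) fraction once \(n\ge Cm/\varepsilon^2\)'') is not the form of the bound the paper uses. That bound is a competition-complexity statement, \(\OPT(F;n,m)\le S_{n+c}\) with \(c=\lceil 9\sqrt{nm}\rceil\), valid for \(n\ge m\). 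You correctly flag the conversion to a same-\(n\) multiplicative bound as the main obstacle, but you leave it undone. Monotonicity in the buyer count also points the wrong way, since it only gives \(S_{n+c}\ge S_n\). What is needed is sublinearity: \(S_N\le (N/n)S_n\) for \(N\ge n\).

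The paper obtains this by simulation. It symmetrizes an optimal \(N\)-buyer single-item auction, feeds it \(N-n\) independently sampled fake buyers, and keeps only the real buyers' outcomes. Fixing the fake reports preserves DSIC and IR, and symmetry gives revenue \(n/N\) of the original. Your concavity idea also works, and is shorter given the formula you already use for \(L\). Put \(X_n=\max\{0,\bar\phi(v_1),\ldots,\bar\phi(v_n)\}\). The increment \(S_{n+1}-S_n=m\,\E[(\bar\phi(v_{n+1})-X_n)^+]\) is nonincreasing because \(X_n\) grows with \(n\). Hence \(S_n\) is concave with \(S_0=0\). Either way,
\[
 \OPT(F;n,m)\le\frac{n+c}{n}S_n\le\Bigl(1+9\sqrt{m/n}+\frac1n\Bigr)S_n .
\]
For \(n\ge100m/\varepsilon^2\), which also ensures \(n\ge m\), the excess is at most \(9\varepsilon/10+\varepsilon^2/(100m)<\varepsilon\). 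So \(U=(1+\varepsilon)S_n\) works with exactly the stated threshold, and no rescaling of \(\varepsilon\) is needed.
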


The complete proof and the exact single-item revenue calculation are in
Appendix~\ref{app:approximation}. The scheme applies to every fixed \(m\),
not only to two items. Its running-time exponent depends on \(m\) and the
accuracy, so this is a PTAS, not an FPTAS or a polynomial-time scheme for
unrestricted \(m\). Exact hardness does not rule out an FPTAS.

\paragraph{AI methodology:} We used GPT-6 to assist with literature searches, initial drafting, language editing, and the development and presentation of selected mathematical arguments. Guided by the authors’ ideas, GPT-6 established the main results in Section 3 and materially affected Sections 4, 5, and 6.The authors verified the correctness and originality of all content, including references.
\bibliographystyle{plainnat}
\bibliography{references}
\clearpage
\appendix
\section{Proofs and Computation for the Two-Point Auction}
\label{app:binary}

We use the notation of Section~\ref{sec:binary}. The flow bound comes
first, followed by the chosen scores, DSIC implementation, and attainment.
The final subsections compute revenue and record supplementary properties.

\subsection{The conditional flow bound}
\label{app:binary-flow-proof}

\begin{proof}[Proof of Lemma~\ref{lem:binary-dual}]
Fix $i,t$. Write $O(S)$ and $I(S)$ for the total outgoing and incoming
flow at $S$. Summing \eqref{eq:flow-conservation} over nonempty nodes
cancels all internal edges and gives
$I(\varnothing)=1-f(\varnothing)$. Hence
\[
 \begin{aligned}
 &\sum_S\sum_{j\notin S}\lambda_i(S\cup\{j\},S;t)
     \bigl(u_i(S\cup\{j\},t)-u_i(S,t)\bigr)\\
 &\qquad=\sum_{S\ne\varnothing}(O(S)-I(S))u_i(S,t)
          -I(\varnothing)u_i(\varnothing,t)\\
 &\qquad=\sum_Sf(S)u_i(S,t)-u_i(\varnothing,t).
 \end{aligned}
\]
The weighted incentive constraints and lowest-type IR therefore imply
\[
 \sum_Sf(S)u_i(S,t)
 \ge\Delta\sum_S\sum_{j\notin S}
 \lambda_i(S\cup\{j\},S;t)x_{ij}(S,t).
\]
Subtract this bound on utility from expected allocated value. Using
\eqref{eq:conditional-virtual-values} yields
\[
 \sum_Sf(S)s_i(S,t)
 \le\sum_Sf(S)\sum_j\Phi_{ij}(S,t)x_{ij}(S,t).
\]
The conditional distribution of buyer $i$'s type is $f$, by independence.
Average over $t$ and sum over buyers. For any fixed profile and item,
put $V_j=\max\{0,\Phi_{1j},\ldots,\Phi_{nj}\}$. Then
\[
 \sum_i\Phi_{ij}x_{ij}\le V_j\sum_ix_{ij}\le V_j,
\]
by nonnegativity and feasibility. This proves \eqref{eq:flow-bound}.
The only inequalities used were the weighted incentive constraints,
lowest-type IR, and the last itemwise bound. The stated conditions make
each of them tight.
\end{proof}

\subsection{The chosen flow and sale thresholds}
\label{app:binary-scores-proof}

\begin{proof}[Proof of Lemma~\ref{lem:binary-scores}]
Here the mass is $f(S)$ plus all incoming flow. Processing nodes in
decreasing order of cardinality defines every edge. The denominators
are positive, the flow is nonnegative, and total outgoing mass equals
incoming mass plus $f(S)$, as required.

Fix $Q\subseteq H$. At every node $Q\cup T$ with nonempty $T\subseteq D$,
the flow preserves $Q$. Such a node also receives no flow from a
different $H$-subset: its predecessors contain a coordinate in $D$, so
their outgoing edges delete only $D$-coordinates. The source masses are
\[
 f(Q\cup T)=f_H(Q)q^{|T|}p^{d-|T|},
 \qquad f_H(Q)=q^{|Q|}p^{|H|-|Q|}.
\]
With $Q$ fixed, layer $r$ consists of the $\binom dr$ nodes $Q\cup T$
with $|T|=r$; moving down one layer deletes one high coordinate in $D$.
Permutation symmetry within $D$ makes all edges between consecutive
layers carry equal flow. For $0\le r<d$, conservation on the layers
$|T|\ge r+1$ gives total outgoing flow
$f_H(Q)\sum_{z=r+1}^d\binom dzq^zp^{d-z}$. There are
$(d-r)\binom dr=(r+1)\binom d{r+1}$ edges between layers $r+1$ and $r$.
Thus, for $|T|=r$ and $j\in D\setminus T$,
\begin{equation}\label{eq:edge-flow}
 \lambda_i(Q\cup T\cup\{j\},Q\cup T;t)
 =f_H(Q)\frac{\sum_{z=r+1}^d\binom dzq^zp^{d-z}}
                  {(d-r)\binom dr}.
\end{equation}
At $r=0$, additional flow may enter $Q$ along $H$-coordinates. It then
continues down through $H$ and leaves \eqref{eq:edge-flow} unchanged.

Divide by $f(Q\cup T)=f_H(Q)q^rp^{d-r}$ and put $c=d-r$. We obtain
\begin{equation}\label{eq:edge-ratio}
 \begin{aligned}
 \frac{\lambda_i(Q\cup T\cup\{j\},Q\cup T;t)}{f(Q\cup T)}
 &=\sum_{h=1}^c\frac{\binom{c+r}{r+h}}{c\binom{c+r}r}\rho^h\\
 &=\sum_{h=1}^c\frac{(c-1)!r!}{(c-h)!(r+h)!}\rho^h
 =\Gamma_{c,r}(\rho).
 \end{aligned}
\end{equation}
At an all-low item, $D=E_i\mathbin{\dot\cup}L$ and $r=e_i$;
substitution gives \eqref{eq:all-low-virtual}.
For an item outside $L$, each high-value reporter has virtual value $b$,
whereas every low-value reporter has virtual value at most $a<b$.
Its highest virtual value is therefore $b$.

\paragraph{Monotonicity and thresholds.}

We next verify the two monotonicity properties in
\eqref{eq:gamma-monotonicity}, which determine the recipient and sale threshold.
To verify them, denote the coefficient of $\rho^h$ in
$\Gamma_{c,r}$ by $A_{c,r,h}$. All these coefficients are positive, and
\[
 \frac{A_{c,r+1,h}}{A_{c,r,h}}=\frac{r+1}{r+h+1}<1.
\]
For $1\le h\le c-1$,
\[
 \frac{A_{c-1,r+1,h}}{A_{c,r,h}}
 =\frac{(r+1)(c-h)}{(c-1)(r+h+1)}<1,
\]
because the denominator minus the numerator is
$(h-1)(c+r)+(c-1)>0$. The remaining term of $\Gamma_{c,r}$,
of degree $c$, is also positive. This proves
\eqref{eq:gamma-monotonicity}.

The first inequality shows that, at a fixed profile, a larger exclusive
high count gives a larger virtual value on every all-low item. The
second shows that, for fixed $d$, the values
$a-\Delta\Gamma_{d-k,k}(\rho)$ increase strictly with $k<d$. Therefore
the thresholds already defined in \eqref{eq:optimal-thresholds} satisfy the
required sale condition.
For $0\le k<d$, this is precisely \eqref{eq:threshold-sign}.
When all these values are nonpositive, $K_d=d$. At zero virtual value we
choose to retain the all-low items.

Equations \eqref{eq:all-low-virtual} and
\eqref{eq:gamma-monotonicity} give the pointwise bound
\[
 \sum_j\max\{0,\Phi_{1j},\ldots,\Phi_{nj}\}
 =\begin{cases}
 b(m-c)+c\pos{\phi_{c,e_*}},&c\ge1,\\
 bm,&c=0.
 \end{cases}
\]
Each item is all-low with probability $p^n$, so
$\E[b(m-c)]=mb(1-p^n)$. Thus Lemma~\ref{lem:binary-dual} gives exactly
the right side of \eqref{eq:binary-revenue} as an upper bound.

\end{proof}

\subsection{Conditional transactions and DSIC}
\label{app:binary-implementation}

\begin{proof}[Proof of Lemma~\ref{lem:binary-menu}]
If $r<\Theta(Q)$, the buyer either comes after $w$ or fails the threshold
$K_d$. She receives only reported high-valued items, and the utility
deduction in \eqref{eq:binary-payments} is zero. If $r\ge\Theta(Q)$, she
comes first and passes $K_d$. Her allocation is
$(Q\cup T)\cup(D\setminus T)=Q\cup D$, and her payment is
\[
 b|Q|+br+a(d-r)-\Delta(r-\Theta(Q))
 =ad+b|Q|+\Delta\Theta(Q).
\]
This also holds at $r=\Theta(Q)$, where the buyer receives the bundle
with zero utility.

For each opponent,
\[
 C_\ell\setminus Q
 =\bigl(C_\ell\setminus(Q\cup Q')\bigr)
   \mathbin{\dot\cup}\bigl(C_\ell\cap(Q'\setminus Q)\bigr).
\]
Consequently $e_\ell(Q)\le e_\ell(Q')+|Q'\setminus Q|$. Add the fixed
priority offset and take the maximum in \eqref{eq:theta-Q}; the constant
terms $K_d$ and $0$ obey the same bound. This proves
\eqref{eq:threshold-change}.
\end{proof}

\begin{proof}[Proof of Proposition~\ref{prop:binary-IC}]
Fix a buyer and her opponents' reports. Let her true type be
$S_0=Q_0\mathbin{\dot\cup}T_0$, and write $r_0=|T_0|$. Her truthful
utility is
\[
 U_0=\Delta\pos{r_0-\Theta(Q_0)}\ge0.
\]
Consider any alternative report. If it produces a per-item transaction
with allocation $A$, its utility at the true type is
\[
 b|A\cap S_0|+a|A\setminus S_0|-b|A|
 =-\Delta|A\setminus S_0|\le0\le U_0.
\]
If it produces $D\cup Q'$ at the price in
\eqref{eq:conditional-bundle-price}, its utility is
\[
 \begin{aligned}
 U_{\mathrm{dev}}
 &=ad+\Delta r_0+a|Q'|+\Delta|Q_0\cap Q'|
   -ad-b|Q'|-\Delta\Theta(Q')\\
 &=\Delta\bigl(r_0-\Theta(Q')-|Q'\setminus Q_0|\bigr)\\
 &\le\Delta\bigl(r_0-\Theta(Q_0)\bigr)\le U_0,
 \end{aligned}
\]
by \eqref{eq:threshold-change}. This comparison allows arbitrary
simultaneous increases and decreases in the reported coordinates.
When $\Theta(Q')>d$, no report can produce the corresponding bundle,
since $|T'|\le d$.

These comparisons establish DSIC, and $U_0\ge0$ gives ex-post IR.
Payments are either $b|A|$ or \eqref{eq:conditional-bundle-price}, both
nonnegative. Feasibility follows from the allocation rule, and all choices
are deterministic once the priority order is fixed.
\end{proof}

\subsection{Attainment of the revenue bound}
\label{app:binary-attainment}

\begin{proof}[Proof of Theorem~\ref{thm:binary}]
Return to the thresholds in \eqref{eq:optimal-thresholds} and use the
auction of Proposition~\ref{prop:binary-IC}, which established DSIC
and ex-post IR for the threshold family. The right side of
\eqref{eq:binary-revenue} is an upper bound by
Lemmas~\ref{lem:binary-dual} and~\ref{lem:binary-scores}.
We check equality. Fix buyer $i$ and her opponents' reports. At the
lowest type, $e_i=0$ and $\Theta_i\ge0$, so the utility is zero. On a
flow edge deleting $j\in D$, write the lower endpoint as $Q\cup T$ and
put $r=|T|$. The threshold is the same at both endpoints. Since it is an
integer,
\[
 \begin{aligned}
 u_i(Q\cup T\cup\{j\},t)-u_i(Q\cup T,t)
 &=\Delta\bigl(\pos{r+1-\Theta(Q)}-\pos{r-\Theta(Q)}\bigr)\\
 &=\Delta\ind{r\ge\Theta(Q)}.
 \end{aligned}
\]
At the lower endpoint, $j$ is all-low. It is assigned to $i$ precisely
when $i$ comes first and passes $K_d$, equivalently when
$r\ge\Theta(Q)$. The last display is therefore $\Delta x_{ij}(Q\cup T,t)$,
and the incentive constraint is tight.

Every remaining positive-flow edge deletes an $H$-coordinate after all
$D$-coordinates have been deleted. Both endpoints have exclusive high
count zero and hence utility zero. At the lower endpoint, the deleted
item is reported low by $i$ and high by an opponent, so $x_{ij}=0$.
These incentive constraints are tight as well.

Finally, each item outside $L$ is assigned to a high-value reporter and
attains virtual value $b$. If $c\ge1$, the first buyer $w$ has
$e_w=e_*$ and hence the highest virtual value on all of $L$. Put
$d_w=c+e_w$. Since $e_w<d_w$, \eqref{eq:threshold-sign} gives
\[
 e_w\ge K_{d_w}\quad\Longleftrightarrow\quad\phi_{c,e_*}>0.
\]
Thus the auction allocates these items at their highest positive virtual
value and retains them when every virtual value is nonpositive. All
conditions for equality in Lemma~\ref{lem:binary-dual} hold. The resulting
auction is deterministic and has nonnegative payments by
Proposition~\ref{prop:binary-IC}.
The polynomial-time evaluation of the auction and revenue is established
in Appendix~\ref{app:binary-computation} below.
\end{proof}

\subsection{Exact computation of the mechanism and revenue}\label{app:binary-computation}

The random term in \eqref{eq:binary-revenue} depends only on $c$ and
$e_*$. To compute their joint distribution, classify each item by its
high-value reporters. The all-low category has probability
$\zeta=p^n$; the category with sole high-value reporter $i$ has probability
$\tau=qp^{n-1}$ for each $i$; and the category with at least two
high-value reporters has probability
\[
 \omega=1-p^n-nqp^{n-1}.
\]
These $n+2$ categories are disjoint and exhaustive.

Let $h$ count items in the last category. For nonnegative counts
$c,h,e_1,\ldots,e_n$ summing to $m$, independence and the multinomial
coefficient give probability
\[
 \frac{m!}{c!h!e_1!\cdots e_n!}\zeta^c\omega^h\tau^{e_1+\cdots+e_n}.
\]
Consequently, with $P_{c,r}=\Pr(|L|=c,e_*=r)$,
\begin{equation}\label{eq:joint-count-distribution}
 P_{c,r}=
 \sum_{\substack{h,e_1,\ldots,e_n\ge0\\
 h+e_1+\cdots+e_n=m-c\\ \max_i e_i=r}}
 \frac{m!}{c!h!e_1!\cdots e_n!}\zeta^c\omega^h\tau^{e_1+\cdots+e_n}.
\end{equation}
We use $0^0=1$, so this formula includes the case $n=1$, where
$\omega=0$. Since each $E_i$ is contained in the set of non-all-low
items, $0\le r\le m-c$. The revenue formula becomes
\begin{equation}\label{eq:finite-revenue}
 \OPT(F;n,m)=mb(1-p^n)
 +\sum_{c=1}^m\sum_{r=0}^{m-c}
 cP_{c,r}\pos{a-(b-a)\Gamma_{c,r}(q/p)}.
\end{equation}
These coefficients can also be computed in polynomial time, without
enumerating all count vectors. Let \([z^d]P(z)\) denote the coefficient
of \(z^d\) in a polynomial \(P\), and put
\[
 A_r(z)=\sum_{k=0}^{r}\frac{(\tau z)^k}{k!},\qquad
 A_{-1}(z)=0,\qquad
 H(z)=\sum_{h=0}^{m}\frac{(\omega z)^h}{h!}.
\]
The multinomial formula gives
\[
 P_{c,r}=\frac{m!\zeta^c}{c!}
 [z^{m-c}]\,H(z)\bigl(A_r(z)^n-A_{r-1}(z)^n\bigr).
\]
Indeed, the first power includes precisely the vectors with every
\(e_i\le r\), and the second removes those with every \(e_i<r\).
All products may be truncated after degree \(m\). Repeated polynomial
multiplication therefore computes \eqref{eq:finite-revenue} in time
polynomial in \(n,m,B\); factorials and products of at most \(nm\)
input probabilities have polynomial bit length. The thresholds and
payments of the auction have the same computational bound.

\subsection{Breakpoints and the high-value regime}

\begin{corollary}\label{cor:breakpoints}
Fix $p,a,m$. For $n\ge2$, optimal revenue is a continuous, convex,
piecewise-linear function of $b>a$. Its breakpoints are exactly the
distinct members of
\begin{equation}\label{eq:breakpoints}
 \left\{a\left(1+\frac{1}{\Gamma_{c,r}(q/p)}\right):
 1\le c\le m,\ 0\le r\le m-c\right\}.
\end{equation}
This set is independent of $n$ and has at most $m(m+1)/2$ members. For
fixed $m$, equality holds for all but finitely many $p\in(0,1)$.
For $n=1$, the breakpoints are the $m$ distinct members with $c+r=m$.
\end{corollary}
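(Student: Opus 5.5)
Starting point: formula \eqref{eq:finite-revenue}. The revenue is
\[
 R(b)=mb(1-p^n)+\sum_{c=1}^m\sum_{r=0}^{m-c}cP_{c,r}\,\pos{a-(b-a)\Gamma_{c,r}(\rho)},
\]
and its only dependence on \(b\) is explicit, because \(P_{c,r}\) depends on \(p,n,m\) alone. The first term is linear in \(b\). Each remaining summand is \(cP_{c,r}\) times the positive part of an affine, decreasing function of \(b\). Such a summand is convex and piecewise linear, with one kink where \(a=(b-a)\Gamma_{c,r}\), that is, at \(b=a(1+1/\Gamma_{c,r})\). Continuity, convexity and piecewise linearity follow at once from nonnegative combinations. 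Every breakpoint therefore lies in the set \eqref{eq:breakpoints}. Since \(\Gamma_{c,r}>0\), each candidate exceeds \(a\), so it lies in the domain.

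Converse, for \(n\ge2\): a candidate \(\beta\) is a genuine breakpoint if the slopes on either side differ. At \(\beta\) the slope drops by
\[
\sum_{(c,r):\,a(1+1/\Gamma_{c,r})=\beta} cP_{c,r}\Gamma_{c,r}.
\]
This sum is strictly positive provided every \(P_{c,r}>0\) for \(1\le c\le m\), \(0\le r\le m-c\).
\begin{itemize}
\item For \(n\ge2\), realize the pair \((c,r)\) by \(c\) all-low items, \(r\) items exclusive to buyer 1, and \(m-c-r\) items with two or more high reporters. This uses \(\omega>0\), which holds because \(n\ge2\). Alternatively, give those items exclusive to buyer 2, capped by \(r\).
\item Either way the multinomial term is positive, so \(P_{c,r}>0\).
\end{itemize}
The set is independent of \(n\) because \(\Gamma_{c,r}\) is. The count is the number of pairs, \(\sum_{c=1}^m(m-c+1)=m(m+1)/2\).

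For \(n=1\): there is no other buyer, so \(h=0\) and \(r=e_1=m-c\). Hence \(P_{c,r}>0\) exactly when \(c+r=m\), which gives \(m\) candidates. These are distinct: by the second inequality of \eqref{eq:gamma-monotonicity}, \(\Gamma_{d-k,k}\) is strictly decreasing in \(k\) for fixed \(d=m\).

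Main obstacle: the claim that for fixed \(m\) the set has exactly \(m(m+1)/2\) members for all but finitely many \(p\). The candidates are distinct iff \(\Gamma_{c,r}(\rho)\ne\Gamma_{c',r'}(\rho)\) for every two distinct pairs. Each difference \(\Gamma_{c,r}-\Gamma_{c',r'}\) is a polynomial in \(\rho\) with zero constant term, so finitely many \(\rho>0\) are excluded provided the polynomial is not identically zero.

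To show nonvanishing, compare coefficients.
\begin{itemize}
\item If \(c\ne c'\), the degrees differ.
\item If \(c=c'\) and \(r\ne r'\), the \(\rho^1\) coefficients are \(1/(r+1)\) and \(1/(r'+1)\), which differ.
\end{itemize}
So each difference is a nonzero polynomial, and since \(\rho=q/p\) is a bijection from \(p\in(0,1)\) onto \((0,\infty)\), only finitely many \(p\) are excluded.
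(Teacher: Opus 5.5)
Your proof is correct and follows the paper's argument essentially step for step: you use the kinks of the positive-part terms, positivity of every $P_{c,r}$ for $n\ge2$ via the profile with $e_1=r$ and $h=m-c-r$ (using $\omega>0$), nonzero difference polynomials in $\rho$, and, for $n=1$, the diagonal $c+r=m$ with strict monotonicity of $\Gamma_{m-k,k}$; your separation of same-$c$ polynomials by their linear coefficients $1/(r+1)$, rather than by \eqref{eq:gamma-monotonicity}, is an equally valid variant. There are two cosmetic slips: at each kink the slope increases rather than drops, as your own convexity argument shows, and the ``alternatively, buyer 2'' option is unnecessary and does not work when $r=0<m-c$.
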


\begin{proof}
For a fixed pair $(c,r)$, put $G=\Gamma_{c,r}(q/p)>0$. The corresponding
term in \eqref{eq:finite-revenue} is
$cP_{c,r}\pos{a(1+G)-Gb}$. It is continuous and affine on either side
of $a(1+1/G)$, where its slope increases by $cP_{c,r}G$.
Summing these terms and the linear baseline gives continuity,
convexity, and piecewise linearity.

When $n\ge2$, all category probabilities are positive: in particular,
$\omega\ge q^2p^{n-2}>0$. For each allowed pair $(c,r)$, the term in
\eqref{eq:joint-count-distribution} with
\[
 e_1=r,\quad e_2=\cdots=e_n=0,\quad h=m-c-r
\]
is strictly positive. Hence $P_{c,r}>0$, every candidate breakpoint has
a positive slope increase, and coincident increases add. There are
$\sum_{c=1}^m(m-c+1)=m(m+1)/2$ pairs.

As a polynomial in $\rho$, $\Gamma_{c,r}$ has degree $c$ and positive
leading coefficient. Different $c$ give different polynomials, and for
fixed $c$, different $r$ give different polynomials by
\eqref{eq:gamma-monotonicity}. The difference of any two is therefore
a nonzero polynomial with finitely many positive roots. There are only
finitely many pairs of polynomials. Since $p=1/(1+\rho)$ is one-to-one,
only finitely many values of $p$ can cause a coincidence.

For $n=1$, $\omega=0$ forces $h=0$, and therefore $r=m-c$. Moreover,
$P_{c,m-c}=\binom mc p^cq^{m-c}>0$. Along this diagonal,
$\Gamma_{m-k,k}$ decreases strictly with $k$ by
\eqref{eq:gamma-monotonicity}, giving $m$ distinct breakpoints.
\end{proof}

Call a buyer \emph{active} if she reports at least one high-valued item.
For large $b$, the mechanism takes a particularly simple form.

\begin{corollary}\label{cor:large-high-value}
For $m\ge2$, define
\[
 b_L=a+\frac{(m-1)pa}{q},\qquad
 b_U=a+\frac{mpa}{q}.
\]
If $b\ge b_L$, then
\begin{equation}\label{eq:large-high-revenue}
 \OPT(F;n,m)=mb(1-p^n)
 +np^{m(n-1)}q^{m-1}\pos{mpa-q(b-a)}.
\end{equation}
For $b\ge b_U$, selling each item at the high value $b$ is optimal.
For $b_L\le b<b_U$, an optimal auction adds the following rule:
if there is a unique active buyer and she reports at least $m-1$ high
values, sell her all items for $(m-1)b+a$. At every other profile, sell
each non-all-low item for $b$.
\end{corollary}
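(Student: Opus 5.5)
The plan is to read everything off the finite revenue formula \eqref{eq:finite-revenue} together with the threshold characterization \eqref{eq:threshold-sign}. The key observation is that when \(b\) is large, almost every virtual value \(\phi_{c,r}\) is nonpositive. Then only one pair \((c,r)\) can contribute to the sum, and only one threshold \(K_d\) differs from its trivial value \(d\).

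\emph{Step 1: sign of the virtual values.} All coefficients of \(\Gamma_{c,r}(\rho)\) are positive, so \(\Gamma_{c,r}(\rho)\) is at least its \(h=1\) term, which equals \(\rho/(r+1)\). Admissible pairs satisfy \(r\le m-c\).
\begin{itemize}
\item If \((c,r)\ne(1,m-1)\), then \(r+1\le m-1\), so \(\phi_{c,r}\le a-\Delta\rho/(m-1)\). This is nonpositive exactly when \(b\ge b_L\).
\item For \((c,r)=(1,m-1)\) the sum has a single term, \(\Gamma_{1,m-1}=\rho/m\). Hence \(\phi_{1,m-1}=a-\Delta q/(pm)\), which is positive iff \(b<b_U\).
\end{itemize}
So for \(b\ge b_L\), the only possibly nonzero summand in \eqref{eq:finite-revenue} is \(P_{1,m-1}\pos{\phi_{1,m-1}}\).

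\emph{Step 2: the surviving probability.} The event \(c=1,\ e_*=m-1\) requires all \(m-1\) non-all-low items to be exclusive to a single buyer. So in \eqref{eq:joint-count-distribution} we must have \(h=0\), one \(e_i=m-1\), and all other \(e_\ell=0\). This gives
\[
P_{1,m-1}=nm\,\zeta\tau^{m-1}=nm\,p^{n}(qp^{n-1})^{m-1}.
\]
Multiplying by \(\phi_{1,m-1}=(mpa-q\Delta)/(mp)\) and simplifying the exponent \(n-1+(n-1)(m-1)=m(n-1)\) yields \eqref{eq:large-high-revenue}.

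\emph{Step 3: the mechanism.} By \eqref{eq:threshold-sign} and Step 1, \(K_d=d\) for every \(d<m\). Also \(K_m=m-1\) when \(b_L\le b<b_U\), and \(K_m=m\) when \(b\ge b_U\).
\begin{itemize}
\item \emph{Case \(b\ge b_U\).} The condition \(e_w\ge K_{c+e_w}=c+e_w\) forces \(c=0\). Hence all-low items are never sold, and every other item goes to a high reporter at price \(b\) (the utility term in \eqref{eq:binary-payments} vanishes since \(\Theta_i\ge K_{d}=d\ge e_i\)). This is selling each item at \(b\).
\item \emph{Case \(b_L\le b<b_U\).} Once \(c\ge1\), selling \(L\) needs \(c+e_w=m\) and \(e_w\ge m-1\). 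This means \(w\) is the unique active buyer and reports at least \(m-1\) highs. For her, \(d=m\), and \(\Theta_w=\max\{m-1,B_w\}=m-1\) because \(B_w\le1\le m-1\). Formula \eqref{eq:binary-payments} then gives payment \((m-1)b+a\) whether she reports \(m-1\) or \(m\) highs. At all other profiles, \(\Theta_i\ge d\ge e_i\), so the deduction is zero and each allocated non-all-low item is sold at \(b\).
\end{itemize}
Optimality then follows from Theorem~\ref{thm:binary}.

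The only delicate point is the bookkeeping in Step 3. One must check that \(B_i\) never exceeds \(m-1\) for the unique active buyer, and that the cases \(c=0\) and \(c=1\) collapse to the single stated bundle price. The inequality estimates in Steps 1 and 2 are routine.
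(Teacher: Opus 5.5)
Your proposal is correct and follows essentially the same route as the paper: the bound $\Gamma_{c,r}\ge\rho/(r+1)$ isolates the pair $(1,m-1)$, the count is $P_{1,m-1}=nm\,\zeta\tau^{m-1}$, the thresholds are $K_d=d$ for $d<m$ and $K_m\in\{m-1,m\}$, and $B_w\le 1\le m-1$ fixes the bundle price $(m-1)b+a$. One small slip: at an ``other'' profile where buyer $i$ faces $d=m$, you only have $\Theta_i\ge K_m=m-1$, not $\Theta_i\ge d$, but the deduction still vanishes because such a buyer has $e_i\le m-2$.
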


\begin{proof}
The first term in \eqref{eq:gamma} gives
$\Gamma_{c,r}(\rho)\ge\rho/(r+1)$. Among the allowed pairs, $r=m-1$
forces $c=1$, and $\Gamma_{1,m-1}=\rho/m$. Every other pair has
$r\le m-2$ and hence $\Gamma_{c,r}\ge\rho/(m-1)$. For $b\ge b_L$,
all corresponding virtual values are nonpositive. Only the pair
$(c,r)=(1,m-1)$ can contribute an additional term.

This event consists of one all-low item and $m-1$ items with the same
sole high-value reporter. The all-low item's position has $m$ choices
and the buyer has $n$ choices. Since $m-1\ge1$, the buyer is uniquely
identified by the profile. Thus
\[
 P_{1,m-1}=mn p^n(qp^{n-1})^{m-1}
 =mn p^{m(n-1)+1}q^{m-1}.
\]
Its contribution is
\[
 P_{1,m-1}\pos{a-\Delta q/(mp)}
 =np^{m(n-1)}q^{m-1}\pos{mpa-q\Delta},
\]
which proves \eqref{eq:large-high-revenue}.

For $b\ge b_U$, every threshold is $K_d=d$. All utilities are zero,
all-low items are retained, and allocated items are charged at $b$.
For $b_L\le b<b_U$, the thresholds are $K_d=d$ for $d<m$ and
$K_m=m-1$. A buyer can therefore receive a bundle discount only when
$d=m$, meaning that every opponent reports the lowest type. Then
$Q=\varnothing$ and $B_i\le1\le m-1$, so $\Theta=m-1$. A report with
at least $m-1$ high values receives the entire bundle for
$ma+(m-1)\Delta=(m-1)b+a$. Such a buyer is active and is the only
active buyer. Conversely, every profile described in the statement
meets these conditions. Proposition~\ref{prop:binary-IC} supplies DSIC
and ex-post IR.
\end{proof}

\subsection{Boundary cases}

\begin{remark}\label{rem:binary-boundaries}
The definitions include the following boundary cases.

For one buyer, $H=\varnothing$, $D=[m]$, and $B_1=0$. A type with $r$
high-valued items receives the grand bundle for $ma+K_m\Delta$ when
$r\ge K_m$, and otherwise buys her $r$ high-valued items for $br$.
Her utility is $\Delta\pos{r-K_m}$.

For one item, the only relevant pair is $(c,r)=(1,0)$, with
$\Gamma_{1,0}=q/p$ and $P_{1,0}=p^n$. The optimal revenue is
\[
 b(1-p^n)+p^n\pos{a-(b-a)q/p}.
\]
For one buyer and one item this reduces to
$qb+\pos{a-qb}=\max\{a,qb\}$. The threshold is $K_1=0$ when $a>qb$
and $K_1=1$ when $a\le qb$; both prices give the same revenue at equality.

A profile with $c=0$ contributes zero to the second term of
\eqref{eq:binary-revenue}. The payment formula still applies, and
utilities may be positive: for $n=m=1$ and $K_1=0$, the high type has
utility $\Delta$. No value of $\Gamma_{0,r}$ is needed.

If fixed opponents' reports give $d=0$, then $D=\varnothing$ and every
own report has $r=e_i=c=0$, hence zero utility. When $\Theta(Q)=0$, the
bundle $D\cup Q$ is simply $Q$, at price $b|Q|$. When $\Theta(Q)>0$,
the buyer obtains a reported high-valued subset at the same per-item
price. The flow uses only $H$-edges, on which the equality argument
continues to apply.

At a profile whose highest all-low virtual value is zero, our strict
threshold convention retains those items. The virtual surplus remains
zero, and all thresholds remain integers, preserving the incentive
equalities used above.
\end{remark}

\paragraph{An anonymous implementation.}
The fixed priority order gives a deterministic optimal auction. To
obtain anonymity, choose an order uniformly from the $n!$ possibilities
before reports are submitted. For each fixed order, the DSIC and IR
inequalities hold at every opponents' profile; averaging preserves them.
Identical buyer distributions give the same expected revenue under each
order. A permutation of buyer identities permutes the equally likely
orders, so their mixture is anonymous and attains
\eqref{eq:binary-revenue}.

\section{Paired-Marker Encoding}
\label{app:hardness-common}

Both hardness reductions use the following encoding: positional markers
force one draw from each input pair. The weighted formula covers both the
uniform alphabet in Section~\ref{sec:single-buyer} and the geometric
probabilities in Section~\ref{sec:two-items}.

\begin{lemma}[Paired markers]\label{lem:paired-markers}
Let \(s\ge1\), let \(a_1,\ldots,a_s\) be positive integers, and choose
integers \(M>s\max_i a_i\) and \(0\le t<M\). Set
\[
 B=s+1,\qquad b_{i,0}=MB^{i-1},\qquad
 b_{i,1}=MB^{i-1}+a_i,\qquad
 D=M\sum_{i=1}^sB^{i-1}+t.
\]
The \(2s\) letters are distinct. Ordered \(s\)-tuples of letters,
with repetition allowed, summing to \(D\) correspond to subsets
\(A\subseteq[s]\) satisfying \(\sum_{i\in A}a_i=t\), with exactly
\(s!\) tuples per subset.

Suppose a draw takes letter \(b_{i,e}\) with positive rational
probability \(p_{i,e}\), where \(\sum_{i,e}p_{i,e}=1\) and
\(p_{i,1}=r p_{i,0}\) for a common \(r>0\). For the sum \(S\) of
\(s\) independent draws,
\begin{equation}\label{eq:shared-weighted-count}
 \Pr[S=D]=s!\left(\prod_{i=1}^s p_{i,0}\right)
       \sum_{A:\,\sum_{i\in A}a_i=t}r^{|A|}.
\end{equation}
The letters, target, and probability factors have encoding length
polynomial in \(s\) and the encoding length of the given numbers.
\end{lemma}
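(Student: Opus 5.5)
The plan is a base-\(B\) digit argument showing that any \(s\)-tuple summing to \(D\) uses exactly one letter from each index \(i\), followed by a direct probability count. Distinctness of the letters is immediate. For \(i<i'\) we have \(b_{i,1}=MB^{i-1}+a_i<MB^{i-1}+M\le MB^{i'-1}\), since \(a_i<M\) and \(B\ge2\). Within one index, \(b_{i,0}<b_{i,1}\) because \(a_i\ge1\).

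For the correspondence, take a tuple and let \(k_i\) be the number of its entries with index \(i\), so \(\sum_ik_i=s\). Its sum is
\[
 M\sum_i k_iB^{i-1}+\epsilon,\qquad 0\le\epsilon\le s\max_i a_i<M .
\]
Setting this equal to \(D=M\sum_iB^{i-1}+t\) gives \(M\bigl(\sum_i k_iB^{i-1}-\sum_iB^{i-1}\bigr)=t-\epsilon\). The right side lies strictly between \(-M\) and \(M\), so \(\sum_i k_iB^{i-1}=\sum_iB^{i-1}\) and \(\epsilon=t\). Each \(k_i\le s<B\), so these are base-\(B\) representations with digits in \(\{0,\ldots,B-1\}\). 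Uniqueness of such representations then forces \(k_i=1\) for every \(i\).

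It follows that the tuple is a permutation of a choice \(b_{i,e_i}\), one letter per index. Its sum equals \(D\) exactly when \(\epsilon=\sum_i e_ia_i=t\), that is, when \(A=\{i:e_i=1\}\) satisfies \(\sum_{i\in A}a_i=t\). Conversely, every such \(A\) together with every ordering gives a valid tuple. The \(s\) chosen letters have distinct indices, so they are distinct, and the \(s!\) orderings give distinct tuples. This yields exactly \(s!\) tuples per subset.

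For the probability formula, \(\Pr[S=D]\) is the sum over ordered tuples summing to \(D\) of the product of the letter probabilities, by independence of the draws. A tuple coming from the subset \(A\) has probability \(\prod_{i\notin A}p_{i,0}\prod_{i\in A}p_{i,1}=r^{|A|}\prod_ip_{i,0}\), using \(p_{i,1}=rp_{i,0}\). Multiplying by the \(s!\) orderings and summing over admissible \(A\) gives \eqref{eq:shared-weighted-count}.

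For the encoding length, \(B^{i-1}\le (s+1)^{s}\) has \(O(s\log s)\) bits. The quantity \(M\) can be taken as \(s\max_i a_i+1\), which is polynomial in the input bit length, and \(D\) is polynomial as well. The probability factors \(s!\) and \(\prod_i p_{i,0}\) are products of at most \(s\) given rationals, so their bit length is polynomial. The only delicate step is the carry argument: it needs both \(\epsilon<M\), which rules out \(\epsilon\) absorbing a multiple of \(M\), and \(k_i<B\), which rules out carries between base-\(B\) digits. The choice \(B=s+1\) together with the hypothesis \(M>s\max_i a_i\) is exactly what supplies these two bounds.
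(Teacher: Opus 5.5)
Your proof is correct and follows essentially the same route as the paper. You bound the total offset \(\epsilon\) below \(M\) to get \(\epsilon=t\), use \(k_i\le s<B\) and uniqueness of base-\(B\) expansions to force exactly one letter per index, and finish with the same \(s!\)-orderings probability count. Your strict inequality \(-M<t-\epsilon<M\) simply makes explicit the paper's ``reduce modulo \(M\)'' step. One small point: \(M\) is given in the lemma, so you need not choose it; its bit length counts among the given numbers.
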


\begin{proof}
Each offset is less than \(M\), so consecutive marker pairs are disjoint.
For a tuple, let \(d_i\) count its letters from pair \(i\), and let
\(\ell\) be its total offset. Then \(0\le\ell\le s\max_i a_i<M\).
Equality of the tuple sum with \(D\), reduced modulo \(M\), gives
\(\ell=t\). The remaining equality is
\(\sum_i d_iB^{i-1}=\sum_i B^{i-1}\). Since
\(0\le d_i\le s<B\), uniqueness of the base-\(B\) expansion gives
\(d_i=1\) for every \(i\). The selected second members form the required
subset, and its distinct letters have \(s!\) orderings. Each ordering
has probability \((\prod_i p_{i,0})r^{|A|}\), proving the formula.
The markers use \(O(\log M+s\log(s+1))\) bits; the probability factors
are products and powers of at most \(s\) input rationals.
\end{proof}

\section{Technical proofs for single-buyer hardness}
\label{app:single-buyer}

We prove the estimates used in Section~\ref{sec:single-buyer}, then
verify the perturbation, the counting reduction, and its encoding.

\subsection{Normalization and the revenue-loss identity}\label{app:single-buyer-certificate}

\paragraph{The critical menus.}
The last two utilities in~\eqref{eq:critical-menu} differ by
\((1-z)(S-1)\); the stipulated payment tie-breaking gives
\eqref{eq:menu-outcomes}, including \(S=0,1\) and \(z=0,1\).
The utilities are \(0,zS,S-1+z\). The coefficient of \(z\) in
\eqref{eq:critical-revenue} is \(p^mP-1=0\), and its constant term
is \((1-p^m)P=ma\). Thus the vertex utilities are zero at \(v^\varnothing\) and
\(|A|-1+z\) at nonempty \(v^A\). In expected outcomes, \(M_z\) is the
mixture of \(M_1\) and \(M_0\) with weights \(z,1-z\).

\paragraph{Normalization.}
On the finite set \(\mathcal T\subseteq[a,a+1]^m\) containing every
vertex, IC and nonnegative allocations give \(u(v)\ge u(a\one)\). Raising all payments by
\(u(a\one)\) preserves IC and IR, so every optimum is normalized at
\(a\one\). The reverse deviation gives
\begin{equation}\label{eq:normalized-bounds}
 0\le u(v)\le(v-a\one)\cdot x(v)\le m,
 \qquad
 0\le s(v)\le m(a+1).
\end{equation}
Here \(s(v)\ge a\sum_jx_j(v)\). These bounds make the normalized
feasible polytope compact; it contains the zero mechanism, so a revenue
maximum is attained.

\paragraph{The exact flow identity.}
The weights in~\eqref{eq:edge-weights} satisfy
\begin{equation}\label{eq:flow-balance}
 k\lambda_k-(m-k)\lambda_{k+1}=f(A)
 \qquad (|A|=k\ge1),
\end{equation}
and \(m\lambda_1=1-f(\varnothing)\).

The utility cancellation in the proof of Lemma~\ref{lem:binary-dual},
with \(u(\varnothing)=0\) and the slacks retained, gives the exact identity
\begin{equation}\label{eq:critical-identity}
 \Rev_{F_0^{\otimes m}}(M)
 =
 \sum_A f(A)\sum_j\Phi_j(A)x_j(A)
 -\sum_{A\ne\varnothing}\sum_{j\in A}
       \lambda_{|A|}\sigma(A,j).
\end{equation}
Only the displayed edge constraints were needed for this identity;
feasibility continues to require the full set of IC constraints.

\begin{lemma}\label{lem:critical-virtual-gap}
For every \(j\), \(\Phi_j(\varnothing)=0\).
For every nonempty \(A\) and every \(j\), \(\Phi_j(A)\ge1/4\).
\end{lemma}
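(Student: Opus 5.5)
The plan is to rewrite both virtual values in terms of the coefficients \(\Gamma_{c,r}\) from \eqref{eq:gamma} and then use the monotonicity \eqref{eq:gamma-monotonicity} from Lemma~\ref{lem:binary-scores}. Throughout, set \(\rho=(1-p)/p\). The key observation is that the one-buyer flow \eqref{eq:edge-weights} is the flow \eqref{eq:chosen-flow} with \(H=\varnothing\), \(D=[m]\) and \(\Delta=1\). Hence the edge-ratio computation \eqref{eq:edge-ratio} applies verbatim. For \(|A|=k<m\) and \(j\notin A\), with \(c=m-k\) and \(r=k\), it gives
\[
 \frac{\lambda_{k+1}}{f(A)}=\Gamma_{m-k,k}(\rho).
\]
I would also recheck this directly from \eqref{eq:edge-weights} by dividing by \(p^{m-k}(1-p)^k\).

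First I handle the lowest type. Here \(\Phi_j(\varnothing)=a-\lambda_1/f(\varnothing)\). From \(m\lambda_1=1-p^m\) and \(f(\varnothing)=p^m\) we get \(\lambda_1/f(\varnothing)=(p^{-m}-1)/m\), which equals \(a\) by \eqref{eq:critical}. So \(\Phi_j(\varnothing)=0\).

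As a byproduct I record the identity \(a=\Gamma_{m,0}(\rho)\). Indeed,
\[
 \Gamma_{m,0}(\rho)=\sum_{h=1}^m\frac{(m-1)!}{(m-h)!\,h!}\rho^h=\frac1m\bigl((1+\rho)^m-1\bigr)=\frac{p^{-m}-1}{m}.
\]

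Next I treat nonempty \(A\). If \(j\in A\), then \(\Phi_j(A)=a+1\ge1\), so there is nothing to prove. If \(j\notin A\), then \(1\le k=|A|\le m-1\) and
\[
 \Phi_j(A)=\Gamma_{m,0}(\rho)-\Gamma_{m-k,k}(\rho).
\]
The second inequality in \eqref{eq:gamma-monotonicity}, applied repeatedly along the diagonal \(c+r=m\), gives \(\Gamma_{m-k,k}\le\Gamma_{m-1,1}\) for all \(k\ge1\). It therefore suffices to show \(\Gamma_{m,0}-\Gamma_{m-1,1}\ge1/4\).

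The comparison proof of Lemma~\ref{lem:binary-scores} shows more than the inequality: each coefficient of \(\Gamma_{m-1,1}\) is at most the corresponding coefficient of \(\Gamma_{m,0}\), and \(\Gamma_{m,0}\) has an extra positive top-degree term. So the difference is at least the gap in the \(h=1\) coefficients. These coefficients are \(1\) for \((c,r)=(m,0)\) and \(1/2\) for \((m-1,1)\), giving
\[
 \Gamma_{m,0}(\rho)-\Gamma_{m-1,1}(\rho)\ge\frac{\rho}{2}.
\]
Finally, \(p\le2/3\) gives \(\rho\ge1/2\), so \(\Phi_j(A)\ge\rho/2\ge1/4\).

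The only step needing care is identifying \(\lambda_{k+1}/f(A)\) with \(\Gamma_{m-k,k}\). This requires checking that the normalization in \eqref{eq:edge-weights} matches \eqref{eq:edge-flow} with \(f_H=1\) and \(d=m\). Once that is confirmed, the coefficient comparison is routine.
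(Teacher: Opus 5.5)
Your proposal is correct and follows essentially the same route as the paper: identify \(\lambda_{k+1}/f(A)=\Gamma_{m-k,k}(\rho)\) through the edge-ratio calculation, note that \(a=\Gamma_{m,0}(\rho)\), and then bound \(\Gamma_{m,0}-\Gamma_{m-k,k}\) from below by \(\Gamma_{m,0}-\Gamma_{m-1,1}\ge\rho/2\ge1/4\), using the coefficientwise monotonicity and the drop of the linear coefficient from \(1\) to \(1/2\). Your direct evaluation of \(\lambda_1/f(\varnothing)\) from \(m\lambda_1=1-p^m\), and your explicit appeal to strict monotonicity along the diagonal \(c+r=m\), only spell out steps the paper states more briefly.
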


\begin{proof}
High coordinates have virtual value \(a+1>1\).
For a low coordinate, put \(\rho=(1-p)/p\), and recall the
polynomial \(\Gamma_{c,r}\) from~\eqref{eq:gamma}.
For \(|A|=k<m\), the coefficient calculation in~\eqref{eq:edge-ratio},
with \(c=m-k\) and \(r=k\), gives
\begin{equation}\label{eq:critical-tail-polynomial}
 \frac{\lambda_{k+1}}{f(A)}=\Gamma_{m-k,k}(\rho).
\end{equation}
At \(k=0\),
\[
 \Gamma_{m,0}(\rho)=\frac{(1+\rho)^m-1}{m}
 =\frac{p^{-m}-1}{m}=a,
\]
which proves the assertion at the empty vertex.

By~\eqref{eq:gamma-monotonicity},
\(\Gamma_{m-k,k}(\rho)\) decreases strictly with \(k\).
Its coefficientwise proof also gives a quantitative bound:
at the first step every shared coefficient decreases, and the
linear coefficient drops from \(1\) to \(1/2\). Hence
\[
 \Gamma_{m,0}(\rho)-\Gamma_{m-1,1}(\rho)
 \ge\frac{\rho}{2}\ge\frac14,
\]
using \(p\le2/3\). Every nonempty low coordinate has a virtual
value at least this difference.
\end{proof}

\begin{proof}[Proof of Proposition~\ref{prop:revenue-loss}]
Every \(M_z\) allocates all items at each nonempty vertex and has
zero edge slacks. On the first layer the utility difference and
the allocation at the lower vertex both equal \(z\); on subsequent
layers they both equal \(1\).
Applying~\eqref{eq:critical-identity} to \(M_z\), and then subtracting
the identity for \(M\), proves~\eqref{eq:revenue-loss}.
The empty-vertex terms vanish by
Lemma~\ref{lem:critical-virtual-gap}. The same lemma, the allocation
bounds, and IC give nonnegativity.
\end{proof}

\subsection{Pointwise allocation and payment estimates}
\label{app:single-buyer-estimates}

The terms of~\eqref{eq:revenue-loss} are nonnegative. We first use them
to bound each vertex's allocation and utility, and then compare a new
type with the empty and singleton vertices.

\begin{lemma}\label{lem:vertex-control}
Use \(C_m,V,z\) from~\eqref{eq:control-constants}.
Every normalized IC, IR mechanism on \(\mathcal T\) satisfies
\begin{align}
 |x_j(\varnothing)-z|&\le C_m\delta
 &&(j\in[m]),\label{eq:empty-control}\\
 1-x_j(A)&\le C_m\delta
 &&(A\ne\varnothing,\ j\in[m]),\label{eq:nonempty-control}\\
 |u(A)-(|A|-1+z)|&\le C_m\delta
 &&(A\ne\varnothing).\label{eq:utility-control}
\end{align}
The vertex payment bound~\eqref{eq:vertex-payment-control} also holds.
If \(\delta=0\), the allocation and payment at every vertex agree
with those of \(M_z\).
\end{lemma}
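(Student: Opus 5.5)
The plan is to extract all four estimates from the revenue-loss identity of Proposition~\ref{prop:revenue-loss}. Each summand there is nonnegative and bounded by \(\delta\), so weights bounded below give direct control.

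\emph{Allocation at nonempty vertices.} For \(A\ne\varnothing\), the first sum of \eqref{eq:revenue-loss} contains \(f(A)\Phi_j(A)(1-x_j(A))\le\delta\). Since \(p\in[1/3,2/3]\), we have \(f(A)\ge3^{-m}\), and Lemma~\ref{lem:critical-virtual-gap} gives \(\Phi_j(A)\ge1/4\). Hence \(1-x_j(A)\le4\cdot3^m\delta\le C_m\delta\), which is \eqref{eq:nonempty-control}. The same argument bounds every edge slack by \(\sigma(A,j)\le\delta/\lambda_{|A|}\). Each \(\lambda_k\) is at least \(f\) of a top-layer vertex divided by \(k\binom mk\). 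I will use the crude bound \(\lambda_k\ge3^{-m}/(m2^m)\), or better, evaluate \(\lambda_k\) layer by layer so that the final constant fits inside \(8m^23^m\). Getting the constants inside \(C_m\) is the bookkeeping step I expect to need care; I will sharpen the estimate for \(\lambda_k\) if the crude one overshoots.

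\emph{Lowest-type allocation and utilities.} Use the normalization \(u(\varnothing)=0\). For a singleton \(\{j\}\), the edge identity gives \(u(\{j\})=x_j(\varnothing)+\sigma(\{j\},j)\). For a pair \(\{i,j\}\), the two paths give
\[
u(\{i,j\})=x_i(\varnothing)+\sigma(\{i\},i)+x_j(\{i\})+\sigma(\{i,j\},j)
=x_j(\varnothing)+\sigma(\{j\},j)+x_i(\{j\})+\sigma(\{i,j\},i).
\]
Here \(x_j(\{i\})\) and \(x_i(\{j\})\) lie within \(C\delta\) of \(1\) by the previous step, and all slacks are \(O(\delta)\). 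It follows that \(|x_i(\varnothing)-x_j(\varnothing)|=O(\delta)\). Taking \(i=1\), so that \(z=x_1(\varnothing)\), yields \eqref{eq:empty-control}. For general \(A\), I will walk up any chain from \(\varnothing\) to \(A\) and telescope. The first step contributes \(x_{j_1}(\varnothing)\approx z\), every later step contributes \(x_{j_k}(\cdot)\approx1\), and the slacks are \(O(\delta)\). This gives \(|u(A)-(|A|-1+z)|\le m\cdot O(\delta)\), which is \eqref{eq:utility-control}. The factor \(m\) from the chain length, times a factor of order \(m3^m\) from the slack and allocation bounds, is the source of the \(m^2\) in \(C_m\).

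\emph{Payments.} At a vertex we have \(s(v)=v\cdot x(v)-u(v)\), and \(s_z(v)\) has the same form with the allocations and utilities of \(M_z\). Therefore
\[
|s(v)-s_z(v)|\le\sum_j v_j|x_j(v)-x^{z}_j(v)|+|u(v)-u_z(v)|.
\]
The first term is at most \(V\) times the per-coordinate deviations, each bounded by \(C_m\delta\) from the two allocation estimates. This gives \((V+1)C_m\delta\), possibly after absorbing an extra factor of \(m\) into the slack in \(C_m\). That is \eqref{eq:vertex-payment-control}.

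\emph{The case \(\delta=0\).} If \(\delta=0\), all the bounds collapse to equalities. Then \(x(\varnothing)=z\one\), \(x(A)=\one\) for \(A\ne\varnothing\), the utilities match those of \(M_z\), and hence so do the payments, which gives exact agreement at every vertex.
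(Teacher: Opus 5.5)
Your plan follows the paper's proof. It extracts single terms from the revenue-loss identity, compares the two paths to $\{i,j\}$, telescopes along a chain, and writes $s=v\cdot x-u$ at the vertices.

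The one step that does not go through as written is the slack bound. Your crude estimate uses the top-layer term of the numerator of $\lambda_k$. It gives $\sigma(A,j)=O(m6^m\delta)$, and the factor $2^m$ cannot be absorbed into $C_m=8m^23^m$. This is not a matter of sharpening later; you need a different term.

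Keep instead the $\ell=k$ term of the numerator in \eqref{eq:edge-weights}. For $|A|=k$ it equals $\binom mk f(A)$, so it cancels the binomial in the denominator. This gives $\lambda_{|A|}\ge f(A)/|A|\ge1/(m3^m)$, hence $\sigma(A,j)\le m3^m\delta$.

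With this bound, carry the intermediate constants rather than $C_m$ through each step. You then get $|x_i(\varnothing)-x_j(\varnothing)|\le(4+2m)3^m\delta$. Along a chain you get $|u(A)-(|A|-1+z)|\le[(4+2m)+4(k-1)+km]3^m\delta\le(m^2+6m)3^m\delta\le C_m\delta$. Using $C_m\delta$ already for the first step of the chain would overshoot.

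No extra factor of $m$ is needed in the payment step. The coordinates of $v^A$ are nonnegative, so $\sum_j v^A_j|x_j(A)-x_{M_z,j}(v^A)|\le\bigl(\sum_j v^A_j\bigr)\max_j|x_j(A)-x_{M_z,j}(v^A)|\le VC_m\delta$. This gives exactly $(V+1)C_m\delta$, including at the empty vertex, where $u(\varnothing)=0$ and the weight sum is $ma\le V$.
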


\begin{proof}
Since \(p,1-p\ge1/3\), every vertex has probability at least
\(3^{-m}\). The first term of the numerator in
\eqref{eq:edge-weights} also gives
\(\lambda_{|A|}\ge f(A)/|A|\ge1/(m3^m)\).
Retaining one term at a time in~\eqref{eq:revenue-loss} yields
\begin{equation}\label{eq:individual-loss-bounds}
 1-x_j(A)\le4\,3^m\delta\quad(A\ne\varnothing),
 \qquad
 0\le\sigma(A,j)\le m3^m\delta\quad(j\in A).
\end{equation}

For distinct \(i,j\), the paths
\(\varnothing,\{i\},\{i,j\}\) and
\(\varnothing,\{j\},\{i,j\}\) give
\begin{align*}
 u(\{i,j\})&=x_i(\varnothing)+\sigma(\{i\},i)
             +x_j(\{i\})+\sigma(\{i,j\},j),\\
 u(\{i,j\})&=x_j(\varnothing)+\sigma(\{j\},j)
             +x_i(\{j\})+\sigma(\{i,j\},i).
\end{align*}
The two nonempty-vertex allocations belong to
\([1-4\,3^m\delta,1]\), and each pair of slacks has sum in
\([0,2m3^m\delta]\). Subtraction therefore gives
\begin{equation}\label{eq:empty-coordinate-difference}
 |x_i(\varnothing)-x_j(\varnothing)|
 \le(4+2m)3^m\delta.
\end{equation}
Taking \(i=1\) proves~\eqref{eq:empty-control}, since
\(4+2m\le8m^2\).

For \(A=\{j_1,\ldots,j_k\}\), let
\(A_q=\{j_1,\ldots,j_q\}\). Expanding along this path gives
\[
 u(A)=x_{j_1}(\varnothing)
       +\sum_{q=2}^k x_{j_q}(A_{q-1})
       +\sum_{q=1}^k\sigma(A_q,j_q).
\]
Using~\eqref{eq:individual-loss-bounds} and
\eqref{eq:empty-coordinate-difference}, the difference from
\(k-1+z\) has absolute value at most
\[
 \bigl[(4+2m)+4(k-1)+km\bigr]3^m\delta
 \le(m^2+6m)3^m\delta\le C_m\delta.
\]
The allocation bound~\eqref{eq:nonempty-control} follows directly
from~\eqref{eq:individual-loss-bounds}.

At a nonempty vertex,
\[
 s(v^A)-s_z(v^A)
 =-\sum_j v_j^A(1-x_j(A))
   -\bigl[u(A)-(|A|-1+z)\bigr].
\]
The coordinates of \(v^A\) are nonnegative and sum to at most \(V\),
which proves~\eqref{eq:vertex-payment-control} there.
At the empty vertex normalization gives
\[
 |s(v^\varnothing)-maz|
 =a\left|\sum_j(x_j(\varnothing)-z)\right|
 \le maC_m\delta\le(V+1)C_m\delta.
\]
All bounds vanish when \(\delta=0\).
\end{proof}

\begin{proof}[Proof of Lemma~\ref{lem:new-type-payment}]
Fix \(v=a\one+\xi\), and write
\(S=\sum_j\xi_j\) and \(t=\max_jx_j(v)\).
We use IC in both directions. Comparing each old type's truthful utility
with its utility from reporting \(v\) gives an upper bound on \(u(v)\).
Reversing the comparisons gives a lower bound. For \(S<1\), the two bounds restrict how far
\(t\) can exceed \(z\), and hence bound the payment. For \(S\ge1\),
the utility lower bound already suffices.

The empty vertex's IC constraint against reporting \(v\) gives
\(s(v)\ge a\sum_jx_j(v)\).
The same constraint for vertex \(\{j\}\) gives
\[
 s(v)\ge a\sum_\ell x_\ell(v)+x_j(v)-u(\{j\}).
\]
Choose a coordinate attaining \(t\), and use
\(u(\{j\})\le z+C_m\delta\). Combining these inequalities yields
\[
 s(v)\ge a\sum_jx_j(v)+\pos{t-z-C_m\delta}.
\]
Since \(\xi\cdot x(v)\le tS\), it follows that
\begin{equation}\label{eq:new-utility-upper}
 u(v)\le tS-\pos{t-z-C_m\delta}.
\end{equation}

In the other direction, the IC constraint for \(v\) against the
empty vertex gives
\begin{equation}\label{eq:new-utility-lower-empty}
 u(v)\ge\xi\cdot x(\varnothing)\ge zS-mC_m\delta.
\end{equation}
We also have \(u(\{1\})=z+\sigma(\{1\},1)\ge z\).
Reporting this singleton vertex therefore gives
\begin{equation}\label{eq:new-utility-lower-singleton}
 u(v)\ge u(\{1\})+(\xi-\one_{\{1\}})\cdot x(\{1\})
       \ge S-1+z-mC_m\delta.
\end{equation}
For the last inequality, the coefficient \(\xi_1-1\) is nonpositive,
so \(x_1(\{1\})\le1\) gives a lower bound of \(\xi_1-1\).
For the remaining coordinates use
\(x_j(\{1\})\ge1-C_m\delta\).

If \(S\ge1\), allocated value is at most \(ma+S\), and
\eqref{eq:new-utility-lower-singleton} gives
\[
 s(v)\le P-z+mC_m\delta=s_z(v)+mC_m\delta.
\]
Suppose next that \(S<1\). If \(t\le z+C_m\delta\), then
\(t-z\le(m+1)C_m\delta/\gamma\).
Otherwise, combine~\eqref{eq:new-utility-upper}
with~\eqref{eq:new-utility-lower-empty} to obtain
\[
 zS-mC_m\delta\le tS-t+z+C_m\delta,
 \qquad
 (1-S)(t-z)\le(m+1)C_m\delta.
\]
Here the factor \(1-S\) explains the separation assumption: dividing
by a quantity arbitrarily close to zero would give no uniform bound.
Since \(1-S\ge\gamma\), the same bound on \(t-z\) follows.
Thus
\begin{align*}
 s(v)&\le(ma+S)t-zS+mC_m\delta\\
 &=maz+(ma+S)(t-z)+mC_m\delta\\
 &\le maz+\left(\frac{V(m+1)}{\gamma}+m\right)C_m\delta\\
 &\le maz+B_m\delta.
\end{align*}
Here \(ma+S\le V\) and \(m\le(m+1)/\gamma\).
Also \(B_m\ge mC_m\), so the bound covers \(S\ge1\).
The vertex payment bound and agreement when \(\delta=0\) follow from
Lemma~\ref{lem:vertex-control}.
\end{proof}

\subsection{Stability under a small perturbation}
\label{app:single-buyer-stability}

Telescoping across the marginals, and using that tensoring with a
probability measure preserves the signed measure's \(\ell_1\) norm, gives
\begin{equation}\label{eq:product-distance}
 \|F^{\otimes m}-F_0^{\otimes m}\|_1\le m\|F-F_0\|_1.
\end{equation}

\begin{proof}[Proof of Theorem~\ref{thm:bundle-stability}]
Put \(D=F^{\otimes m}\), \(D_0=F_0^{\otimes m}\), and
\(\mathcal C=\{a,a+1\}^m\). The base revenue difference is \(-\delta\).
On old vertices use the two-sided bound~\eqref{eq:vertex-payment-control},
since their probabilities may increase or decrease; on new types use
\eqref{eq:new-type-payment}. Thus
\begin{align*}
 \Rev_D(M)-\Rev_D(M_z)
 ={}&-\delta+\sum_{v\in\mathcal C}(D(v)-D_0(v))(s(v)-s_z(v))\\
 &+\sum_{v\notin\mathcal C}D(v)(s(v)-s_z(v))\\
 \le{}&-\delta+B_m\delta\|D-D_0\|_1\le-\kappa\delta,
\end{align*}
where the last step uses~\eqref{eq:product-distance}.
Since every \(M_z\) is feasible and its revenue is affine in \(z\),
\eqref{eq:stability-optimum} follows. At an optimum a positive
\(\delta\) would give strictly less revenue than \(M_z\); hence
\(\delta=0\), and Lemma~\ref{lem:new-type-payment} gives vertex agreement.
Normalization can only raise revenue.
\end{proof}

\subsection{Encoding a count in the winning price}\label{app:single-buyer-encoding}

Let \(b_1,\ldots,b_N\) be distinct positive integers at most \(H\),
let \(T_0\in[1,mH]\) be an integer, and define
\begin{equation}\label{eq:tuple-count}
 Z_{T_0}=\#\{(i_1,\ldots,i_m)\in[N]^m:\ \sum_jb_{i_j}\ge T_0\}.
\end{equation}
For a comparison integer \(0\le k<N^m\), put
\(\theta=(k+1/2)/N^m\) and write \(t=Z_{T_0}/N^m\).
The unknown \(t\) is used only to analyze revenue. Every parameter
of the constructed distribution will depend only on the given integers
\(b_1,\ldots,b_N,T_0,k,m\) and the upper bound \(H\).

We first describe what the parameters must achieve. For this query we
will choose a new baseline \(a\), an upper bundle price \(P=ma+1>1\),
a small mass \(0<\eta<1/2\), and distinct increments \(y_\ell\in(0,1)\)
near \(1/m\). The proposed marginal is
\begin{equation}\label{eq:hard-marginal}
 F=(1/2-\eta)\delta_a+\tfrac12\delta_{a+1}
       +\frac{\eta}{N}\sum_{\ell=1}^N\delta_{a+y_\ell}.
\end{equation}
We require that fewer than \(m\) intermediate increments always sum to
less than \(1\), whereas \(m\) of them reach \(1\) precisely when
their corresponding integers sum to at least \(T_0\).

Under these requirements, the event that no coordinate is the upper
endpoint has probability \(2^{-m}\). The event that every coordinate
is intermediate has unconditional probability \(\eta^m\); conditional
on this event, the indices are independent and uniform on \([N]\).
The fraction of such tuples reaching the threshold is \(t\). Hence
\begin{equation}\label{eq:below-threshold-probability}
 Q=\Pr\left[\sum_j(v_j-a)<1\right]=2^{-m}-\eta^m t.
\end{equation}
The menu outcomes in~\eqref{eq:menu-outcomes}, with \(ma=P-1\), give
\begin{equation}\label{eq:affine-revenue}
 \Rev_{F^{\otimes m}}(M_z)
 =P(1-Q)+z(PQ-1).
\end{equation}
Within the family \(M_z\), the sign of \(PQ-1\) selects between the
two endpoint prices.
To place that comparison halfway between the counts \(k\) and \(k+1\),
choose the reciprocal price using the known \(\theta\):
\begin{equation}\label{eq:rational-baseline}
 d_0=2^{-m}-\eta^m\theta,\qquad
 P=d_0^{-1},\qquad a=\frac{P-1}{m}.
\end{equation}
Now \(Pd_0=1\), so the desired slope is
\begin{equation}\label{eq:revenue-slope}
 PQ-1=\frac{P\eta^m}{N^m}
              \left(k+\tfrac12-Z_{T_0}\right).
\end{equation}
The half-integer comparison makes this slope nonzero.

It remains to meet the threshold requirements and the stability bound.
Choose increments near \(1/m\) by setting
\begin{equation}\label{eq:intermediate-values}
 \Lambda=4m^2H,\qquad
 y_\ell=\frac1m+\frac{b_\ell-(T_0-1/2)/m}{\Lambda}
       \quad(1\le\ell\le N).
\end{equation}
For the separation below the threshold and the total added mass, take
\begin{equation}\label{eq:perturbation-parameters}
 \gamma=\frac1{2\Lambda},\qquad
 \eta=\frac{\gamma}{256m^3(m+1)6^m}.
\end{equation}
These choices determine \(d_0,P,a\) through~\eqref{eq:rational-baseline}.
Lemmas~\ref{lem:intermediate-values}--\ref{lem:perturbation-size}
verify the required increment sums, separation, and stability margin
\(\kappa\ge3/4\). For the comparison distribution, set
\(p=d_0^{1/m}\); this real root is used only in the proof, while every
parameter of the constructed \(F\) is rational. The calculations above
therefore apply to this explicit distribution.

\begin{proposition}\label{prop:forced-bit}
For~\eqref{eq:hard-marginal}, every optimal mechanism satisfies
\begin{equation}\label{eq:forced-bit}
 x_j(a\one)=
 \begin{cases}1,&Z_{T_0}\le k,\\0,&Z_{T_0}\ge k+1,\end{cases}
 \qquad j\in[m].
\end{equation}
The menus \(M_1\) and \(M_0\) are optimal in the respective cases, and
\begin{equation}\label{eq:hard-optimum}
 \OPT(F;1,m)=ma+P\eta^m\pos{t-\theta}.
\end{equation}
\end{proposition}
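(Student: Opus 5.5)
The plan is to apply Theorem~\ref{thm:bundle-stability} to the marginal~\eqref{eq:hard-marginal}, with comparison distribution \(F_0=p\delta_a+(1-p)\delta_{a+1}\) and \(p=d_0^{1/m}\). I take as given the verifications deferred to Lemmas~\ref{lem:intermediate-values}--\ref{lem:perturbation-size}:
\begin{itemize}
 \item \(F_0\) satisfies~\eqref{eq:critical}, since \(ma=P-1=p^{-m}-1\) by~\eqref{eq:rational-baseline};
 \item \(p\in[1/3,2/3]\), because \(d_0\) is within \(\eta^m\) of \(2^{-m}\) and so \(p\) is close to \(1/2\);
 \item the product support \(\supp(F)^m\) satisfies~\eqref{eq:threshold-separation} with \(\gamma=1/(2\Lambda)\), since increment sums below \(1\) are at most \(1-\gamma\);
 \item the stability margin satisfies \(\kappa\ge3/4>0\).
\end{itemize}
With these in hand, Theorem~\ref{thm:bundle-stability} gives \(\OPT(F;1,m)=\max_{0\le z\le1}\Rev_{F^{\otimes m}}(M_z)\). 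It also shows that every optimal mechanism, after normalization, agrees with some \(M_z\) at all vertices.

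Next I evaluate the family. By~\eqref{eq:affine-revenue}, \(\Rev(M_z)=P(1-Q)+z(PQ-1)\) is affine in \(z\). By~\eqref{eq:revenue-slope}, its slope has the sign of \(k+\tfrac12-Z_{T_0}\), which is never zero. Hence the maximum is uniquely attained at \(z=1\) when \(Z_{T_0}\le k\), and at \(z=0\) when \(Z_{T_0}\ge k+1\). To obtain~\eqref{eq:hard-optimum}, I substitute~\eqref{eq:below-threshold-probability} and \(Pd_0=1\). This gives \(P(1-Q)=P-1+P\eta^m t=ma+P\eta^m t\) at \(z=0\). At \(z=1\) the revenue is \(P-1=ma\) plus \(P\eta^m\theta\) after rearranging \(PQ-1=P\eta^m(\theta-t)\). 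Taking the larger of the two values gives \(ma+P\eta^m\max\{t,\theta\}\), which I must reconcile with the stated \(ma+P\eta^m\pos{t-\theta}\). The cleanest route is to write \(\Rev(M_z)=ma+P\eta^m(t+z(\theta-t))\) and recheck the constant. If the constant is really \(ma+P\eta^m\theta\) relative to the baseline, then \(\max\{t,\theta\}=\theta+\pos{t-\theta}\), and the claimed formula holds with \(ma\) read as \(P-1+P\eta^m\theta\). I would verify this bookkeeping carefully against the definition \(a=(P-1)/m\).

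The remaining step, which I expect to be the main subtlety, is the forced bit for \emph{every} optimal mechanism, not only those within the family. Let \(M\) be optimal. Raising all payments by \(u(a\one)\) keeps IC and IR and weakly increases revenue, so optimality forces \(u(a\one)=0\), and \(M\) is already normalized. Inequality~\eqref{eq:stability-inequality} then gives \(\Rev(M)\le\Rev(M_z)-\kappa\delta\) with \(z=x_1(a\one)\) and \(\kappa>0\). Optimality therefore forces \(\delta=0\) and \(\Rev(M_z)=\OPT\). Uniqueness of the maximizing \(z\) then pins \(x_1(a\one)\) to \(1\) or \(0\). Finally, Lemma~\ref{lem:new-type-payment}, applied with \(\delta=0\), gives vertex agreement with \(M_z\), and so \(x_j(a\one)=z\) for every \(j\), proving~\eqref{eq:forced-bit}. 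The optimality of \(M_1\) and \(M_0\) in the respective cases is exactly the family maximization above.
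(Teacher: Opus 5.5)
Your argument for the forced bit is the paper's. An optimum is automatically normalized. Then \eqref{eq:stability-inequality} with \(\kappa>0\) forces \(\delta=0\) and \(\Rev(M_z)=\OPT(F;1,m)\), the nonzero slope \eqref{eq:revenue-slope} leaves only \(z\in\{0,1\}\), and Lemma~\ref{lem:new-type-payment} transfers \(z\) to every coordinate \(x_j(a\one)\).

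The gap is the revenue formula \eqref{eq:hard-optimum}, which you do not establish. You arrive at \(ma+P\eta^m\max\{t,\theta\}\), which exceeds the claimed value by \(P\eta^m\theta>0\). Your proposed reconciliation, reading \(ma\) as \(P-1+P\eta^m\theta\), contradicts \(a=(P-1)/m\) in \eqref{eq:rational-baseline}. So the third claim of the proposition is left with a wrong value and an invalid patch.

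The slip is in evaluating \(P(1-Q)\), where you replaced \(P2^{-m}\) by \(1\). In fact \(Pd_0=1\) with \(d_0=2^{-m}-\eta^m\theta\) gives \(P2^{-m}=1+P\eta^m\theta\). Hence \(PQ-1=P\eta^m(\theta-t)\), the identity you yourself use for the slope. Using it consistently,
\[
 \Rev(M_0)=P(1-Q)=(P-1)-(PQ-1)=ma+P\eta^m(t-\theta),\qquad
 \Rev(M_1)=P(1-Q)+(PQ-1)=P-1=ma .
\]
These are exactly the paper's endpoint revenues, and their maximum is \(ma+P\eta^m\pos{t-\theta}\) with no reinterpretation of \(ma\). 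Your two endpoint values also contradict each other: \(\Rev(M_1)=P-1\) holds identically in \(Q\), so once \(PQ-1=P\eta^m(\theta-t)\) is accepted, \(\Rev(M_0)\) cannot equal \(ma+P\eta^m t\). With this one-line correction, the rest of your proof goes through.
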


\begin{proof}[Proof of Proposition~\ref{prop:forced-bit}]
Stability forces \(\delta=0\), so every optimum agrees at the vertices
with a revenue-maximizing \(M_z\). The slope in~\eqref{eq:revenue-slope}
is nonzero and selects \(z=1\) or \(z=0\) as stated.
The endpoint revenues are \(ma\) and \(ma+P\eta^m(t-\theta)\);
their maximum is~\eqref{eq:hard-optimum}.
\end{proof}

\subsection{Verification of the rational perturbation}

Use the increments and perturbation size in
\eqref{eq:intermediate-values}--\eqref{eq:perturbation-parameters},
the baseline in~\eqref{eq:rational-baseline}, and the marginal
in~\eqref{eq:hard-marginal}.

\begin{lemma}\label{lem:intermediate-values}
The numbers \(y_\ell\) are distinct and satisfy
\begin{equation}\label{eq:intermediate-range}
 \left|y_\ell-\frac1m\right|\le\frac1{4m^2},
 \qquad 0<y_\ell<1.
\end{equation}
The product support of~\eqref{eq:hard-marginal} satisfies
\eqref{eq:threshold-separation} with the stated \(\gamma\).
Moreover,
\begin{equation}\label{eq:sum-threshold}
 \sum_{j=1}^m y_{i_j}\ge1
 \quad\Longleftrightarrow\quad
 \sum_{j=1}^m b_{i_j}\ge T_0.
\end{equation}
\end{lemma}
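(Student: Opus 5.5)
The whole lemma follows from one rewriting of \(\sum_j y_{i_j}\). Summing \eqref{eq:intermediate-values} over a tuple \((i_1,\ldots,i_m)\) gives the exact identity
\[
 \sum_{j=1}^m y_{i_j}-1=\frac{\sum_{j}b_{i_j}-(T_0-1/2)}{\Lambda},
\]
because the \(m\) copies of \((T_0-1/2)/m\) add to \(T_0-1/2\). The numerator is an integer minus a half-integer, so it is never zero and always has absolute value at least \(1/2\). This already gives \eqref{eq:sum-threshold}: the tuple sum is at least \(1\) exactly when \(\sum_j b_{i_j}>T_0-1/2\), and by integrality that means \(\sum_j b_{i_j}\ge T_0\).

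Distinctness of the \(y_\ell\) is immediate, since \(y_\ell\) is a strictly increasing affine function of \(b_\ell\) and the \(b_\ell\) are distinct. For \eqref{eq:intermediate-range}, use \(1\le b_\ell\le H\) and \(1\le T_0\le mH\). These give \((T_0-1/2)/m\in(0,H]\), so \(|b_\ell-(T_0-1/2)/m|<H\). Dividing by \(\Lambda=4m^2H\) yields \(|y_\ell-1/m|\le 1/(4m^2)\). Since \(1/(4m^2)<1/m\) and \(1/m+1/(4m^2)<1\) for \(m\ge2\), we also get \(0<y_\ell<1\).

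For the separation property \eqref{eq:threshold-separation}, take a support point \(v=a\one+\xi\) with every \(\xi_j\in\{0,1\}\cup\{y_\ell\}\) and \(\sum_j\xi_j<1\). No coordinate can equal \(1\), so every \(\xi_j\) is \(0\) or an increment. I split according to the number \(k\) of intermediate coordinates.

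If \(k=m\), the identity above shows the shortfall \(1-\sum_j\xi_j\) is at least \(\tfrac12/\Lambda=\gamma\).

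If \(k\le m-1\), the range bound gives
\[
 \sum_j\xi_j\le (m-1)\Bigl(\frac1m+\frac1{4m^2}\Bigr)=1-\frac1m+\frac{m-1}{4m^2}\le 1-\frac{3}{4m}.
\]
This confirms the standing requirement that fewer than \(m\) increments sum to less than \(1\). It also gives a shortfall of at least \(3/(4m)\), which is far larger than \(\gamma=1/(8m^2H)\).

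The only delicate point is the \(k=m\) case. There, exact rationality and the half-integer offset are what keep the sum away from \(1\). Once the summed identity is written down, everything else is routine bookkeeping, so I expect no real obstacle.
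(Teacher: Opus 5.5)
Your proposal is correct and follows the paper's own argument: the summed identity with its nonzero half-integer numerator gives both the threshold equivalence and separation $\gamma=1/(2\Lambda)$ when all $m$ coordinates are intermediate, while the range bound $|y_\ell-1/m|\le 1/(4m^2)$ gives shortfall at least $3/(4m)\ge\gamma$ when at most $m-1$ are. As in the paper, $0<y_\ell<1$ uses the standing assumption $m\ge2$, which you correctly flag.
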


\begin{proof}
Both \(b_\ell\) and \((T_0-1/2)/m\) lie in \([0,H]\), so
\eqref{eq:intermediate-range} follows from \(\Lambda=4m^2H\).
In particular,
\[
 y_\ell\ge\frac{4m-1}{4m^2}>0,
 \qquad
 y_\ell\le\frac12+\frac1{16}<1.
\]
Distinct letters give distinct \(y_\ell\).

Every increment belongs to \(\{0,1,y_1,\ldots,y_N\}\).
If one increment equals \(1\), their sum is at least \(1\).
If none equals \(1\)\ and at most \(m-1\) are intermediate, then
\[
 S\le(m-1)\left(\frac1m+\frac1{4m^2}\right)
   =1-\frac3{4m}-\frac1{4m^2}\le1-\gamma,
\]
since \(\gamma=1/(8m^2H)\le3/(4m)\).
If all \(m\) increments are intermediate, then
\begin{equation}\label{eq:encoded-sum}
 S=1+\frac{\sum_jb_{i_j}-T_0+1/2}{\Lambda}.
\end{equation}
The numerator after the \(1\) is a nonzero half-integer.
An integer sum at least \(T_0\) gives \(S\ge1+\gamma\);
an integer sum at most \(T_0-1\) gives \(S\le1-\gamma\).
This proves both claims.
\end{proof}

\begin{lemma}\label{lem:perturbation-size}
The distribution in~\eqref{eq:hard-marginal} has exactly \(N+2\)
positive rational support points, all with positive rational
probabilities. There is a real \(p\in[1/3,2/3]\) for which the
endpoint distribution \(F_0\) satisfies~\eqref{eq:critical}, and
\begin{equation}\label{eq:perturbation-size}
 \|F-F_0\|_1=2\eta,
 \qquad mB_m\|F-F_0\|_1\le\frac14.
\end{equation}
\end{lemma}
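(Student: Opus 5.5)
The plan has three parts: exhibit the support, construct \(p\) from \(d_0\) and bound it, and compute the \(\ell_1\) distance and the stability margin. The only genuine analytic content is in the second part.

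\emph{Support.} The support of \eqref{eq:hard-marginal} is \(\{a,a+1,a+y_1,\ldots,a+y_N\}\). By Lemma~\ref{lem:intermediate-values} the \(y_\ell\) are distinct and lie in \((0,1)\), so these are exactly \(N+2\) distinct points. Each is rational because \(a=(P-1)/m\), \(P=d_0^{-1}\) and \(d_0=2^{-m}-\eta^m\theta\) are built rationally from rational data. The probabilities \(1/2-\eta\), \(1/2\) and \(\eta/N\) are positive rationals, since \(\eta<1/2\) by \eqref{eq:perturbation-parameters}. Positivity of the support requires \(a>0\), that is, \(P>1\), which amounts to \(d_0<1\). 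This is immediate from \(d_0\le2^{-m}<1\).

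\emph{The real parameter \(p\).} I would set \(p=d_0^{1/m}\). Then \(p^{-m}=P\), and the definition \(ma=P-1\) gives exactly \eqref{eq:critical} with \(F_0=p\delta_a+(1-p)\delta_{a+1}\). It remains to place \(p\) in \([1/3,2/3]\).
\begin{itemize}
\item The upper bound is simple: \(d_0\le2^{-m}\) gives \(p\le1/2\).
\item For the lower bound, use \(\theta<1\) and \(\eta\le1/4\). Then \(d_0\ge2^{-m}-\eta^m\ge2^{-m}(1-2^{-m})\ge2^{-m-1}\), so \(p\ge\tfrac12\,2^{-1/m}\ge1/4\).
\end{itemize}
This lower bound falls short of \(1/3\). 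The fix is to exploit how small \(\eta\) really is. We have \(\eta\le1/(256\cdot6)\), so \(\eta^m\le2^{-m}\cdot(1/768)^m\), and hence \(d_0\ge2^{-m}(1-768^{-m})\). Taking \(m\)th roots gives \(p\ge\tfrac12(1-768^{-m})^{1/m}\ge\tfrac12(1-768^{-m})\), which is at least \(1/3\).

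\emph{The \(\ell_1\) distance.} Compare \(F\) and \(F_0\) atom by atom.
\begin{itemize}
\item At \(a\): the masses are \(1/2-\eta\) and \(p\).
\item At \(a+1\): the masses are \(1/2\) and \(1-p\).
\item At the new atoms: total mass \(\eta\) in \(F\) and \(0\) in \(F_0\).
\end{itemize}
So \(\|F-F_0\|_1=|1/2-\eta-p|+|p-1/2|+\eta\).

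This is where I expect the main obstacle. The claimed equality \(\|F-F_0\|_1=2\eta\) holds only if \(p\) is placed so that the two endpoint discrepancies sum to \(\eta\). Since \(1/2-\eta-p+(p-1/2)=-\eta\), this happens exactly when \(1/2-\eta\le p\le1/2\). We already have \(p\le1/2\), so the remaining task is the lower bound \(p\ge1/2-\eta\).

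From the previous step, \(p\ge\tfrac12(1-768^{-m})\), so \(1/2-p\le\tfrac12\,768^{-m}\). I would try to show this is at most \(\eta\). The quantity \(768^{-m}\) is much smaller than \(\eta\): by \eqref{eq:perturbation-parameters} and \(\gamma=1/(8m^2H)\), \(\eta\) decays only like \(1/(m^6\,6^m H)\), while \(768^{-m}\) decays far faster in \(m\). I have not fully verified that the dependence on \(H\) is harmless, but this is the intended direction. If the precise equality fails, it suffices to prove \(\|F-F_0\|_1\le2\eta\), since only the inequality is used later; that weaker bound follows from the same estimate.

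\emph{Stability margin.} Finally, substitute into \(mB_m\cdot2\eta\). With \(B_m=(m+1)(V+1)C_m/\gamma\) and \(C_m=8m^23^m\), we get \(\eta B_m=(V+1)/(32m\cdot2^m)\). The \(\gamma\) and the factor \((m+1)\) cancel against the denominator of \(\eta\), and \(3^m/6^m=2^{-m}\). The remaining work is to bound \(V+1=m(a+1)+1=P+m\) using \(P\le2^{m+1}\). This gives \(mB_m\cdot2\eta=(P+m)/(16\cdot2^m)\le(2^{m+1}+m)/(16\cdot2^m)\le1/4\), completing \eqref{eq:perturbation-size}.
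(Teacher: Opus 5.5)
Your treatment of the support, of \(p=d_0^{1/m}\) satisfying \eqref{eq:critical}, of \(p\in[1/3,2/3]\), and of the margin \(2mB_m\eta=(P+m)/(16\cdot2^m)\le1/4\) is correct. The gap is in the one step that carries the lemma, namely \(p\ge1/2-\eta\).

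\textbf{Why your route fails.} You replace \(\eta^m\) by the \(H\)-free bound \(2^{-m}768^{-m}\) and then need \(\tfrac12\,768^{-m}\le\eta\). That comparison is false in general. Since \(\gamma=1/(8m^2H)\), we have \(\eta=1/(2048m^5(m+1)H6^m)\). The comparison is therefore equivalent to \(1024m^5(m+1)H\le128^m\). This already fails at \(m=2\) for every \(H\ge1\). Moreover, in the reduction \(H=b_{r,1}\ge W=1+r\sum_iw_i\) grows with the subset-sum weights independently of \(m=r\). So the \(H\)-dependence is exactly what breaks the argument.

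\textbf{Why the fallback fails.} Because \(p\le1/2\), \(\|F-F_0\|_1\) equals \(2\eta\) when \(p\ge1/2-\eta\) and equals \(1-2p>2\eta\) otherwise. Hence \(\|F-F_0\|_1\le2\eta\) is equivalent to \(p\ge1/2-\eta\); it is not a weaker statement. Your estimate only gives \(\|F-F_0\|_1\le\max\{2\eta,768^{-m}\}\). The \(768^{-m}\) term carries no factor \(\gamma\) to cancel the \(1/\gamma\) in \(B_m\), so the margin bound cannot be closed this way either.

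\textbf{The fix.} The missing idea is to keep \(\eta\) rather than discard it. The paper uses the binomial inequality
\[
2^{-m}=((1/2-\eta)+\eta)^m\ge(1/2-\eta)^m+\eta^m .
\]
Together with \(\theta<1\), this gives \(d_0>2^{-m}-\eta^m\ge(1/2-\eta)^m\), hence \(p\ge1/2-\eta>1/3\). The two endpoint decreases \(p-1/2+\eta\) and \(1/2-p\) are then nonnegative and sum to \(\eta\), which yields \(\|F-F_0\|_1=2\eta\) exactly.

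Equivalently, inside your own computation, \(d_0\ge2^{-m}(1-(2\eta)^m)\) gives
\[
p\ge\tfrac12(1-(2\eta)^m)^{1/m}\ge\tfrac12(1-(2\eta)^m),
\]
so \(1/2-p\le2^{m-1}\eta^m=\eta(2\eta)^{m-1}\le\eta\). With this repair, the rest of your proposal goes through as written.
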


\begin{proof}
We have \(\Lambda\ge16\), \(0<\gamma\le1/32\), and
\(0<\eta<\gamma<1/6\). Also \(0<\theta<1\), so
\[
 0<2^{-m}-\eta^m<d_0<2^{-m}.
\]
Thus \(P>2^m>1\) and \(a>0\).
Lemma~\ref{lem:intermediate-values} places the distinct intermediate
points strictly between the endpoints. All probabilities in
\eqref{eq:hard-marginal} are positive rationals and sum to one.

For the analysis, choose \(p=d_0^{1/m}\). Then
\(p<1/2\) and \(p^{-m}=P=ma+1\).
The binomial expansion gives
\[
 2^{-m}=\bigl((1/2-\eta)+\eta\bigr)^m
 \ge(1/2-\eta)^m+\eta^m,
\]
and therefore \(p\ge1/2-\eta>1/3\).
Relative to \(F_0=p\delta_a+(1-p)\delta_{a+1}\), the two endpoint
probabilities decrease by \(p-1/2+\eta\) and \(1/2-p\), respectively.
These nonnegative decreases sum to \(\eta\); the new mass is also
\(\eta\). Hence \(\|F-F_0\|_1=2\eta\).

To check the quantitative condition, use \(\eta<1/4\) to obtain
\[
 d_0\ge2^{-m}-4^{-m}\ge2^{-(m+1)},
 \qquad P\le2^{m+1}.
\]
Consequently \(V+1=P+m\le2^{m+2}\), and
\[
 B_m\le\frac{32m^2(m+1)6^m}{\gamma}.
\]
Substituting the definition of \(\eta\) now gives
\[
 mB_m\|F-F_0\|_1=2mB_m\eta
 \le
 2m\frac{32m^2(m+1)6^m}{\gamma}
       \frac{\gamma}{256m^3(m+1)6^m}
 =\frac14.
\]
The auxiliary real root \(p\) is used only in this comparison;
the input distribution itself is specified by rational arithmetic.
\end{proof}

\subsection{The source count and valid queries}
\paragraph{Recovering the source count.}
For a \(\#\textsc{SubsetSum}\) instance with positive weights
\(w_1,\ldots,w_r\) and target \(T\), let
\begin{equation}\label{eq:subset-count}
 Z=\#\{A\subseteq[r]:\ \sum_{i\in A}w_i=T\}.
\end{equation}
This is a parsimoniously \(\#\mathrm P\)-complete
problem~\citep{fh2009,hk2016}. Instances with \(r<2\) can be checked
directly, and targets outside \([0,\sum_iw_i]\) give zero. Otherwise set
\begin{align}
 m&=r,\quad N=2r,\quad\beta=r+1,\quad
 W=1+r\sum_iw_i,\label{eq:alphabet-base}\\
 b_{i,0}&=W\beta^{i-1},\qquad
 b_{i,1}=W\beta^{i-1}+w_i\quad(1\le i\le r),\label{eq:alphabet-letters}\\
 R&=W\sum_i\beta^{i-1}+T,\qquad H=b_{r,1}.\label{eq:alphabet-target}
\end{align}
Relabel the letters \(b_1,\ldots,b_N\). The paired-marker lemma
(Lemma~\ref{lem:paired-markers}, with offsets \(w_i\)) gives exactly
\(r!\) ordered tuples per solution subset. Since all letters are integers,
\(Z_R-Z_{R+1}\) counts exactly the tuples of sum \(R\), giving
\begin{equation}\label{eq:recover-subset-count}
 Z=(Z_R-Z_{R+1})/r!.
\end{equation}
Appendix~\ref{app:single-buyer-queries} checks the marker hypotheses and
that both thresholds lie in \([1,mH]\). Related positional encodings
appear in~\citet[Section~6]{cdpsy2018}.

\label{app:single-buyer-queries}

The parameters in~\eqref{eq:alphabet-base} satisfy
\(W>r\max_iw_i\) and \(0\le T<W\), as required by the paired-marker
lemma. For the queried thresholds, \(R\ge1\) and
\[
 r\beta^{r-1}-\sum_{i=1}^r\beta^{i-1}
 =\sum_{i=1}^{r-1}(\beta^{r-1}-\beta^{i-1})\ge1.
\]
Since \(H=b_{r,1}\), this gives
\(rH-R\ge W+rw_r-T\ge W-T\ge1\), so both \(R,R+1\) lie in
\([1,mH]\).

\paragraph{Binary-search queries.}
Bisect a non-singleton interval \([c_{\min},c_{\max}]\subseteq[0,N^m]\)
at \(k=\lfloor(c_{\min}+c_{\max})/2\rfloor\). Then
\(0\le k<c_{\max}\le N^m\), so Proposition~\ref{prop:forced-bit} applies
and its half-integer threshold satisfies the promise. At most
\(\lceil\log_2(N^m+1)\rceil=O(m\log N)\) queries suffice.

\subsection{Encoding length and computational consequences}
\label{sec:encoding}

Let \(L\) be the total binary length of the
\(\#\textsc{SubsetSum}\) input. In particular, \(r\le L\).
The reduction uses some large numbers; we now bound the length
of their representations.

\paragraph{Letters and thresholds.}
The scale \(W=1+r\sum_iw_i\) has polynomial bit length.
Lemma~\ref{lem:paired-markers} therefore gives the same bound for all
letters and both queried thresholds. There are \(2r\) letters, and the
unary representation of \(m=r\) has length \(r\). Counts and comparison
integers require \(O(m\log N)\) bits; the divisor \(r!\le r^r\) requires
\(O(r\log r)\) bits.

\paragraph{Rational parameters.}
In~\eqref{eq:intermediate-values}--\eqref{eq:perturbation-parameters},
\(y_\ell\) and \(\gamma\) have \(O(\log m+\log H)\)-bit numerators and
denominators, and
\[
 \eta=1/E_\eta,\qquad E_\eta=2048m^5(m+1)H6^m,
 \qquad \log_2 E_\eta=O(m+\log m+\log H).
\]
The rational \(\theta=(2k+1)/(2N^m)\) uses \(O(m\log N)\) bits.
The positive difference \(d_0=2^{-m}-\eta^m\theta\) has common
denominator \(2^mN^mE_\eta^m\); hence \(d_0\), its reciprocal \(P\),
and \(a=(P-1)/m\) all use
\(O(m+m\log N+m\log E_\eta)\) bits. The support points and probabilities
in~\eqref{eq:hard-marginal} follow by rational arithmetic of this size,
so the construction takes polynomial time. Formula~\eqref{eq:hard-optimum}
has the same bit bound because \(0\le Z_{T_0}\le N^m\).

\paragraph{Values in the unit interval.}
Finally set \(c=a+1\) and divide every value and payment by \(c\).
The resulting common marginal is
\begin{equation}\label{eq:scaled-hard-marginal}
 \widehat F=(1/2-\eta)\delta_{a/(a+1)}
       +\frac12\delta_1
       +\frac{\eta}{N}
          \sum_{\ell=1}^N\delta_{(a+y_\ell)/(a+1)}.
\end{equation}
It has \(N+2=2m+2\) distinct rational support points in \((0,1]\),
with unchanged probabilities and polynomial encoding length.
The map \(\widehat x(\widehat v)=x(c\widehat v)\),
\(\widehat s(\widehat v)=s(c\widehat v)/c\) divides every utility and
revenue by \(c\), and is invertible. It therefore preserves feasibility,
IC, IR, and optimal allocations, with
\(\OPT(\widehat F;1,m)=\OPT(F;1,m)/c\). The two prices become
\(ma/(a+1)\) and \((ma+1)/(a+1)\); the lowest-type allocation is unchanged.

\subsection{Completing the two computational reductions}
\label{app:single-buyer-conclusion}

\begin{proof}[Proof of Lemma~\ref{lem:forced-allocation-query}]
For each threshold \(R,R+1\), Proposition~\ref{prop:forced-bit}
implements comparisons with the integer \(Z_{T_0}\in[0,N^m]\).
Binary search needs \(O(m\log N)\) queries; the half-integer threshold
makes every query satisfy the promise.
Appendix~\ref{sec:encoding} verifies polynomial encoding length and
construction time, and shows that dividing values and payments by
\(a+1\) preserves all optimal allocations and divides revenue by \(a+1\).
The resulting marginal has \(N+2=2m+2\) rational support points in
\((0,1]\). Equation~\eqref{eq:recover-subset-count} recovers \(Z\).
The two-price and universal-allocation claims follow from
Proposition~\ref{prop:forced-bit}.
\end{proof}

\begin{proof}[Proof of Theorem~\ref{thm:single-buyer-hardness}]
For mechanism construction, evaluate the returned optimum's first
allocation probability at the lowest type. This report has polynomial
length, and Lemma~\ref{lem:forced-allocation-query} guarantees the same
answer for every optimum, implementing the counting reduction.
Appendix~\ref{app:single-buyer-sampling} records the exact-sampling variant.

For exact revenue, use the same construction with a fixed comparison
parameter. For each \(T_0\in\{R,R+1\}\), use \(k=0\), hence
\(\theta=1/(2N^m)\). The optimum equals \(ma\) exactly when
\(Z_{T_0}=0\); otherwise~\eqref{eq:hard-optimum} gives
\[
 Z_{T_0}=\tfrac12+
       \frac{N^m}{P\eta^m}\bigl(\OPT(F;1,m)-ma\bigr).
\]
For the scaled marginal first multiply the answer by \(a+1\).
Two exact revenue queries and~\eqref{eq:recover-subset-count} recover
the source count.

The rational encoding and the two-price optimality property were proved
in the preceding subsections. Both conclusions therefore apply to the stated restricted family.
\end{proof}

\subsection{Exact sampling}
\label{app:single-buyer-sampling}

The forced allocation bit used in Theorem~\ref{thm:single-buyer-hardness}
can also be read from one exact sample: a zero--one allocation probability
\(b\) makes the realized allocation equal to \(b\) with probability one.
Suppose that constructing an optimum and sampling its allocation is always
correct and has a uniform expected polynomial running-time bound
\(p_{\rm time}\). If a source instance of length \(L\) makes at most
\(q(L)\) adaptive queries of length at most \(h(L)\), fresh randomness
at each call bounds the total expected query time by
\(q(L)p_{\rm time}(h(L))\), by conditional expectation. Thus every
\(\#\mathrm P\) function would have a zero-error expected polynomial-time
algorithm.

\section{Proofs for Two-Item Revenue Hardness}
\label{app:two-items}

We first assemble the three lemmas stated in Section~\ref{sec:two-items}
into the complete reduction. The subsequent subsections prove the
revenue-extraction, capacity, and realization lemmas, then justify the
exit-class inversion and all precision choices.

\subsection{Completing the revenue-hardness reduction}
\label{app:two-items-conclusion}

\begin{proof}[Proof of Theorem~\ref{thm:ti-hardness}]
Take positive integer weights \(A_1,\ldots,A_s\), target \(T_0\ge0\),
and let \(N_k\) count solutions of cardinality \(k\), for \(0\le k\le s\).
Instances with \(s<3\) are immediate; otherwise set \(n=s+1\), \(J=2s\),
and obtain \(z,\alpha_0,\alpha_*,R\) from Lemma~\ref{ti-lem-realization}.
These depend only on \(J\), so we can fix the law of \(W\) before
choosing the slopes and the realized prior.
Choose \(B=s+1\) and polynomial-bit integers
\(C>s\max_iA_i+T_0\), \(M>s(C+\max_iA_i)+T_0\). Form
\[
 W_i^{\rm out}=MB^i,\qquad W_i^{\rm in}=MB^i+C+A_{i+1}
 \quad(0\le i<s),
\]
and list these in decreasing order as \(W_1,\ldots,W_{2s}\).
Give \(W_j\) probability \(R^j/Z_R\), where \(Z_R=\sum_{j=1}^{2s}R^j\).
Use this law in the definition of \(\mathcal H\), and set
\[
 D_k=M\sum_{i=0}^{s-1}B^i+kC+T_0.
\]
Apply Lemma~\ref{lem:paired-markers} with offsets \(C+A_i\) and
target offset \(kC+T_0\). Since \(C>s\max_iA_i+T_0\), a successful
subset has exactly \(k\) in-members. In the decreasing alphabet order,
\(p_{i,1}/p_{i,0}=1/R\) and
\(\prod_i p_{i,0}=R^{s(s+1)}/Z_R^s\). The lemma's weighted count gives
\begin{equation}\label{eq:ti-count-mass}
 \Pr\!\left[\sum_{\nu=1}^sW^{(\nu)}=D_k\right]
 =s!N_kR^{s(s+1)-k}/Z_R^s.
\end{equation}
For every integer \(x\),
\(\pos{D+1-x}-2\pos{D-x}+\pos{D-1-x}=\ind{x=D}\).
Taking expectations and using~\eqref{eq:ti-count-mass} gives
\begin{equation}\label{eq:ti-count-recovery}
 N_k=\frac{Z_R^s}{s!R^{s(s+1)-k}}
 [\mathcal H(D_k+1)-2\mathcal H(D_k)+\mathcal H(D_k-1)].
\end{equation}

For the fixed target count \(n=s+1\), use \(b_0,\kappa,\varepsilon_*\)
from Lemma~\ref{lem:ti-capacity}, and set
\[
 \delta=\frac{\min\{\varepsilon_*,1/100\}}{2(2+W_1+D_s)},\qquad
 \beta_0=\kappa,\qquad \beta_j(T)=b_0+\delta(W_j-T/s).
\]
For every queried \(T=D_k-1,D_k,D_k+1\), these slopes meet both lemmas'
hypotheses. Their realized priors \(F(T)\) have the same probability
parameters. Terms containing \(\beta_0=\kappa\) contribute zero to
\eqref{eq:ti-capacity-value}; all others have positive-part argument
\(\delta(T-\sum_\nu W_{j_\nu})\). Consequently
\begin{equation}\label{eq:ti-count-capacity}
 Y_n^+(T)=\mathcal A(T)-c\mathcal H(T),\qquad
 c=\frac{2\delta}{\kappa}(\alpha_*Z_R)^s>0,
\end{equation}
where \(\mathcal A(T)=2\sum_{r=0}^{n-1}\lambda(T)^r\) is explicitly
computable and \(c\) is independent of \(T\).

For finite-precision recovery, take
\begin{equation}\label{eq:ti-count-precision}
 \pi_{\min}=R^{2s}/Z_R,\qquad
 \tau=\frac{c\pi_{\min}^s}{64},\qquad
 e=\frac{\tau}{2n}\left(\frac z{2-z}\right)^{n-1}.
\end{equation}
Lemma~\ref{lem:ti-transfer} obtains \(Y_j\) within \(e\) for
\(j=1,\ldots,n\), keeping \(\kappa\) determined by \(n\), not \(j\).
Equations~\eqref{eq:ti-exit-inverse}--\eqref{eq:ti-exit-error} give
\(Y_n^+(T)\) within \(\tau/2\), hence \(\mathcal H(T)\) within
\(\pi_{\min}^s/64\). The coefficient of \(N_k\) in
\eqref{eq:ti-count-mass} is at least \(s!\pi_{\min}^s\).
The three errors in~\eqref{eq:ti-count-recovery} have absolute coefficients
summing to \(4\), so the count error is at most
\(4(\pi_{\min}^s/64)/(s!\pi_{\min}^s)=1/(16s!)<1/2\).
Rounding to the nearest integer recovers each \(N_k\); their sum is the
source count.

All markers and slopes have polynomial bit length. The realization
lemma preserves this bound, and the powers above have exponents
\(O(s^2)\); thus \(\log(1/e)\) is polynomial. Lemma~\ref{lem:ti-transfer}
uses only polynomial-bit revenue comparisons. There are \(3(s+1)\)
thresholds, at most \(s+1\) buyers per call, and \(4s+4\) scalar support
points after adding the high atom. Scaling by \(H\) puts all values in
\((0,1]\) while preserving the common i.i.d.\ prior. This gives the
claimed polynomial-time Turing reduction with unary buyer count.
Alternatively, approximate each normalized high-atom revenue to
\(\epsilon=\sigma e/(4H)\); the same recovery proves the final claim.
\end{proof}

\subsection{Revenue extraction}\label{app:two-items-transfer}

Write \(a=\min\supp(F)\) and let \(f(v)\) be the probability of \(v\)
under \(F^{\otimes2}\).
Each conditional menu may be normalized by adding the lowest type's
utility to all its payments. IC and nonnegative allocations imply that
this type minimizes utility, so the change preserves IR and increases
revenue. The two IC comparisons with \((a,a)\) then give
\[
 0\le a(x_1(v)+x_2(v))\le s(v)\le v\cdot x(v)\le2b.
\]
Consequently the normalized finite mechanism spaces are compact.
The set in \eqref{eq:ti-face-value} is nonempty: serve one buyer
optimally. The single-buyer bound forces her two coordinate maxima to
sum to at least two, so both equal one.

For a single-buyer menu with actual capacities \((u,v)\) and \(u+v>0\),
average it with its item-swapped copy. Each new capacity is at most
\((u+v)/2\). Dividing all allocations and payments by this quantity gives
a feasible IC, IR mechanism of revenue at most \(2\gamma\), so the
original revenue is at most \(\gamma(u+v)\). The zero-capacity case
follows from IR. Independence applies this bound to every conditional
menu, proving \(R(M)\le\gamma Y(M)\).

\begin{proof}[Proof of Lemma~\ref{lem:ti-transfer}]
Consider the single-buyer polytope \(Q\) in the \(D=3K^2+2\) variables
\[
 (x_1(v),x_2(v),s(v))_{v\in\supp(F)^2},\quad u_1,u_2.
\]
Impose all IC and IR constraints, lowest utility zero, and
\(0\le x_j(v)\le u_j\le1\),
\(0\le s(v)\le2b\). Its nonnegative linear gap is
\(g(q)=\gamma(u_1+u_2)-\sum_v f(v)s(v)\).
After clearing denominators row by row, let \(A\ge1\) bound the integer
coefficients and right-hand sides, and choose an integer \(C\ge1\) making
the coefficients of \(Cg\) integral. Set
\[
 \Delta=(CD!A^D)^{-1},\qquad
 \sigma=\min\{\gamma/2,\Delta/4\},\qquad t=\gamma-\sigma.
\]
Cramer's rule shows that every vertex of \(Q\) with positive gap has
gap at least \(\Delta\). This remains true if \(Q\) is lower-dimensional:
the tight normals at a vertex span the ambient space.
In a convex decomposition of \(q\) into vertices, replace each
positive-gap vertex by the zero menu. The resulting \(q'\) has zero gap,
its allocations decrease coordinatewise, and
\[
 0\le(u_1+u_2)-(u'_1+u'_2)\le2g(q)/\Delta.
\]
At a zero-gap point the auxiliary capacities equal the actual maxima,
by the single-buyer bound and \(\gamma>0\). Apply this replacement to
every conditional menu. Allocations decrease, preserving supply, while
each menu remains DSIC and IR. The resulting zero-gap mechanism has
capacity at most \(Y_n\); hence, with \(G=\gamma Y(M)-R(M)\ge0\),
\[
 Y(M)\le Y_n+2G/\Delta,\qquad
 R(M)-tY(M)=\sigma Y(M)-G
 \le\sigma Y_n-(1-2\sigma/\Delta)G\le\sigma Y_n.
\]
A mechanism attaining~\eqref{eq:ti-face-value} gives equality.
Normalization only increases \(R\), so the bound covers all mechanisms.
The local polytope has polynomial dimension and rational coefficient
length; thus \(\Delta,\sigma,t\) have polynomial bit length.

For the finite high-atom bound, use \(N,F_H,C_H\) from
\eqref{eq:ti-high-atom}. We prove \eqref{eq:ti-high-atom-error}.
For a base mechanism, write \(S(M)=N-Y(M)\) and
\(J(M)=R(M)+tS(M)\); its maximum is \(Nt+\sigma Y_n\).
Let \(\eta=t/H\).

For the upper bound, normalize every conditional menu of a mechanism on
\(F_H\), then restrict it to base reports, obtaining \(M\).
With exactly one high coordinate \((i,j)\), its buyer can imitate a base
report attaining \(y_{ij}\), obtaining utility at least
\(Hy_{ij}-2b\). Total welfare is at most \(H+b\), and all other
utilities are nonnegative, so total payment is at most
\(H(1-y_{ij})+4b\). At least two high coordinates occur with probability
at most \(\binom N2\eta^2\), and welfare is at most \(2H\). Thus
\[
 \Rev_{F_H^{\otimes 2n}}(M_{\rm ext})
 \le(1-\eta)^NR(M)+\eta(1-\eta)^{N-1}[HS(M)+4bN]
       +2H\binom N2\eta^2
 \le J(M)+C_H/H.
\]
Normalization increases revenue, so the bound covers all mechanisms.

For the lower bound, take a normalized base mechanism maximizing \(J\).
For each buyer facing base opponents, augment her old menu by the zero
outcome and the deterministic offer of item \(j\) at price
\(H(1-y_{ij})-2b\), whenever that price is at least \(b\).
Use the old mechanism when all reports are base. If exactly one buyer
reports a type containing \(H\), serve only her, using a utility-maximizing
option of her augmented menu. If two or more buyers do so, serve nobody.
This extension is feasible and DSIC: facing base opponents, every report
uses the same menu, and an added offer gives a base type nonpositive
utility; facing a new opponent, a buyer receives zero for every own
report. The zero option gives IR, and all payments are nonnegative.
With exactly one high coordinate, an available new offer gives utility
\(Hy_{ij}+2b\), exceeding the utility of every old option, which is at
most \(Hy_{ij}+b\). The other item's new offer gives nonpositive utility.
If the desired offer is absent, \(H(1-y_{ij})<3b\).
In either case payment is at least \(H(1-y_{ij})-3b\), giving revenue
at least
\[
 (1-\eta)^NR(M)+t(1-\eta)^{N-1}S(M)
       -3bN\eta(1-\eta)^{N-1}.
\]
Since \(R(M)\le2b\), \(S(M)\le N\), and
\(1-(1-\eta)^k\le k\eta\), this is at least \(J(M)-C_H/H\).
This proves \eqref{eq:ti-high-atom-error}.

Finally choose
\[
 H=1+\left\lceil\max\{b,t,4C_H/(\sigma e)\}\right\rceil
\]
and divide every value of \(F_H\) by \(H\). Optimal revenue scales by
\(1/H\) and lies in \([0,2]\). Rational revenue comparisons locate it
in an interval of width \(\sigma e/(2H)\). Multiply the midpoint by
\(H\), subtract \(Nt\), and divide by \(\sigma\).
The comparison error and the error in \eqref{eq:ti-high-atom-error} are
each at most \(e/4\). The number of comparisons is
\(O(\log H+\log(1/\sigma)+\log(1/e))\), and all parameters have
polynomial bit length.
\end{proof}

\subsection{The capacity optimum}\label{app:two-items-capacity}

\begin{proof}[Proof of Lemma~\ref{lem:ti-capacity}]
Write \(\ell(u,v)=u+v\).
\emph{An auxiliary table and its cone correction.}
Regard the \(C\)'s as matrices and write \(Z_t=I-C_t\). Then
\[
 Z_{\mathsf B}=0,\qquad Z_{\mathsf X_j}=\begin{pmatrix}0&\beta_j\\0&1\end{pmatrix},
 \qquad Z_{\mathsf Y_j}=\begin{pmatrix}1&0\\\beta_j&0\end{pmatrix}.
\]
For each buyer order, start with the auxiliary residual vector \(\one\)
and multiply on the left by each preceding buyer's \(Z\)-matrix.
Let \(w_i^0\) be the mean vector before buyer \(i\), averaging over
uniformly random orders. It is independent
of her report and lies in \([0,1]^2\), since the matrices are nonnegative
and have maximum row sum at most one. For every order \(\pi\), writing
\(r_0=\one\) and \(r_k=Z_{t_{\pi(k)}}r_{k-1}\) gives
\begin{equation}\label{eq:ti-residual-supply}
 \sum_k C_{t_{\pi(k)}}r_{k-1}=\sum_k(r_{k-1}-r_k)=\one-r_n\le\one.
\end{equation}
Thus \(w^0\) satisfies the linear supply inequalities. The residual
vectors and their average need not belong to \(\mathcal C_\kappa\),
so these identities do not yet define feasible allocations. We next
construct a table \(w^*\) in the cone and check its supply constraints.

Temporarily raise every slope below \(b_0\) to \(b_0\), and denote the
mean residual by \((\bar u,\bar v)\). Every nonzero residual after a
report is \((\beta s,s)\) or \((s,\beta s)\), with
\(s\ge0\), \(b_0\le\beta\le\kappa\). Buyer \(i\) is first with
probability \(1/n\), so
\[
 \bar u-b_0\bar v\ge(1-b_0)/n,\qquad
 \bar u-\kappa\bar v\ge(1-b_0)(1-\bar v)/n,
\]
and likewise with coordinates exchanged.
If an opponent is in \(\mathsf B\), she precedes \(i\) with probability \(1/2\)
and kills the residual; both cone slacks are at least \(3/(8n)\).
If both orientations occur, the event that a specified \(\mathsf Y\) opponent
is first and \(i\) second adds at least
\((1-b_0\kappa)/[n(n-1)]\) to the first slack; an \(\mathsf X\) opponent gives
the other. Unless all opponents have one orientation, both slacks are
therefore at least \(3/[4n(n-1)]\).
Restoring the slopes changes a product of \(r\) matrices by at most
\(r\varepsilon_*\) in maximum-row-sum norm. The number of predecessors
is uniform on \(0,\ldots,n-1\), so each cone slack changes by at most
\((1+\kappa)(n-1)\varepsilon_*/2\le3/[8n(n-1)]\).
All these contexts remain in \(\mathcal C_\kappa\).

For all-\(\mathsf X\) opponents, let \(\widehat\beta_j\) be the slope in
opponent \(j\)'s report. Then
\[
 w_i^0=(U_i,1),\qquad U_i=\frac{1+\sum_{j\ne i}\widehat\beta_j}{n}.
\]
Indeed, a nonempty residual is \((\beta,1)\), and each specified opponent
immediately precedes \(i\) with probability \(1/n\).
Replace this capacity by
\(w_i^*=(U_i,\min\{1,U_i/\kappa\})\), and use the reflected replacement
for all-\(\mathsf Y\) opponents. Leave other capacities unchanged.
The result belongs to \(\mathcal C_\kappa\).

\emph{Supply after the correction.}
In a full profile that is not all \(\mathsf X\) or all \(\mathsf Y\), at most one buyer
is affected, since \(n\ge3\); her own report is \(\mathsf B\) or the opposite
orientation, so the change only decreases her allocation.
In an all-\(\mathsf X\) profile, each reduction
\(d_i=\pos{1-U_i/\kappa}\le(n-1)\varepsilon_* /(n\kappa)\)
increases item 1's allocation by \(\widehat\beta_i d_i\), where
\(\widehat\beta_i\) is the slope in her own report.
Its original total was \(1-n^{-1}\sum_i\widehat\beta_i\), by
\eqref{eq:ti-residual-supply}; its new total is at most
\(1-b_0+n\varepsilon_*\le1\). Item 2 is unallocated.
The all-\(\mathsf Y\) calculation is the same. Thus \(w^*\) is feasible.

\emph{Optimality.}
Average any feasible table over buyer names. Write \(w_k(t_1,\ldots,t_k)\)
for a capacity facing \(k\) specified non-\(\mathsf B\) opponents and
\(n-1-k\) opponents in \(\mathsf B\). The supply inequality in a profile
with these \(k\) reports and \(n-k\) reports \(\mathsf B\) gives
\[
 w_k\le U_k(w):=\frac{\one-\sum_{j=1}^kC_{t_j}w_{k-1}(t_{-j})}{n-k},
 \qquad 0\le U_k(w)\le\one.
\]
Here \(t_{-j}\) deletes entry \(j\). The largest cone point below
\((U,V)\in[0,1]^2\) is
\[
 G(U,V)=(\min\{U,V/\kappa\},\min\{V,U/\kappa\}),\qquad
 \|G(a)-G(b)\|_1\le(1+\kappa^{-1})\|a-b\|_1.
\]
The formula follows from the two cone inequalities; the Lipschitz bound
follows by bounding each difference of minima by the sum of the
corresponding absolute differences.
Our table satisfies \(w_0^*=\one/n\) and \(w_k^*=G(U_k(w^*))\).
To verify this, the profile used above contains a \(\mathsf B\) report, so
\eqref{eq:ti-residual-supply} is an equality. Every lower capacity in its
sum has a \(\mathsf B\) opponent and was left unchanged. Consequently
\(U_k(w^*)=w_k^0\), and applying \(G\) makes precisely the corrections
already described.

For independent reports conditioned on non-\(\mathsf B\), put
\(D_k=\E\|w_k-w_k^*\|_1\) and
\(g_k=\E[\ell(w_k^*)-\ell(w_k)]\).
We have \(D_0=g_0\). Each \(C_{\mathsf X_j},C_{\mathsf Y_j}\) has induced
\(1\)-norm one, and hence
\[
 \E\|U_k(w)-U_k(w^*)\|_1\le\frac{k}{n-k}D_{k-1}.
\]
Since \(w_k\le G(U_k(w))\), the sum of its positive coordinate
differences from \(w_k^*\) is at most
\((1+\kappa^{-1})\|U_k(w)-U_k(w^*)\|_1\).
The identity \(\|x-y\|_1=\ell(y)-\ell(x)+2\sum_j\pos{x_j-y_j}\)
therefore yields \(D_k\le g_k+A kD_{k-1}/(n-k)\), where
\(A=2(1+\kappa^{-1})\le10\).
For \(\pi_k=\binom{n-1}{k}\alpha^k(1-\alpha)^{n-1-k}\), summation gives
\[
 \sum_{k=0}^{n-1}\pi_kg_k\ge\pi_{n-1}D_{n-1}
 +\left(1-\frac{A\alpha}{1-\alpha}\right)
       \sum_{k=0}^{n-2}\pi_kD_k\ge0,
\]
because \(\pi_{k+1}(k+1)/(n-k-1)=\pi_k\alpha/(1-\alpha)\) and
\(A\alpha/(1-\alpha)\le10/99\).
The objective difference is \(n\sum_k\pi_kg_k\), proving optimality.

\emph{Value.}
The mean matrix satisfies \(\E[Z_t]\one=\lambda\one\).
Independence and the uniform predecessor count give uncorrected total
capacity \(2\sum_{r=0}^{n-1}\lambda^r\).
In a pure context the correction is
\[
 \pos{1-\frac{1+\sum_\nu\beta_{j_\nu}}{n\kappa}}
 =\frac1{n\kappa}\pos{(n-1)b_0-\sum_\nu\beta_{j_\nu}},
\]
using \(n\kappa=1+(n-1)b_0\). Summing over \(n\) buyers and two
orientations proves \eqref{eq:ti-capacity-value}.
\end{proof}

\subsection{The common-prior construction}\label{app:two-items-realization}

\begin{proof}[Proof of Lemma~\ref{ti-lem-realization}]
\emph{Step 1: the boundary geometry.}
We build the prescribed lottery slopes one at a time. At step $j$,
the line $y=A_j-\beta_jx$ must join the old boundary vertex
$(c_{j-1},\ell_{j-1})$ to a new one $(c_j,\ell_j)$. Thus we require
\[
 c_j-c_{j-1}=h,\qquad \ell_{j-1}-\ell_j=\beta_jh,\qquad
 A_j=\ell_{j-1}+\beta_jc_{j-1}=\ell_j+\beta_jc_j.
\]
These coordinate changes introduce the slope. Independently, splitting
the upper mass in half and retaining a fraction $\theta$ in the lower
tail will give the geometric class probabilities. The parameter $h$
keeps the boundary near $(1,1)$, while the much smaller $\theta$ leaves
room for the flow subtractions below.

Put $H_0=J+1$ and
\begin{equation}\label{ti-eq-realization-parameters}
 d=(10^8H_0^2)^{-1},\quad h=d/(100H_0),\quad
 \theta=d^2/(10^6H_0^2),\quad p=4d,\quad
 q_0=1/100,\quad r_0=99/100-4d.
\end{equation}
Start with $\ell_0=1-d,1,c_0=1+d/\kappa$ of masses $p,q_0,r_0$.
At step $j$, retain half the current top mass and move half to
$c_j=c_{j-1}+h$; retain a fraction $1-\theta$ of the current bottom
mass and move the remaining fraction to
$\ell_j=\ell_{j-1}-\beta_jh$.
All values lie in $(1-2d,1+5d)$ and are distinct.
Set $\beta_0=\kappa$ and $A_0=1+\kappa$, and use $A_j$ as above.
For $(u,v)\in\mathcal C_\kappa$, offer
\begin{equation}\label{ti-eq-realization-menu}
\begin{array}{c|c|c}
 &\text{allocation}&\text{payment}\\\hline
 \mathsf Z&(0,0)&0\\
 \mathsf X_j&(u-\beta_jv,0)&u+(1-A_j)v\\
 \mathsf Y_j&(0,v-\beta_ju)&v+(1-A_j)u\\
 \mathsf B&(u,v)&u+v
\end{array}
\end{equation}
and break utility ties by larger payment.
Indeed, $\mathcal B(x)=\max_j(A_j-\beta_jx)$ has successive vertices
$(1,1),(c_0,\ell_0),\ldots,(c_J,\ell_J)$.
The best first-item lottery gives utility
$u(x-1)+v(\mathcal B(x)-1)$, and the bundle dominates it exactly when
$y\ge\mathcal B(x)$ at a nonzero cone point.
For $x<1$ this utility is $(u-\kappa v)(x-1)\le0$; at each boundary
vertex the bundle is IR because
$\kappa(c_j-1)+\ell_j-1=h\sum_{i\le j}(\kappa-\beta_i)\ge0$.
These comparisons and their reflections give a partition independent
of $(u,v)$: both coordinates below one give $\mathsf Z$; both at least
one give $\mathsf B$; $(1,y)$ with $y<1$ gives $\mathsf X_0$; and
$(c_{j-1},y)$ gives $\mathsf X_j$ for $y<\ell_{j-1}$ and $\mathsf B$
otherwise. The highest row is all $\mathsf B$.
At a junction the newer lottery has larger payment, and at a bundle
boundary the bundle has larger payment. Any remaining degenerate ties on
the cone boundary have identical outcomes. Since $A_j-\beta_j\le1$, lottery
payments are at least their allocated quantity and are nonnegative.

\emph{Step 2: class probabilities and the revenue bound to be attained.}
The mass below $\ell_{j-1}$ is $p\theta^j$, so
\[
 \Pr(\mathsf Z)=p^2,\qquad \Pr(\mathsf X_0)=\Pr(\mathsf Y_0)=pq_0,\qquad
 \Pr(\mathsf X_j)=\Pr(\mathsf Y_j)=pr_0(\theta/2)^j\quad(j\ge1).
\]
Set
\[
 R=\theta/2,\qquad z=1-p^2,\qquad
 \alpha_0=pq_0/z,\qquad \alpha_*=pr_0/z.
\]
These parameters depend only on $J$ and give
\eqref{ti-eq-realization-probabilities}--\eqref{eq:ti-count-class-probabilities}.
They are positive, with $z,R<1$, and the explicit choices give
\[
 2\sum_{j=0}^J\alpha_j
 \le(2p/z)\bigl(q_0+r_0R/(1-R)\bigr)<1/100.
\]
Writing $a_j=\Pr(\mathsf X_j)$ and $a_B=1-p^2-2\sum_j a_j$,
the revenue is $\gamma(u+v)$, where
\begin{equation}\label{ti-eq-realization-gamma}
 \gamma=a_B+\sum_{j=0}^J a_j(2-A_j)>0.
\end{equation}

\emph{Step 3: the dual identity and the initial certificate.}
To prove optimality, we construct a nonnegative combination of IC and
IR constraints that is tight at the displayed menus.
For type masses $f(t)$, weight the IC constraints $s\to t$ by
$\lambda(s,t)\ge0$ and IR by $\omega(t)\ge0$, with
\begin{equation}\label{ti-eq-realization-dual}
 \sum_{t'}\lambda(t,t')+\omega(t)=f(t)+\sum_s\lambda(s,t),\qquad
 N_k(t)=f(t)t_k-\sum_s\lambda(s,t)(s_k-t_k).
\end{equation}
Weighted summation gives
$\Rev\le\sum_{t,k}N_k(t)x_k(t)
\le u\sum_t\pos{N_1(t)}+v\sum_t\pos{N_2(t)}$.
We require symmetric weights whose coefficient signs are $(--)$ on
$\mathsf Z$, $(++)$ on $\mathsf B$, $(0,-)$ on $\mathsf X_j$, and
$(-,0)$ on $\mathsf Y_j$. Throughout, a normalized coefficient means
$\widehat N_k(t)=N_k(t)/f(t)$. Positive coefficients will also be compared
with $f(t)t_k$, through the ratio $\rho_k(t)=N_k(t)/(f(t)t_k)$.
All masses in these ratios refer to the current stage of the construction.

For the initial three atoms write $a=1-d$, $c=1+e$, $e=d/\kappa$,
$q=q_0$, $r=r_0$, and name the types
$A=(a,a)$, $B_1=(1,a)$, $C_1=(c,a)$, $D=(1,1)$,
$E_1=(c,1)$, $T=(c,c)$, with subscript $2$ denoting reflection.
Set
\[
 g=r^2/2,\ I=qr+g,\ L_0=pq/e,\ \mu=pr+I-L_0,\quad
 \nu=q^2/2+\mu,\quad \xi=(1-p^2)/2.
\]
Use IR weight one at $A$ and, for $i=1,2$, the following edges:
\begin{equation}\label{ti-eq-realization-base-flow}
\begin{array}{c|rrrrrr}
 \text{edge}&T\to E_i&E_i\to C_i&C_i\to B_i&C_i\to D&D\to B_i&B_i\to A\\\hline
 \text{weight}&g&I&L_0&\mu&\nu&\xi
\end{array}
\end{equation}
Conservation follows from their definitions and $pq+L_0+\nu=\xi$,
$p^2+2\xi=1$. Direct subtraction in
\eqref{ti-eq-realization-dual} gives
\begin{equation}\label{ti-eq-realization-base-signs}
\begin{array}{c|cc}
 &N_1&N_2\\\hline
 A&ap^2-d\xi&ap^2-d\xi\\
 B_1&0&pqa-d\nu\\
 C_1&prc&pra-dI\\
 D&q^2-(e-d)\mu&q^2-(e-d)\mu\\
 E_1&qrc&qr-eg\\
 T&r^2c&r^2c
\end{array}
\end{equation}
The bounds $0.989<r<0.99$, $0.01<L_0\le0.0175$,
$0.48<I<0.51$, $0.46<\mu<0.52$, $0.46<\nu<0.53$ give positive weights,
$I-\mu=L_0-pr>0$, negative normalized coefficients below $-1$,
and highest-row coefficients at least $0.8f(t)t_k$ (exact on its
diagonal). For example,
$(pra-dI)/(pr)=a-(q+r/2)/4>0.87$, and
$q^2-(e-d)\mu>0.99q^2$, $(qr-eg)/(qr)>0.99$.
The utilities are $0$ at $A,B_i,D$, $eu-dv$ at $C_1$, $eu$ at $E_1$,
and $e(u+v)$ at $T$, with reflected formulas. These make every listed
edge and the IR constraint tight.

\emph{Step 4: extending the certificate by one slope.}
The inductive assertion is that, after each $j=0,\ldots,J$ splits,
the constructed weights are symmetric and nonnegative, satisfy
\eqref{ti-eq-realization-dual}, and have the class signs stated in
Step 3. Every positive-weight IC or IR constraint is tight at the
stage-$j$ menu for every $(u,v)\in\mathcal C_\kappa$.
The current top--bottom type also has the incoming and outgoing
structure specified below. We prove this assertion by the following
update, with its uniform sign bounds supplied in Step 5.

At step $j\ge1$, write
$c=c_{j-1}$, $\ell=\ell_{j-1}$, and $\beta=\beta_j$.
Their masses before either split are $r=r_0/2^{j-1}$ and
$q=p\theta^{j-1}$; here $q$ denotes the bottom mass, whereas the base
table used $q=q_0$. Name the old top--bottom type $P=(c,\ell)$.
The induction hypothesis, in addition to conservation and tightness,
is the following: all old inflow at $P$ is vertical, with total weight
$I$, and its outgoing edges are
\[
 P\to Q_{\rm L}\text{ of weight }A=rq+I_*,\qquad
 P\to(1,1)\text{ of weight }E,\qquad I_*=I-E>0,
\]
where the second edge is omitted when $E=0$, and
\[
 \begin{array}{c|c|c}
 &Q_{\rm L}=(a_{\rm L},\ell)&E\\\hline
 j=1&(1,\ell_0)&\mu\\
 j\ge2&(c_{j-2},\ell_{j-1})&0
 \end{array}
\]
Thus $Q_{\rm L}$ is an old first-item lottery type. At the first step
$P=C_1$, $Q_{\rm L}=B_1$, and $A=L_0$, by the base certificate.
At later steps the sole old incoming edge at $P$ is the preceding
step's vertical edge of weight $L$. The update below preserves this
structure, the old lottery zero coefficients, and all tight constraints;
Step 5 verifies positivity and the remaining coefficient signs uniformly.

\emph{Top split.}
Put $c'=c+h$ and $r'=\tau=r/2$, the retained and new top masses.
Keep all old edge weights, and add these edges and their reflections:
\begin{equation}\label{ti-eq-realization-top-flow}
\begin{array}{c|c}
 (c',c')\to(c',c)&\tau^2/2\\
 (c',c)\to(c,c)&\tau r'+\tau^2/2\\
 (c',y)\to(c,y),\quad y\ne c,c'&\tau p_y
\end{array}
\end{equation}
Here $p_y$ is the scalar mass at $y$. These tight bundle edges restore
the old aggregate source masses. Old highest-row coefficients decrease
by $(c'\tau p_y,y\tau p_y)$; the old diagonal's normalized coefficient
becomes $c-3h/2$. New top coefficients are their source mass times value,
except for normalized second coefficient $c-h/2$ at $(c',c)$.

\emph{Bottom split and the first-step exception.}
Put $\ell'=\ell-\beta h$ and $\varepsilon=q\theta$, and name
\[
 P'=(c,\ell'),\quad D_*=(c',\ell),\quad D_*'=(c',\ell'),\quad
 Q_{\rm L}'=(a_{\rm L},\ell').
\]
An ordinary cloning operation replaces each occurrence of $\ell$ at
either endpoint of an edge by $\ell$ or $\ell'$, multiplying its weight
by $1-\theta$ or $\theta$, respectively. The choices at the endpoints
and at their coordinates are independent. Clone an IR weight by the
same factors for its type. Apply this rule except to the following
edges incident to $P,D_*$ and their reflections.

Keep all old vertical inflow of weight $I$ at $P$, and keep the entire
exceptional edge of weight $E$. Split the new edge $D_*\to P$ with
matching bottom coordinates only, giving weights $\tau q(1-\theta)$
and $\tau\varepsilon$ on $D_*\to P$ and $D_*'\to P'$.
Clone $P\to Q_{\rm L}$ independently, giving
\[
 \begin{array}{c|cccc}
 \text{edge}&P\to Q_{\rm L}&P\to Q_{\rm L}'&P'\to Q_{\rm L}&P'\to Q_{\rm L}'\\\hline
 \text{weight}&(1-\theta)^2A&\theta(1-\theta)A&\theta(1-\theta)A&\theta^2A
 \end{array}
\]
and add weight $\theta I_*$ on $P\to P'$. The source masses at
$P,P'$ are $r'q(1-\theta),r'\varepsilon$, respectively.
Including the matching inflow from the new top row, their conservation
identities are
\[
 rq(1-\theta)+I=(1-\theta)A+E+\theta I_* ,\qquad
 rq\theta+\theta I_*=\theta A.
\]
Here the $rq$ terms combine $r'q$ of source mass and $\tau q$ of new
top inflow. In particular, the old mass $r$ is not the source mass of
$P'$ after the top split.

\emph{Rerouting.}
The preceding flow is tight, but the new lottery needs a zero first
coefficient. Choose the following two weights for this cancellation
and a strictly negative second coefficient:
\begin{equation}\label{ti-eq-realization-reroute}
 L=\varepsilon r'(c/h-1),\qquad
 \Lambda=\varepsilon\ell r'/(\beta h).
\end{equation}
Apply the following changes, and their reflections, to the retained
target $Q_{\rm L}$:
\[
 \begin{array}{c|c|c}
 \text{edge reduced}&\text{edges increased}&\text{weight}\\\hline
 D_*\to P&D_*\to D_*',\ D_*'\to P'&L\\
 P\to Q_{\rm L}&P'\to Q_{\rm L}&L\\
 P\to Q_{\rm L}&P\to P',\ P'\to Q_{\rm L}&\Lambda
 \end{array}
\]
Together these changes preserve conservation. In particular, the loss
of $L$ in both inflow and outflow at $P$ is balanced, while $P'$ gains
$L+\Lambda$ in both. They preserve tightness because $U(P')=U(P)$,
$Q_{\rm L}$ allocates no second item, and the new boundary segment
joins $P$ to $D_*'$. More explicitly,
\[
 U(D_*)-U(D_*')=\beta hv,\qquad
 U(D_*')-U(P')=h(u-\beta v).
\]
Ordinary cloning preserves tightness too: an ordinary source containing the
current minimum allocates none of that coordinate; at an interior cone
point every option allocating that coordinate is strictly worse.
Thus no tight edge from it targets such an option. Lowering an ordinary
target leaves its outcome and utility unchanged. Linearity extends these
equalities to the cone boundary.

\emph{Coefficient calculation and the next induction state.}
The complete incoming edges at $P'$ are $D_*'\to P'$ of weight
$\tau\varepsilon+L$ and $P\to P'$ of weight
$\theta I_*+\Lambda$. The only incoming edge at $D_*'$ is
$D_*\to D_*'$ of weight $L$. Since
$f(P')=r'\varepsilon$ and $f(D_*')=\tau\varepsilon$, the first
coefficient is
\[
 N_1(P')=r'\varepsilon c-h(\tau\varepsilon+L)=0,
 \qquad \tau=r',\quad hL=\varepsilon r'(c-h).
\]
The other incoming edge has the same first coordinate as $P'$, so it
does not enter this calculation. For the second coefficients, the
respective vertical differences give
\[
 \begin{aligned}
 N_2(P')&=r'\varepsilon\ell'-\beta h(\theta I_*+\Lambda),\\
 N_2(D_*')&=\tau\varepsilon\ell'-\beta hL.
 \end{aligned}
\]
Substituting \eqref{ti-eq-realization-reroute}, using
$\theta=\varepsilon/q$ and $\tau=r'$, yields all four coefficients:
\begin{equation}\label{ti-eq-realization-new-signs}
\begin{aligned}
 N_1(P')&=0,&N_2(P')&=-\varepsilon\beta h(r'+I_*/q),\\
 N_1(D_*')&=c'\tau\varepsilon,&
 N_2(D_*')&=\varepsilon(\ell\tau-\beta cr').
\end{aligned}
\end{equation}
The next current top--bottom type is $D_*'$. Its vertical inflow is
$L$, and its sole outgoing lottery edge is to $P'$, with weight
$\tau\varepsilon+L=f(D_*')+L$. Thus it has the asserted induction
structure with no exceptional edge.
At retained $P$, its first coefficient after the bottom split is
\[
 (1-\theta)r'q(c-h)+hL=r'q(c-h),
\]
unchanged from the top split. Its second coefficient loses
$\ell r'\varepsilon$, because its old vertical inflow is retained.
Old lottery zero coefficients stay zero: ordinary cloning scales their
source mass and incoming weights by the same factor in the allocated
coordinate, and the special rerouting into $Q_{\rm L}$ replaces
sources only by others with the same first coordinate.

\emph{Step 5: positivity and strict coefficient signs.}
For completeness the following bounds verify that every subtraction is
legal and every required sign remains strict.

For positive coefficients, use the ratios $\rho_k(t)=N_k(t)/(f(t)t_k)$.
New top types have $\rho_k\ge0.7$, since
$\ell-\beta c>0.73>0.7\ell'$. At the next top split, their source mass
is halved; subtracting the displayed top-flow contributions leaves
$\rho_1\ge0.39$ and $\rho_2\ge0.4$ at retained off-diagonal types.
The additional bottom loss leaves
$\rho_2\ge(0.4-\theta)/(1-\theta)>0.39$, while the first coefficient
is unchanged. Diagonals have the exact values above. No other bundle
type contains the current minimum.

To bound the changes at lottery types, put $M=10/d^2$ and measure
incoming flow at $t$ by
\[
 \frac{\sum_s\lambda(s,t)}{f(t)}.
\]
This ratio is initially below $1/d^2$. Ordinary cloning preserves it,
and rerouting preserves the total inflow at $Q_{\rm L}$.
At the current $P$,
$I/(rq)\le2/h$ (after the first step its sole incoming edge is the
preceding $L$). The ratios at $P'$, retained $P$, and $D_*'$ are,
respectively, below $12/h$, $9/h$, and $2/h$.
Other newly created ratios are at most $1/2$; retained top ratios are
at most three, except for the initial $E_i$, where the bound is $101$.
All are below $M$. Moreover,
\[
 \frac{L}{\tau q(1-\theta)}\le\frac{2\theta c}{h}<1,
 \qquad
 \frac{L+\Lambda}{(1-\theta)^2A}<\frac{4\theta}{h}<1,
 \quad L+\Lambda<7\varepsilon r'/h.
\]
Thus all retained and new listed edges have positive weight.
A new lottery has
$\widehat N_2(P')=-\beta h(1+I_*/(r'q))\le-6h/25$.
An ordinary cloning step increases a normalized coefficient by at most
$\theta\beta hM$: lowering the target only decreases it.
The exceptional increase at the retained target $Q_{\rm L}$ is at most
\[
 \frac{\beta h(L+\Lambda)}{p_{\rm L}q(1-\theta)},
\]
where $p_{\rm L}$ is the scalar mass at $a_{\rm L}$; the denominator
is precisely $f(Q_{\rm L})$. Here $p_{\rm L}=q_0$ at the first step
and $p_{\rm L}=r_0/2^{j-1}$ at step $j\ge2$.
Since $r'/p_{\rm L}<50$ initially and equals $1/2$ later, this is at most
$400\theta\beta\le\theta\beta hM$.
The accumulated increase is therefore at most
$2J\theta hM<h/1000$, leaving every negative normalized coefficient
below $-h/5$. This proves the asserted coefficient signs and tightness.
Symmetry equates the two positive-coefficient sums; saturation by
\eqref{ti-eq-realization-menu} identifies both as $\gamma$. In particular, $(u,v)=(1,1)$ gives
$\OPT(F;1,2)=2\gamma$.

\emph{Step 6: characterization of every equality menu.}
We claim that, for every $(u,v)\in[0,1]^2$, a DSIC/IR menu whose actual
coordinate maxima are $(u,v)$ attains revenue $\gamma(u+v)$ if and
only if $(u,v)\in\mathcal C_\kappa$ and its allocation and payment at
every supported type equal \eqref{ti-eq-realization-menu} under the
fixed class partition. Thus the claim concerns all equality menus,
including boundary capacities.

To prove the claim, first fix $(u,v)\in\mathcal C_\kappa$ and
consider any menu attaining the revenue bound at these capacities.
Strict signs force zero allocation on $\mathsf Z$, capacities $(u,v)$ on $\mathsf B$,
and only the designated item on each lottery class. The retained IR
constraint and the chain $D\to B_i\to A$ force $U(1,1)=0$.
Positive bundle edges connect every bundle type to $(1,1)$, including
new bottom boundary types through $D_*\to D_*'$. Hence
\begin{equation}\label{ti-eq-realization-bundle-utility}
 U(x,y)=(x-1)u+(y-1)v\qquad(t=(x,y)\in\mathsf B).
\end{equation}
For $j\ge1$, the representative $(c_{j-1},\ell_j)$ receives positive
edges from $(c_{j-1},\ell_{j-1})$ and $(c_j,\ell_j)$.
Tightness first fixes its utility and then its allocation to
$u-\beta_jv$. These edges retain positive weight under later splits.
The base edges give utility zero and allocation $u-\kappa v$ at
$(1,\ell_0)$. For any other type in the same class, two-way IC with
the representative fixes its utility; deviation by the bundle witness
$(c_j,\ell_j)$ bounds its first allocation by $u-\beta_jv$.
For $\mathsf X_0$ the witness is $(c_0,\ell_0)$; reflect for $\mathsf Y_j$.
Consequently every lottery payment is bounded above by its payment in
\eqref{ti-eq-realization-menu}. Bundle payments already equal $u+v$;
exit payments are at most zero. The canonical menu attains the bound,
so positive mass at every type forces each individual payment bound to
be tight. Since values are positive, this also fixes every allocation.
The argument includes the cone boundary and the zero-capacity case.

Finally suppose equality holds outside the cone. Mix this menu with a
known equality menu at an interior cone point, giving the former a small
positive weight. The auxiliary mixed capacities lie in the cone, and
the revenue is $\gamma$ times their sum. The actual maxima cannot be
smaller, by the already proved revenue bound. Cone uniqueness therefore
identifies the mixed outcomes with \eqref{ti-eq-realization-menu}.
Subtracting the known contribution forces the original menu to have the
same linear formulas. Nonnegativity on $\mathsf X_0,\mathsf Y_0$ then gives
$u\ge\kappa v$ and $v\ge\kappa u$, a contradiction.

\emph{Step 7: rational encoding and construction time.}
There are $O(J^2)$ types and $O(J^4)$ possible edges.
The retained lower masses are $p(1-\theta)\theta^{j-1}$ and the final
bottom mass is $p\theta^J$; upper masses are $r_0/2^j$.
Coordinates are sums of at most $J+1$ input rationals.
After merging parallel edges, each update is affine in previous weights,
with polynomial-bit rational coefficients and no division by a previous
weight. A product of common denominators over the $J$ updates therefore
has polynomial bit length, as do all weights and $\gamma$.
\end{proof}

\subsection{Exit reports and binomial inversion}
\label{app:two-items-exit}

All profiles have positive probability, so equality in \(R\le\gamma Y\)
holds exactly when every conditional menu attains its single-buyer bound.
Thus \(Y_n\) is the capacity optimum with the exit class included.
Conversely, any feasible capacity table selects complete DSIC menus for
each fixed opponents' report. Average capacities over opponents' fine
reports conditional on their classes. Independence, linearity, and
convexity preserve the linear supply constraints and the objective; the
resulting class table lifts back to complete conditional menus.

Capacity is counted even when a buyer's own report is exit. Grouping by
the number of non-exit opponents therefore removes the exit class by
binomial inversion. If \(Y_j^+\) denotes
the \(j\)-buyer optimum conditioned on non-exit reports, then
\begin{equation}\label{eq:ti-exit-inverse}
 Y_n^+=nz^{-(n-1)}\sum_{j=1}^n\binom{n-1}{j-1}[-(1-z)]^{n-j}
                   \frac{Y_j}{j}.
\end{equation}
Errors of at most \(e\) in each \(Y_j\) give error at most
\begin{equation}\label{eq:ti-exit-error}
 ne\left(\frac{2-z}{z}\right)^{n-1}
\end{equation}
in \(Y_n^+\). These calls keep the prior, capacity region, and allocation
maps fixed; only the buyer count changes.

Keep the capacity region, allocation maps, and distribution fixed.
After conditioning on non-exit reports of total probability \(z\),
the capacity values satisfy
\begin{equation}\label{eq:ti-exit-forward}
 Y_n=n\sum_{k=0}^{n-1}\binom{n-1}{k}z^k(1-z)^{n-1-k}
                   \frac{Y_{k+1}^+}{k+1}.
\end{equation}
Indeed, average over buyer names and group a buyer's capacity variables
by the number \(k\) of non-exit opponents. A supply constraint with
\(l\) non-exit buyers uses only the group \(k=l-1\); deleting the exit
buyers therefore gives an independent \(l\)-buyer capacity problem.
The objective counts a buyer's capacity even when her own report is
exit, giving the factor \(n/(k+1)\) in
\eqref{eq:ti-exit-forward}. Substituting that formula into
\eqref{eq:ti-exit-inverse} proves the inverse: the coefficient of
\(Y_{k+1}^+/(k+1)\) contains the binomial sum
\[
 \sum_{r=k}^{n-1}\binom{n-1-k}{r-k}
       [-(1-z)]^{n-1-r}(1-z)^{r-k},
\]
which vanishes for \(k<n-1\) and equals one for \(k=n-1\).
If each \(Y_j\) is known within \(e\), summing the absolute inverse
coefficients gives \eqref{eq:ti-exit-error}. In particular, the objective
and the supply constraints use the same fixed maps even when the number
of buyers varies.

\subsection{Encoding and precision}

\paragraph{Markers and geometric weights.}
The integers used in the proof of Theorem~\ref{thm:ti-hardness} may be chosen as
\begin{equation}\label{eq:ti-count-markers}
 B=s+1,\qquad C=1+sA_{\max}+T_0,\qquad
 M=1+s(C+A_{\max})+T_0,\qquad A_{\max}=\max_iA_i.
\end{equation}
These choices satisfy the hypotheses of Lemma~\ref{lem:paired-markers}
with offsets \(C+A_i\) and target offset \(kC+T_0\), for every
\(0\le k\le s\). The subset equation is
\(|A|C+\sum_{i\in A}A_i=kC+T_0\). Both sums not multiplied by \(C\)
lie in \([0,C)\), so \(|A|=k\) and \(\sum_{i\in A}A_i=T_0\).

In decreasing order, pair \(i\), indexed here by \(0\le i<s\), has
in-index \(2(s-1-i)+1\) and out-index \(2(s-1-i)+2\). Thus the
probability ratio within each pair is \(1/R\), and the product of the
out-probabilities is \(R^{s(s+1)}/Z_R^s\).
These identities justify the probability factors used in
\eqref{eq:ti-count-mass}; the second-difference calculation is given
in the main proof.

\paragraph{Slopes and class probabilities.}
Put \(\rho=\min\{\varepsilon_*,1/100\}\), so the scale in the main proof is
\(\delta=\rho/[2(2+W_1+D_s)]\).
Here \(T\) ranges over \(D_k-1,D_k,D_k+1\), and \(\delta\) is the same
for all queries. Since \(0<T\le D_s+1\), the encoded slopes are strictly
decreasing and satisfy
\[
 |\beta_j(T)-b_0|<\rho/2,\qquad
 \frac{\varepsilon_*}{\kappa-b_0}=\frac1{2(n-1)^2}.
\]
They therefore lie in \([0.24,0.26]\cap[b_0-\varepsilon_*,\kappa)\),
and \(1/4<\kappa\le7/16\). Lemma~\ref{ti-lem-realization} supplies a
rational scalar distribution \(F(T)\). Its probability parameters
\(p,q_0,r_0,R\) are common to all these distributions.
Here the construction's probability parameters correspond to the main
notation through
\[
 z=1-p^2,\qquad \alpha_0=pq_0/z,\qquad
 \alpha_j=pr_0R^j/z=\alpha_*R^j\quad(1\le j\le J).
\]
The bound \(2\sum_j\alpha_j<1/100\) was verified in the realization proof.
A tuple containing \(\beta_0=\kappa\) contributes zero to the positive
part of \eqref{eq:ti-capacity-value}, since
\[
 \kappa+(s-1)(b_0-\rho)>sb_0,\qquad
 \frac{(s-1)\rho}{\kappa-b_0}
 \le\frac{n-2}{2(n-1)^2}<1.
\]
For the remaining tuples,
\(sb_0-\sum_\nu\beta_{j_\nu}(T)
 =\delta(T-\sum_\nu W_{j_\nu})\), and their conditional probabilities
are exactly \(R^j/Z_R\).
Substituting in \eqref{eq:ti-capacity-value} gives
\eqref{eq:ti-count-capacity}, with
\[
 c=\frac{2\delta}{\kappa}\left(\frac{pr_0Z_R}{z}\right)^s
   =\frac{2\delta}{\kappa}(\alpha_*Z_R)^s.
\]
The baseline is
\[
 \mathcal A(T)=2\sum_{r=0}^{n-1}
 \left[\alpha_0(1+\kappa)+\sum_{j=1}^J\alpha_j(1+\beta_j(T))\right]^r.
\]
The coefficient \(c\) is independent of \(T\).

\paragraph{Input length and oracle precision.}
The main proof derives count error below \(1/2\) from
\eqref{eq:ti-count-precision}. We verify that the required precision and
all revenue queries have polynomial encoding length.

The markers, targets, \(\delta\), and slopes have bit length polynomial
in the subset-sum input length: \(B^s\) uses \(O(s\log s)\) bits.
Lemma~\ref{ti-lem-realization} gives the same bound for \(F(T)\) and
\(\gamma\). The ratio \(R\) has polynomial bit length, and the exponents
in \eqref{eq:ti-count-precision} and \eqref{eq:ti-count-capacity} are
\(O(s^2)\). Thus \(\log(1/e)\) and \(\log(1/\tau)\) are polynomial.
The extraction lemma uses a single-buyer polytope of dimension
\(3K^2+2\), so its \(\log(1/\sigma)\), chosen \(H\), and revenue
query lengths also have polynomial bounds; no denominator bound for a
multibuyer polytope is required.

The normalized high-atom precision \(\epsilon=\sigma e/(4H)\) has the
same bit bound. Replacing revenue comparisons by approximations at this
precision leaves the extraction error below \(e\), so the main proof's
inversion and integer recovery remain valid.

\section{Supporting Details for the Fixed-Item PTAS}
\label{app:approximation}

Section~\ref{sec:fixed-items-ptas} uses separate single-item optimal auctions
when there are sufficiently many buyers. We record the precise competition
bound and the exact revenue calculation for a finite rational marginal.

\subsection{Proof of the approximation guarantee}

\begin{proof}[Proof of Theorem~\ref{thm:fixed-items-ptas}]
Scale values so that their maximum is one. Write \(S_n\) for the revenue
of running Myerson's optimal auction separately on each item.
Beyhaghi and Weinberg's competition bound
\citep[Theorem~6 and Section~4.2]{bw2019} gives, for \(n\ge m\),
\begin{equation}\label{eq:approx-competition}
 \OPT(F;n,m)\le S_{n+c},\qquad c=\lceil9\sqrt{nm}\rceil.
\end{equation}
The bound permits irregular distributions and atoms; its Bayesian
benchmark contains our DSIC benchmark. Appendix~\ref{app:fixed-items-details}
checks the integer constant and gives the exact single-item revenue calculation.

For any \(N\ge n\), we also have \(S_n\ge(n/N)S_N\).
Indeed, symmetrize an optimal \(N\)-buyer single-item auction, insert
\(N-n\) independently sampled buyers, and retain only the real buyers'
allocations and payments. Fixing the sampled reports preserves DSIC and
IR; symmetry makes their expected revenue \(n/N\) of the original.
Apply single-item optimality and sum over items. Consequently
\[
 \OPT(F;n,m)\le
 \bigl(1+9\sqrt{m/n}+1/n\bigr)S_n.
\]
If \(n\ge100m/\varepsilon^2\), the excess factor is at most
\(9\varepsilon/10+\varepsilon^2/(100m)<\varepsilon\).
Use separate Myerson auctions and set \(L=S_n\),
\(U=(1+\varepsilon)S_n\). Their exact computation is given in Appendix~\ref{app:fixed-items-details}.

Otherwise use the full-program bound of Theorem~\ref{thm:fixed-parameters},
which solves the finite-type LP exactly. Its
\(K^{O(m(n+1))}\operatorname{poly}(n,m,K,B)\) running time has the
asserted bound because \(m(n+1)<101m^2/\varepsilon^2\).
Its optimal table gives \(M\) and \(L=U=\OPT(F;n,m)\).
Rescale payments and both bounds to the original units.
\end{proof}

\subsection{Competition and exact separate revenue}
\label{app:fixed-items-details}

Use the notation in the proof of Theorem~\ref{thm:fixed-items-ptas},
with values normalized to at most one. For $n\ge m$, let
$c=\lceil9\sqrt{nm}\rceil$. We first justify the integer bound
in~\eqref{eq:approx-competition}.

Beyhaghi and Weinberg's bound~\citep[Theorem~6 and Section~4.2]{bw2019}
allows irregular distributions and atoms, and its Bayesian benchmark
contains our DSIC benchmark. To make the integer bound explicit, for
\(m\ge2\) set \(s=\sqrt{n/m}\), \(h=\lceil s\rceil\), and use
\(\ell=h+1\) in Beyhaghi and Weinberg's proof. The required extra buyers number at most
\[
 \left\lceil h(m-1)+4n/h+4(m-1)\right\rceil
 \le\lceil5\sqrt{nm}+4m\rceil\le c.
\]
Here \(h/s\in[1,2]\) and \(z+4/z\le5\) on this interval; the relevant
order statistic exists since \(n+h(m-1)-(h+1)\ge0\).
For \(m=1\), monotonicity of optimal single-item revenue suffices.

We next compute $S_n$, the total revenue of the separate Myerson auctions.
If \(b_1<\cdots<b_q\) are the distinct positive ironed virtual values
and \(H\) is the distribution function of the ironed virtual value, then
\[
 S_n=m\sum_{r=1}^q b_r
       \bigl(H(b_r)^n-H(b_r^-)^n\bigr).
\]
Fixed buyer priority at tied ironed scores keeps allocation constant
across each ironing interval. The concave revenue curve, critical payments,
and displayed rational powers all have polynomial computation cost.

\section{Proofs for the Exact Algorithms}
\label{app:tractability}

We prove Theorem~\ref{thm:fixed-parameters}. The full-program construction
supplies the runtime used by the PTAS; the two symmetry reductions cover
fixed buyer/support and item/support counts. They specialize
\cite[Remark~1 and Appendix~K]{dw2012}, retaining every fixed-opponent
incentive constraint.

\subsection{Full and symmetry-reduced linear programs}
\label{app:finite-lp}

\begin{proof}[Proof of Theorem~\ref{thm:fixed-parameters}]
Write $F=\sum_{k=1}^Kp_k\delta_{a_k}$ with $a_1<\cdots<a_K=c$.
First fix a buyer's opponents and add her lowest type's utility to every
payment in that conditional menu. The lowest type minimizes utility,
by DSIC and nonnegative allocations, so this preserves DSIC and IR and
weakly increases revenue. The reverse deviation from the lowest type,
followed by IR, gives
\[
 0\le a_1\sum_jx_{ij}(v)\le s_i(v)
 \le v_i\cdot x_i(v)\le mc.
\]
Consequently the full allocation-payment LP, retaining all constraints
\eqref{eq:feasibility}--\eqref{eq:ir} and these payment bounds, is
nonempty and compact. It has $n(m+1)K^{nm}$ variables and
$nK^{m(n+1)}$ incentive constraints. This proves the claim for fixed
$n,m$.

For the other cases, averaging a mechanism over item or buyer
permutations preserves feasibility and every DSIC and IR inequality.
The i.i.d.\ prior makes the revenue unchanged.
For fixed $n,K$, use item symmetry. Index a report column by
$c'=(c'_1,\ldots,c'_n)\in[K]^n$; its values are
$(a_{c'_1},\ldots,a_{c'_n})$. Let $h_{c'}$ count columns with that index,
so $\sum_{c'}h_{c'}=m$. The variable $x_{i,c'}(h)$ is buyer $i$'s
allocation probability for each individual item in this column class,
and $s_i(h)$ is her total payment. Set $x_{i,c'}(h)=0$ for unoccupied
classes. There are $\binom{m+K^n-1}{K^n-1}$ histograms.
For every histogram, supply, IR, and the payment bounds are
\[
 \begin{gathered}
 x_{i,c'}(h)\ge0,\qquad \sum_i x_{i,c'}(h)\le1,\\
 \sum_{c'}h_{c'}a_{c'_i}x_{i,c'}(h)-s_i(h)\ge0,\qquad
 a_1\sum_{c'}h_{c'}x_{i,c'}(h)\le s_i(h)\le mc.
 \end{gathered}
\]
The multiplicity $h_{c'}$ enters the buyer's total value, but not the
supply constraint for an individual item.
To encode a deviation by buyer $i$, let $g_{c',b}$ count columns with
true support indices $c'$ and deviating support index $b\in[K]$.
Write $c'[i\leftarrow b]$ for $c'$ with coordinate $i$ replaced by $b$,
and put
\[
 h_{c'}=\sum_b g_{c',b},\qquad
 h'_d=\sum_{c',b:\,c'[i\leftarrow b]=d}g_{c',b}.
\]
Here $d\in[K]^n$ ranges over the reported column indices after the
deviation.
The corresponding constraint is
\[
 \sum_{c',b}g_{c',b}a_{c'_i}x_{i,c'}(h)-s_i(h)
 \ge
 \sum_{c',b}g_{c',b}a_{c'_i}x_{i,c'[i\leftarrow b]}(h')-s_i(h').
\]
Every deviation has such a joint histogram, and every nonnegative
integer $g$ with total $m$ is realized by some report and deviation.
There are $\binom{m+K^{n+1}-1}{K^{n+1}-1}$ choices, polynomial in $m$
for fixed $n,K$. Thus this LP retains all deviations, including
simultaneous increases and decreases of reported values.

For fixed $m,K$, instead use buyer symmetry. Here types are actual
value vectors in $\mathcal T=\{a_1,\ldots,a_K\}^m$, with
$|\mathcal T|=K^m$. Let $h_t$ count buyers of type $t$, so
$\sum_t h_t=n$. The vector $x_t(h)=(x_{t,1}(h),\ldots,x_{t,m}(h))$
and the scalar $s_t(h)$ are the allocation probabilities and total
payment of each individual buyer of type $t$. Outcomes for unoccupied
types may be set to zero. There are
$\binom{n+K^m-1}{K^m-1}$ histograms. Supply, IR, and the payment bounds
are
\[
 \begin{gathered}
 x_{t,j}(h)\ge0,\qquad \sum_t h_tx_{t,j}(h)\le1,\\
 t\cdot x_t(h)-s_t(h)\ge0,\qquad
 a_1\sum_jx_{t,j}(h)\le s_t(h)\le mc.
 \end{gathered}
\]
Here $h_t$ multiplies allocation probabilities in the supply constraint
because it counts the buyers sharing that outcome.
For $h_t>0$, a deviation from $t$ to $z\in\mathcal T$ changes the
histogram to $h'=h-e_t+e_z$, where $e_t$ is the standard unit vector
indexed by type $t$. Thus impose
\[
 t\cdot x_t(h)-s_t(h)\ge t\cdot x_z(h')-s_z(h')
 \qquad(t,z\in\mathcal T,\ h_t>0).
\]
The dot products use the true value vector $t$ on both sides. These
constraints therefore specify the full DSIC problem after symmetrization.

It remains to specify the revenue objectives. In the item-symmetric
program, a column index $c'$ has probability
$q_{c'}=\prod_{i=1}^n p_{c'_i}$, and hence
\[
 \Pr(h)=\frac{m!}{\prod_{c'}h_{c'}!}\prod_{c'}q_{c'}^{h_{c'}},
 \qquad
 \Rev=\sum_h\Pr(h)\sum_{i=1}^n s_i(h).
\]
In the buyer-symmetric program, the probability of
$t=(a_{k_1},\ldots,a_{k_m})$ is $q_t=\prod_{j=1}^m p_{k_j}$, so
\[
 \Pr(h)=\frac{n!}{\prod_t h_t!}\prod_t q_t^{h_t},
 \qquad
 \Rev=\sum_h\Pr(h)\sum_t h_t s_t(h).
\]
The factor $h_t$ appears in the second objective because $s_t(h)$ is
a payment per buyer; the first objective already sums over the
individually labelled buyers.
A common denominator of the input probabilities has $O(B)$ bits, so
its $nm$-th power and all
multinomial coefficients have polynomial bit length in $n,m,B$.
All constraint coefficients do as well. Exact rational LP therefore
computes the optimum and a rational optimal solution in the claimed
time. The three resulting bounds can be written as
\[
 \min\{K^{O(m(n+1))},\ (m+1)^{O(K^{n+1})},\
               (n+1)^{O(K^m)}\}\operatorname{poly}(n,m,K,B).
\]
To evaluate the mechanism, form the relevant histogram and look up its
outcomes. For each item, independently choose at most one recipient
according to the rational marginals and charge the tabulated expected
payments. This implements the model's expected-utility constraints.
Sampling an integer below a common probability denominator by rejection
uses expected polynomially many fair bits and operations; no dependence
on the denominator's numerical magnitude is needed.
\end{proof}

\end{document}